\tikzset{arrow data/.style 2 args={%
      decoration={%
         markings,
         mark=at position #1 with \arrow{#2}},
         postaction=decorate}
      }%
\newcounter{hypo}
\newcommand{\Z}{\mathbb{Z}}
\newcommand{\R}{\mathbb{R}}
\newcommand{\W}{{\mathcal W}}
\newcommand{\cO}{{\mathcal O}}
\newcommand{\ord}{{\mathcal O}}
\newcommand{\boundellipse}[3]% center, xdim, ydim
{(#1) ellipse (#2 and #3)
}
\newtheorem{theorem}{Theorem}[section]
\newtheorem{lemma}[theorem]{Lemma}
\newtheorem{proposition}[theorem]{Proposition}
\newtheorem{remark}[theorem]{Remark}
\theoremstyle{definition}
\numberwithin{equation}{section}
\newtheorem{thm}{Theorem}[section]
\def\centerarc[#1](#2)(#3:#4:#5);%
\subjclass[2010]{35P15; 35C20; 35S99; 47A75}
\keywords{Systems of Schr\"odinger operators, energy-level crossing, quantization condition, eigenvalue splitting.}
\title[A double well problem for a system of Schr\"odinger operators]{Eigenvalue splitting for a system of Schr\"odinger operators with an energy-level crossing}
 \author[M. Assal and S. Fujii\'e]{Marouane Assal and Setsuro Fujii\'e }
\address{Marouane Assal, Facultad de Matem\'aticas, Pontificia Universidad Cat\'olica de Chile, Vicuna Mackenna 4860, Chile}
\email{marouane.assal@mat.puc.cl}
\address{Setsuro Fujii\'e, Department of Mathematical Sciences, Ritsumeikan University, 111 Noji-Higashi, Kusatsu, 525-8577,  Japan}
\email{fujiie@fc.ritsumei.ac.jp}
\begin{document}

%\def\comment#1{\medskip \hskip -2em\begin{minipage}[l]{6in} \noindent\sf
%#1\end{minipage}\bigskip}

%%%%%%%%%%%%%%%%%%%%%%%%%%%%%%%%%%%%%%%%%%%%

\maketitle
%\addtocounter{footnote}{1}
%\footnotetext{{\tt\small  Department of Mathematical Sciences, Ritsumeikan University, 111 Noji-Higashi, Kusatsu, 525-8577,  Japan, 
%fujiie@fc.ritsumei.ac.jp}}  
%\footnotetext{{\tt\small  Facultad de Matem\'aticas, Pontificia Universidad Cat\'olica de Chile, Vicuna Mackenna 4860 Macul, Chile, 
%marouane.assal@mat.uc.cl} }  

\begin{abstract}
We study the asymptotic distribution of the eigenvalues of a one-dimensional two-by-two semiclassical system of coupled Schr\"odinger operators in the presence of two potential wells and with an energy-level crossing. We provide Bohr-Sommerfeld quantization condition for the eigenvalues of the system on any energy-interval above the crossing and give precise  asymptotics in the semiclassical limit $h\to 0^+$. In particular, in the symmetric case, the eigenvalue splitting occurs and we prove that the splitting is of polynomial order $h^{\frac32}$ and that the main term in the asymptotics is governed by the area of the intersection of the two  classically allowed domains.
\end{abstract}
%\vskip 4cm
%{\it Keywords:} Eigenvalues asymptotics, energy-level crossings, Born-Oppenheim approximation 
%\vskip 0.5cm

% \tableofcontents

%{\it Subject classifications:} 35P15; 35C20; 35S99; 47A75.
%\pagebreak

\section{Introduction}

%\def\svgwidth{15cm}
%\input{ FMWIIIfig1.pdf_tex}
%\centerline{Figure 1: The two potentials}

%\def\svgwidth{15cm}
%\input{ FMWIIIfig2.pdf_tex}
%\centerline{Figure 2: Phase-space picture}

In this paper, we study the asymptotic distribution, in the semiclassical limit $h\to 0^+$, of the eigenvalues of one dimensional $2\times 2$ systems of coupled Schr\"odinger operators of the form 
\begin{equation}\label{Sintro} P(h) := 
\begin{pmatrix}
P_1(h) & h W\\\\
h W^* & P_2(h)
\end{pmatrix} \;\;\; {\rm in }\;\; L^2(\mathbb R)\oplus L^2(\mathbb R),
\end{equation}
where $P_1(h),P_2(h)$ are semiclassical Schr\"odinger operators on the real line 
$$
P_j(h):= -h^2 \frac{d^2}{dx^2} + V_j(x) \;\;\; (j=1,2),
$$
and the interaction operator $W$ is a first order semiclassical differential operator, and $W^*$ its formal adjoint. Matrix Schr\"odinger operators arise as important models in molecular physics and quantum chemistry, for instance in the Born-Oppenheimer approximation which allows for a drastic reduction of problem size when dealing with molecular systems. In this context, systems of the form \eqref{Sintro} appear in a natural way after a Born-Oppenheimer reduction of diatomic molecular Hamiltonians, in which case, the semiclassical parameter $h$ represents the square root of the quotient between the electronic and nuclear masses (see e.g. \cite{KMSW}).

We are interested in the situation where the potentials $V_1$ and $V_2$ are smooth functions on $\mathbb R$, cross at some point, and each of them presents a simple well at the considered energy level (see Figure \ref{Fig1}). In particular, in the phase space, the characteristic set is the union of two closed curves $\Gamma_j(E)=\{(x,\xi)\in T^*\R;\xi^2+V_j(x)=E\}$, $j=1,2$, which intersect transversally at two points (see Figure \ref{Fig2}). For such a model, we study the asymptotic distribution of the eigenvalues of the operator $P(h)$ at energies above that of the crossing, focusing mainly on the asymptotic of the splitting in the case of symmetric potentials.

In the scalar case, the study of the eigenvalues of the one-dimensional Schr\"odinger operator in the presence of potential wells goes back to the beginning of quantum mechanics and many results has been rigorously proved since then. In the case of a simple potential well, it is well known that (see for instance \cite{Fe} in the case of analytic potentials and \cite{Ya, ILR} and references therein in the $C^{\infty}$ case) the eigenvalues are approximated, in the semiclassical limit $h\to 0^+$, by the roots of the so called Bohr-Sommerfeld quantization rule
$e^{2i\mathcal{A}(E)/h}+1=0$, where $2\mathcal{A}(E)$ is  the line integral $\int_{\Gamma(E)}\xi dx$ on the closed curve (energy surface) $\Gamma(E)=\{(x,\xi)\in T^*\R;\xi^2+V(x)=E\}$ 
in the phase space. When the potential has a double well (or more generally multiple wells) at an energy level, the eigenvalues near this level are approximated (with an exponentially small error) by the union of those created by each well (see \cite{HeSj1}). In particular, if the potential is even, that is the wells are symmetric, these two sets of eigenvalues coincide and a quantum tunneling effect between the two wells appears as eigenvalue splitting (see Landau-Lifschitz \cite{LF}). More precisely, in this case the eigenvalues of the operator come out in pairs with splitting exponentially small and the main term in the asymptotic as $h\to 0^+$ of the difference between these two eigenvalues is given by $he^{-S(E)/h}/\mathcal{A}'(E)$ where $S(E)$ is the Agmon distance between the two wells. This fact was rigorously proved by means of the exact WKB method by G\'erard and Grigis \cite{GG} assuming analyticity on the potential. The same problem has recently been studied by Hirota and Wittsten \cite{HW} for complex eigenvalues of a non-self-adjoint Dirac operator (Zakharov-Shabat operator).

A similar double well problem can be considered in 
the coupled Schr\"odinger case below the crossing energy. 
Pettersson \cite{Pe} studied the eigenvalue splitting  at the bottom of the potentials (with a weaker interaction), which is below a crossing. In this case, the splitting of eigenvalue is still exponentially small, given by a tunneling effect between the wells.

Above the crossing energy, where the two closed curves $\Gamma_1$ and $\Gamma_2$ intersect, the same problem can be considered, but
the quantum transition at crossing points plays the essential role instead of tunneling. 
Krivko and Kucherenko \cite{KK} proved, as an analogy to the scalar case with double well,  that the eigenvalues of the system are approximated by the union of those of each scalar operator. 
We give here
a Bohr-Sommerfeld type quantization rule with precise estimate  of the error (Theorem \ref{thQCS}). From this rule we recover the above result as well as the asymptotic behavior of the splitting (Theorem \ref{mainth}).
The splitting is of polynomial order $h^{3/2}$ and the principal term is governed by the area of the intersection of the domain bounded by these two curves (see Figure \ref{Fig2}). 

Before ending this section, we explain briefly the main ideas in our study. Our method consists of two parts. In the first part, we construct by iteration exact decaying solutions to the system $P(h) u = E u$ on $\R_+$ and  on $\R_-$, starting from solutions to the underlying scalar equations $P_j(h) u = E u$, $j=1,2$. This method of construction was established in \cite{FMW1,FMW2} for the study of resonances of a system of the form \eqref{Sintro}, where one of the two operators $P_1$ or $P_2$ is ``non-trapping" at the considered energy.
The quantization rule will then be given by a linear dependence condition on these solutions. This allows us to prove the existence of eigenvalues of $P(h)$ together with a rough estimate on their location (Theorem \ref{ThapproxRes}). In the second part, we improve the quantization rule that determines the eigenvalues of $P(h)$ in the considered energy interval and we obtain a better estimate on their location, using a purely microlocal approach. First, we start by the construction of a basis of microlocal solutions to the system $(P(h)-E)u=0$ on each component  of the characteristic set divided by the crossing and turning points (see Figure \ref{Fig 3}) using the standard WKB constructions. Then, the main step consists in deriving microlocal connection formulae that determine the propagation of microlocal data through the crossing and turning points. These connection problems were studied in \cite{FMW3} for the application to a resonance problem (see also \cite{Hi}). Here we give the transfer matrix at a crossing point in a general setting for semiclassical pseudodifferential systems (Theorem \ref{TMF}). We mention that the microlocal study of solutions at such a crossing point had also been done by Colin de Verdi\`ere \cite{Cdv1, Cdv2, Cdv3} using the Landau-Zener normal form
(see also Kucherenko \cite{Ku} for the multiphase method).

\section{Assumptions and results}

Consider the semiclassical $2\times2$ matrix Schr\"odinger operator
\begin{equation}\label{System0} P(h) := 
\begin{pmatrix}
 P_1(h) & hW\\\\
hW^* & P_2(h)
\end{pmatrix},
\end{equation}
on the Hilbert space $L^2(\mathbb R)\oplus L^2(\mathbb R)$, with 
$$
P_j(h):=-h^2 \frac{d^2}{dx^2} + V_j(x) \quad (j=1,2),
$$
where $V_1,V_2$ are real-valued potentials on the real line, $W$ is a first-order semiclassical differential operator, and $W^*$ its formal adjoint. Here $h>0$ is the semiclassical parameter, and we work in the semiclassical regime $h\rightarrow 0^+$.

We make the following assumptions on the potentials $V_1$, $V_2$ and the interaction operator $W$. Let $I_0:= (E_1,E_2)$ be an energy interval with $0<E_1<E_2\in \mathbb R$.

\vspace{0.3cm}

\noindent
{\bf Assumption (A1).} $V_1,V_2$ are smooth and real-valued
on $\R$, and satisfy the following conditions (see Figure \ref{Fig1})
\begin{itemize}
\item[i)] $V_1,V_2$ admit limits as $x\to \pm\infty$  greater than $E_2$. 
\item[ii)] For all $E\in I_0$, there exist four  numbers $\alpha_1(E)<\alpha_2(E)<0<\beta_1(E)<\beta_2(E)$ such that, for $j=1,2$, 
\begin{equation*}
\left\{\begin{array}{lll}
V_j>E  \;\; &{\rm on} \;\;  (-\infty, \alpha_j(E))\cup (\beta_j(E),+\infty) \\
V_j<E \;\; &{\rm on} \;\;(\alpha_j(E),\beta_j(E))
\end{array}\right. \quad {\rm and} \quad  V_j'(\alpha_j(E))<0,\;  V_j'(\beta_j(E))>0.
\end{equation*}
\item[iii)] The set $\{V_1=V_2 < E_2 \}$ is reduced to $\{0\}$, and one has 
$$
V_1(0)=V_2(0)=0, \;\;\; V'_1(0)>0, \; V_2'(0)<0.
$$
\end{itemize}

Thus $(\alpha_1(E),\beta_1(E))$ and $(\alpha_2(E),\beta_2(E))$ are two potential wells associated with $V_1$ and $V_2$ respectively, and energy-level crossing occurs at one point, the origin, below $E$.

\begin{center}
\begin{figure}[h]
	\begin{tikzpicture}[scale=0.5][hick,>=stealth',dot/.style = {
		draw,
		fill = white,
		circle,
		inner sep = 0pt,
		minimum size = 2pt}]
	\coordinate (O) at (0,0);
	\draw[very thick,->] (-9,0) -- (11,0) coordinate[label = {below:\tiny{$x$}}] (xmax);
	\draw[very thick,->] (0.5,-4) -- (0.5,5) coordinate[label = {right:\tiny{$V(x)$}}] (ymax);
	\draw[very thick] plot[smooth] coordinates {(-9,3.5)(-5.4,2.3) (-3.34,-1.32) (-1.78,-2.47)(1,0.5)(3.76,2.51)(10.9,3)};
	\draw[very thick] plot[smooth] coordinates {(-9,2.3)(-8,2.3)(-4.6,3.42)(4.4,-2.04)(9.3,2.8)(10.9,3.5)};
	\draw[thick,dashed] (-4.65,0) -- (-4.65,1.29);
	\node at (-4.9,-0.4) {\tiny {$\alpha_1(E)$}};
	\draw[thick,dashed] (-1.25,0) -- (-1.25,1.29);
	\node at (-1.45,-0.4) {\tiny {$\alpha_2(E)$}};
	\draw[thick,dashed] (1.73,0) -- (1.73,1.29);
	\node at (2.5,-0.4) {\tiny {$\beta_1(E)$}};
	\draw[thick,dashed] (8.1,0) -- (8.1,1.29);
	\node at (8.25,-0.4) {\tiny {$\beta_2(E)$}};
	\node at (4,3) {\tiny {$V_1$}};
	\node at (-4.7,4) {\tiny {$V_2$}};
	\node at (11.2,1.2) {\tiny {$E$}};
		%\node at (11.4,1.8) {\tiny {$E_2$}};
	%\node at (11.4,0.7) {\tiny {$E_1$}};

		%\draw[thick] (-9,1.8) -- (11,1.8);
	\draw[very thick] (-9,1.2) -- (11,1.2);
     % \draw[thick] (-9,0.7) -- (11,0.7);

	\node at (0.69,-0.4) {\tiny {$0$}};
		\end{tikzpicture}
		\caption{The potentials $V_1,V_2$} \label{Fig1}
	\end{figure}
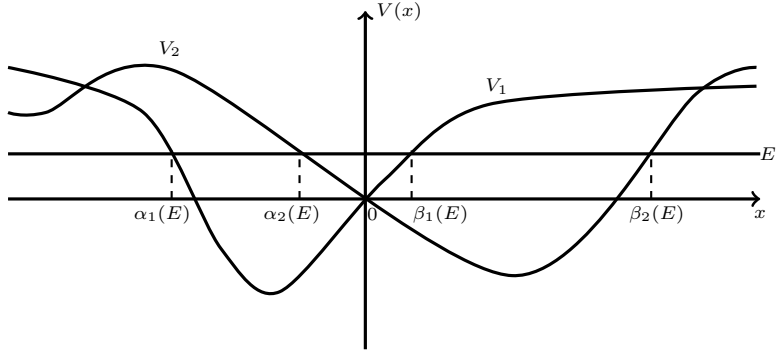
\end{center}

Let $p_j(x,\xi):= \xi^2+ V_j(x)$ be the semiclassical symbol of $P_j(h)$, $j=1,2$. In the phase space $\mathbb R^2=\mathbb R_x\times \mathbb R_{\xi}$, the characteristic sets 
$$
\Gamma_j(E):=\big\{(x,\xi)\in \mathbb R^{2}; \, p_j(x,\xi)=E\big\} \quad (j=1,2),
$$
are closed curves which intersect transversally at two (crossing) points $\rho_{\pm}(E):=(0,\pm\sqrt{E})$ (see Figure \ref{Fig2}).

\begin{center}
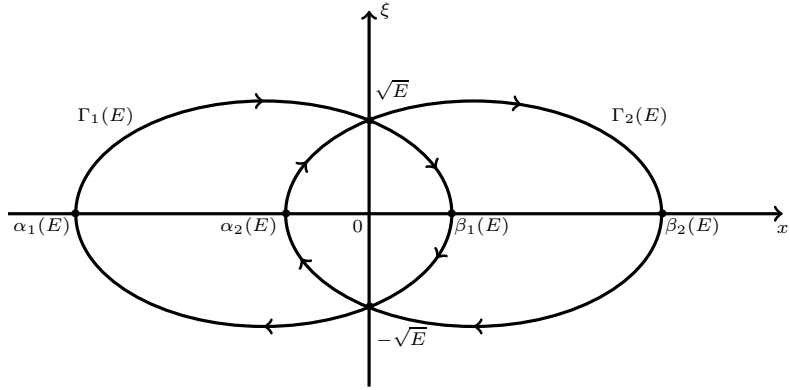
\begin{figure}[h]
	\begin{tikzpicture}[scale=1]
	\coordinate (O) at (0,0);
	\draw[very thick,->] (-4.3,0) -- (6,0) coordinate[label = {below:\tiny{$x$}}] (xmax);
	\draw[very thick,->] (0.5,-2.3) -- (0.5,2.7) coordinate[label = {right:\tiny{$\xi$}}] (ymax);
	%\draw (-0.5,0) ellipse (2.2cm and 1.6cm);
	%\draw (1.5,0) ellipse (2.2cm and 1.6cm);
	\begin{scope}[decoration={markings,
	mark=at position 0.05 with {\arrowreversed[very thick]{>}},
    mark=at position 0.25 with {\arrowreversed[very thick]{>}},
     mark=at position 0.75 with {\arrowreversed[very thick]{>}},
                   mark=at position 0.95 with {\arrowreversed[very thick]{>}},
    },
            ]
	\draw[very thick,postaction={decorate}] \boundellipse{-0.9,0}{2.5}{1.5} ;
	\end{scope}
	\begin{scope}[decoration={markings,
    mark=at position 0.44 with {\arrowreversed[very thick]{>}},
        mark=at position 0.20 with {\arrowreversed[very thick]{>}},
    mark=at position 0.55 with {\arrowreversed[very thick]{>}},
        mark=at position 0.75 with {\arrowreversed[very thick]{>}}},
              ]
	\draw[very thick,postaction={decorate}] \boundellipse{1.89,0}{2.5}{1.5}; ;
	\end{scope}
	%\draw[very thick,red] \boundellipse{1.89,0}{2.5}{1.5};
	%\node at (0.9,1.7) {\tiny{$\rho_+(E_0)$}};
	\node at (4.1,1.3) {\tiny{$\Gamma_2(E)$}};
	\node at (-3,1.3) {\tiny{$\Gamma_1(E)$}};
	\node at (0.8,1.65) {\tiny{$\sqrt{E}$}};
	\node at (0.93,-1.64) {\tiny{$-\sqrt{E}$}};
	\node at (0.51,1.24) {\tiny{$\bullet$}};
        	\node at (0.51,-1.24) {\tiny{$\bullet$}};
	 \node at (-3.4,0) {\tiny{$\bullet$}};
	  \node at (4.4,0) {\tiny{$\bullet$}};
	  \node at (1.6,0) {\tiny{$\bullet$}};
	  \node at (-0.6,0) {\tiny{$\bullet$}};
	  \node at (0.35,-0.17) {\tiny{$0$}};
         \node at (-3.85,-0.17) {\tiny{$\alpha_1(E)$}};
          \node at (-1.1,-0.17) {\tiny{$\alpha_2(E)$}};
         \node at (2,-0.17) {\tiny{$\beta_1(E)$}};
         \node at (4.8,-0.17) {\tiny{$\beta_2(E)$}};
	\end{tikzpicture}
	\caption{Characteristic sets} \label{Fig2}
	\end{figure}
\end{center}

\noindent
{\bf Assumption (A2).} The interaction operator $W$ is a first order differential operator of the form
$$
W =r_0(x)+ir_1(x)hD_x, \quad D_x:= -i \frac{d}{dx},
$$
where $r_0, r_1$ are smooth real-valued bounded functions on $\mathbb R$, and satisfy the ellipticity condition at the crossing points $\rho_{\pm}(E)$,
\begin{equation}\label{Ell}
(r_0(0),r_1(0)) \not=(0,0).
\end{equation}

Under the above assumptions the operator $P(h)$ is self-adjoint in $L^2(\mathbb R)\oplus L^2(\mathbb R)$, and its spectrum in $I_0$ is at most discrete. We will consider the eigenvalue problem 
\begin{equation}\label{System}
P(h) u = E u, \;\; E=E(h)\in I_0 .
\end{equation}

We define the action integrals
\begin{equation}\label{actions}
{\mathcal A}_j(E):=  \int_{\alpha_j(E)}^{\beta_j(E)}\sqrt{ E-V_j(t)} \, dt \quad (j=1,2).
\end{equation}
The functions ${\mathcal A}_1(E),{\mathcal A}_2(E)$ are analytic and ${\mathcal A}'_1(E),{\mathcal A}'_2(E)$ are positive in $I_0$.

\begin{thm}\label{thQCS}
Assume \textbf{(A1)} and \textbf{(A2)}. There exist symbols $m_0(E;h), m_1(E;h),m_2(E;h)$ analytic with respect to $E\in I_0$, with 
\begin{equation}\label{smestimates}
m_j(E;h) = \mathcal{O}(h) \quad  (j=0,1,2),
\end{equation}
uniformly for $E\in I_0$ and $h>0$ small enough, such that  $E=E(h) \in I_0$ is an eigenvalue of $P(h)$ if and only if
\begin{equation}\label{precise quantification condition}
\left(\cos \left(\frac{\mathcal{A}_1(E)}{h} \right)  + m_1(E;h) \right) \left(\cos \left(\frac{\mathcal{A}_2(E)}{h}\right)+ m_2(E;h)\right) = m_0(E;h).
\end{equation}

In particular, in the symmetric case $V_1(x)=V_2(-x)$, we have
\begin{equation}\label{coefm0}
m_{0}(E;h) =\mathcal{D}(E) h + \mathcal{O}(h^2\ln(1/h)), 
\end{equation}
uniformly for $E\in I_0$ and $h>0$ small enough, where 
\begin{equation}\label{FuncD}
\mathcal{D}(E):= \frac{\pi}{2V_1'(0)} \left\vert r_0(0) E^{-\frac{1}{4}}\sin\left( \frac{\mathcal{B}(E)}{h}+ \frac{\pi}{4}\right)  + r_1(0) E^{\frac{1}{4}} \cos\left( \frac{\mathcal{B}(E)}{h}  +\frac{\pi}{4}  \right) \right\vert^2,
\end{equation}
and  $\mathcal{B}(E)$ is the action defined by 
\begin{equation}\label{defB}
 \mathcal{B}(E):=  2 \int_{0}^{\beta_1(E)} \sqrt{E-V_1(t)} dt.
\end{equation}
\end{thm}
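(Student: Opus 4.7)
The overall strategy is to construct exact $L^2$-decaying solutions of the coupled system at $\pm\infty$ and express the eigenvalue condition as the vanishing of a $4\times 4$ Wronskian-type determinant built from these solutions. First, for each scalar operator $P_j(h)-E$, standard one-dimensional semiclassical WKB theory with Airy matching at the simple turning points $\alpha_j(E),\beta_j(E)$ produces solutions $u_j^{\pm}$ decaying as $x\to\pm\infty$, whose Wronskian satisfies
$$W(u_j^+,u_j^-)=c_j(E)\bigl(\cos(\mathcal{A}_j(E)/h)+\tilde{m}_j(E;h)\bigr),\quad \tilde{m}_j=\mathcal{O}(h),\ c_j(E)\neq 0,$$
which is the classical scalar Bohr-Sommerfeld rule. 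Following the iteration scheme developed in \cite{FMW1,FMW2}, I would build vector-valued exact solutions $U_j^{\pm}$ of $(P(h)-E)U=0$ decaying at $\pm\infty$ by starting from $u_j^{\pm}\mathbf{e}_j$ and applying a Neumann iteration based on inverting $P_k(h)-E$ ($k\neq j$) on the relevant half-line against the coupling $hW$. Since these half-line Green operators are $\mathcal{O}(h^{-1})$ while the coupling is $\mathcal{O}(h)$, the series converges and yields $U_j^{\pm}(x)=u_j^{\pm}(x)\,\mathbf{e}_j+hR_j^{\pm}(x;E,h)$ with $R_j^{\pm}$ analytic in $E\in I_0$ and uniformly controlled on the corresponding half-line.

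Since the spaces of $L^2$-decaying solutions at $+\infty$ and $-\infty$ are two-dimensional with bases $\{U_1^+,U_2^+\}$ and $\{U_1^-,U_2^-\}$, the value $E\in I_0$ is an eigenvalue of $P(h)$ if and only if these four solutions are linearly dependent, i.e. the $4\times 4$ determinant
$$\mathcal{W}(E;h) \defeq \det\bigl(U_1^+,U_2^+,U_1^-,U_2^-\bigr)(0),$$
whose rows are the components and their $x$-derivatives at $x=0$, vanishes. Expanding $\mathcal{W}$ using the decomposition $U_j^{\pm}=u_j^{\pm}\mathbf{e}_j+hR_j^{\pm}$, the leading contribution is block-diagonal and equals $W(u_1^+,u_1^-)\,W(u_2^+,u_2^-)$, while the cross-terms (at least $\mathcal{O}(h^2)$) mix the two blocks through the coupling. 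Dividing by $c_1(E)c_2(E)$ and regrouping brings the equation $\mathcal{W}(E;h)=0$ exactly into the form \eqref{precise quantification condition}, with $m_1,m_2$ absorbing the diagonal $\mathcal{O}(h)$ corrections to each scalar Wronskian and $m_0$ encoding the cross-coupling, all three symbols analytic in $E$ by construction.

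For the symmetric case $V_1(x)=V_2(-x)$, the involution $x\mapsto-x$ combined with the swap of components relates $U_1^\pm(x)$ to $U_2^\mp(-x)$, and this forces $m_0$ to factor as the square of a single transition amplitude attached coherently to the two crossings $\rho_{\pm}(E)=(0,\pm\sqrt{E})$. To compute that amplitude I would switch to microlocal analysis: WKB amplitudes on each branch of $\Gamma_j$ are connected through $\rho_\pm$ by the transfer matrix of Theorem \ref{TMF}, whose leading entries involve the principal symbol $r_0(0)+ir_1(0)\xi$ of $W$ evaluated at $\xi=\pm\sqrt{E}$. Propagation along $\Gamma_1$ from $\rho_+$ to $\rho_-$ through the turning point $\beta_1(E)$ produces the oscillatory phase $\mathcal{B}(E)/h+\pi/4$, the $\pi/4$ being the Maslov correction at $\beta_1$, and the Jacobian $1/V_1'(0)$ at the crossing supplies the prefactor $\pi/(2V_1'(0))$. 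Summing the contributions of $\rho_+$ and $\rho_-$ and squaring produces precisely the combination $|r_0(0)E^{-1/4}\sin+r_1(0)E^{1/4}\cos|^2$ of \eqref{FuncD}; the fact that each crossing transfer carries a WKB amplitude of size $h^{1/2}$ explains why $m_0=\mathcal{O}(h)$, and the remainder $\mathcal{O}(h^2\ln(1/h))$ reflects the standard logarithmic loss in symbol-composition near a Landau-Zener crossing.

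The principal obstacle is this last step: one must identify, inside the Wronskian determinant, the exact microlocal contribution of the two crossings together with the Maslov phase picked up at the turning point $\beta_1$, and verify that in the symmetric case these assemble into the specific squared structure $\mathcal{D}(E)h$ with the explicit constants $r_0(0),r_1(0)$ and the precise phase $\mathcal{B}(E)/h+\pi/4$. Technically this demands a compatible matching between the global WKB/Airy construction of the $u_j^{\pm}$ and the local transfer matrix at each $\rho_\pm$, with simultaneous control of both the $\mathcal{O}(h)$ terms contributing to $m_1,m_2$ and the $\mathcal{O}(h^2\ln(1/h))$ correction to $m_0$.
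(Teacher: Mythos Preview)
Your strategy correctly identifies the two main ingredients---exact $L^2$ solutions built by the iteration of \cite{FMW1} and microlocal transfer at the crossings---but you have misplaced where the sharp estimate $m_j=\mathcal{O}(h)$ comes from. The Wronskian expansion you describe does \emph{not} yield cross-terms of relative order $\mathcal{O}(h)$ or $\mathcal{O}(h^2)$: the estimate $|hK_{2,L}W^* v(0)|=\mathcal{O}(\sup|v|)$ from Proposition~\ref{PropK1K2.} shows that the off-diagonal correction in $U_j^\pm$ at $x=0$ is of the \emph{same} order as the leading component, not smaller by a factor of $h$ (in your notation $R_j^\pm$ is not uniformly bounded). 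Carrying out your expansion is exactly Proposition~\ref{approxW0}, which gives
\[
\mathcal{W}(E;h)=\tfrac{16}{\pi^2}h^{-4/3}\cos(\mathcal{A}_1/h)\cos(\mathcal{A}_2/h)+\mathcal{O}(h^{-7/6}),
\]
i.e.\ only a relative error of $h^{1/6}$. This suffices for the preliminary localization Theorem~\ref{ThapproxRes} (bijection modulo $\mathcal{O}(h^{7/6})$), but cannot produce \eqref{precise quantification condition} with $m_j=\mathcal{O}(h)$.

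The paper obtains the sharp quantization condition by a separate, entirely microlocal argument that does not pass through the $4\times4$ Wronskian. One follows a microlocal solution once around $\Gamma_1(E)\cup\Gamma_2(E)$, applying the $2\times2$ transfer matrices $\mathcal{M}_\pm$ at the crossings $\rho_\pm$ (Propositions~\ref{connection1}--\ref{connection2}) and the Maslov factors $\mathcal{T}_{j,S}$ at the four turning points (Proposition~\ref{maslov}); this yields a $2\times2$ monodromy matrix $\Lambda(E;h)$, and the eigenvalue condition is $\det(\Lambda-I_2)=0$. Because the diagonal entries of $\mathcal{M}_\pm$ are $1+\mathcal{O}(h)$ and the off-diagonal entries are $\mathcal{O}(h^{1/2})$ (with the $\ln(1/h)$ loss in the next order), this determinant factors directly into \eqref{precise quantification condition} with $m_j=\mathcal{O}(h)$; the explicit $\mathcal{D}(E)$ in the symmetric case is read off from the product $\lambda_{1,2}\lambda_{2,1}$ of the off-diagonal entries of $\Lambda$. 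Your microlocal discussion for the symmetric case is on the right track, but in the paper it is not a refinement grafted onto the Wronskian computation---it \emph{is} the proof of the general statement, for both the $\mathcal{O}(h)$ bound and the formula for $\mathcal{D}(E)$.
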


This result entails the following one about the asymptotic distribution and the eigenvalue splitting of $P(h)$. We fix $E_0\in I_0$
and $h$-independent arbitrarily large constant $C_0>0$ and set $I_{h}:=[E_0-C_0h,E_0+C_0h]$.  We define an $h$-dependent discrete set $\mathcal{U}_h\subset I_h$ by
\begin{equation}\label{SetU}
\mathcal{U}_h:=\mathcal{U}^{(1)}_h\cup \mathcal{U}^{(2)}_h,
\quad
\mathcal{U}^{(j)}_h=\left \{E\in I_h; \exists \, k\in \Z\text{ s.t. }\mathcal{A}_j(E) = (k+\frac12)\pi h\right\},\quad j=1,2.
\end{equation}
For each $j=1,2$, the elements of $\mathcal{U}^{(j)}_h$ are simple roots of the Bohr-Sommerfeld quantization rule $\mathcal{A}_j(E) = (k+\frac12)\pi h$ for the scalar Schr\"odinger operator $P_j(h)$ with a simple potential well $(\alpha_j(E),\beta_j(E))$ and it is well known (see e.g. \cite{Ya} and references therein) that they approximate the eigenvalues of $P_j(h)$ in $I_0$ in the semiclassical limit $h\to 0^+$.
There may of course be intersection between the two sets $\mathcal{U}^{(1)}_h$ and $\mathcal{U}^{(2)}_h$.
In the following theorem, the elements of the set $\mathcal{U}_h$ should be counted repeatedly according to the multiplicity. Notice that in the symmetric case $V_1(x)=V_2(-x)$, the two sets $\mathcal{U}^{(1)}_h$ and $\mathcal{U}^{(2)}_h$ coincide, i.e., $\mathcal{U}^{(1)}_h = \mathcal{U}^{(2)}_h$.

\begin{thm}\label{mainth}
Assume \textbf{(A1)} and \textbf{(A2)}. There exists a subset $J\subset (0,1]$ satisfying $0\in\overline{J}$ such that for $h\in J$, there exists a bijection 
$$
b_h: \sigma\,(P(h))\cap I_{ h} \rightarrow \mathcal{U}_{h}
$$
such that 
$$
b_h(E) - E = \mathcal{O}(h^{\frac32}), 
$$
uniformly as $h\to 0^+$ in $J$.

In particular, in the symmetric case $V_1(x)=V_2(-x)$, the eigenvalues of $P(h)$ in $I_{h}$ come out  in pairs with splitting of order $h^{\frac32}$. More precisely, for $E\in \mathcal{U}_{h}$, if $E_+(h)$ and $E_-(h)$ are a pair of eigenvalues of $P(h)$ which are at a distance $\mathcal{O}(h^{\frac32})$ from $E$, then we have
$$
\vert E_+(h) - E_-(h)\vert = 2 \frac{\sqrt{\mathcal{D}(E)}}{\mathcal{A}'(E)} h^{\frac32} + \mathcal{O}\big(h^{\frac74}\big),
$$
uniformly as $h\to 0^+$ in $J$, where $\mathcal{A}(E):= \mathcal{A}_1(E)=  \mathcal{A}_2(E)$.
\end{thm}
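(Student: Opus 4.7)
The plan is to derive Theorem \ref{mainth} from the quantization condition \eqref{precise quantification condition} of Theorem \ref{thQCS} by direct analysis of the scalar equation $G(E;h)=0$, where
$G(E;h):= F_1(E;h)F_2(E;h)-m_0(E;h)$ and $F_j(E;h):=\cos(\mathcal{A}_j(E)/h)+m_j(E;h)$. The eigenvalues of $P(h)$ in $I_h$ are exactly the zeros of $G$, while the elements of $\mathcal{U}_h^{(j)}$ are the simple zeros of $\cos(\mathcal{A}_j(E)/h)$. Since $m_0,m_1,m_2=\mathcal{O}(h)$, since their analyticity in $E$ yields $\partial_E m_j=\mathcal{O}(1)$ by Cauchy estimates, and since $\mathcal{A}_j'$ is bounded below on $I_h$, I would linearize $G$ around each point of $\mathcal{U}_h$.

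For the bijection, I first note that at distance $\geq Kh^{3/2}$ from $\mathcal{U}_h^{(j)}$ one has $|\cos(\mathcal{A}_j(E)/h)|\gtrsim h^{1/2}$; outside an $h^{3/2}$-neighborhood of $\mathcal{U}_h$ this forces $|F_1F_2|\gtrsim h>|m_0|$ for $K$ large, whence $G\neq 0$. Conversely, near $\tilde E\in\mathcal{U}_h^{(1)}$ whose distance to $\mathcal{U}_h^{(2)}$ exceeds a fixed $c>0$, $|F_2(\tilde E)|\geq c'>0$, and the implicit function theorem applied to $F_1(E)=\mp\mathcal{A}_1'(\tilde E)h^{-1}(E-\tilde E)+\mathcal{O}(h)$ produces a unique simple zero of $G$ within $\mathcal{O}(h^2)$ of $\tilde E$. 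The delicate case is when $\tilde E^{(1)}\in\mathcal{U}_h^{(1)}$ and $\tilde E^{(2)}\in\mathcal{U}_h^{(2)}$ are close: then both factors vanish linearly and $G$ reduces to the small quadratic $c_1c_2h^{-2}(E-\tilde E^{(1)})(E-\tilde E^{(2)})-m_0(E)+(\text{l.o.t.})$, whose two roots lie within $\mathcal{O}(h^{3/2})$ of $\mathcal{U}_h$ uniformly in the gap. The bijection $b_h$ then follows by counting zeros of $G$ on sub-intervals of $I_h$ with endpoints chosen outside these neighborhoods (via the argument principle or direct monotonicity after quadratic reduction); the exceptional set $(0,1]\setminus J$ consists of $h$ for which a zero of $G$ or of $\cos(\mathcal{A}_j/h)$ lands within $\mathcal{O}(h^{3/2})$ of $\partial I_h$, and has asymptotic density zero at the origin, yielding $0\in\overline J$.

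For the splitting in the symmetric case, the unitary $U(u,v)(x):=(v(-x),u(-x))$ on $L^2(\mathbb R)\oplus L^2(\mathbb R)$ commutes with $P(h)$ under the symmetry $V_1(x)=V_2(-x)$ (together with the matching symmetry of $W$), and an inspection of the construction in Theorem \ref{thQCS} then forces $\mathcal{A}_1\equiv\mathcal{A}_2=:\mathcal{A}$ and $m_1\equiv m_2$. Hence \eqref{precise quantification condition} becomes $(\cos(\mathcal{A}(E)/h)+m_1(E))^2=m_0(E)$, equivalent to the pair of scalar equations $\cos(\mathcal{A}(E)/h)+m_1(E)=\pm\sqrt{m_0(E)}=\pm\sqrt{\mathcal{D}(E)h}\,(1+\mathcal{O}(h\ln(1/h)))$. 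For each $\tilde E\in\mathcal{U}_h$, the implicit function theorem applied to each branch, using $\cos(\mathcal{A}(E)/h)=\mp\mathcal{A}'(\tilde E)h^{-1}(E-\tilde E)+\mathcal{O}((E-\tilde E)^2/h)$, produces a simple root $E_{\pm}(h)$; subtracting the two, the $m_1(\tilde E)$ contributions cancel and the leading term is $\mp 2\sqrt{\mathcal{D}(\tilde E)}/\mathcal{A}'(\tilde E)\cdot h^{3/2}$, with the $\mathcal{O}(h^{7/4})$ error absorbing the expansion of $\sqrt{m_0}$, the variation of $m_1$ across the $\mathcal{O}(h^{3/2})$-window, and the quadratic Taylor remainder of $\cos$.

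The hardest step is the uniform control of the non-symmetric bijection when two nearby elements of $\mathcal{U}_h^{(1)}$ and $\mathcal{U}_h^{(2)}$ approach each other arbitrarily closely: the $h^{3/2}$ rate is dictated precisely by this near-degenerate configuration, and injectivity of $b_h$ requires a continuous-in-gap matching of the two quadratic roots to the two underlying elements that degenerates correctly into the symmetric double-root limit. The symmetric case circumvents this difficulty entirely, since $\mathcal{U}_h^{(1)}=\mathcal{U}_h^{(2)}$ makes the double-root structure explicit from the outset and produces the sharp splitting asymptotics directly.
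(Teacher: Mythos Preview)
Your overall strategy is sound and close to the paper's, but there is one genuine gap and one methodological difference worth noting.

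\textbf{The gap.} In the symmetric case you invoke a unitary $U(u,v)(x)=(v(-x),u(-x))$ and claim it commutes with $P(h)$ ``together with the matching symmetry of $W$''. But the theorem assumes only $V_1(x)=V_2(-x)$; no parity of $r_0,r_1$ is imposed. A quick computation shows $UP(h)U^{-1}=P(h)$ forces conditions on $r_0,r_1$ that are not part of \textbf{(A2)}. Hence your deduction $m_1\equiv m_2$ is unjustified, and the factorisation $(\cos+m_1)^2=m_0$ is not available in general. The paper sidesteps this entirely: it rewrites \eqref{precise quantification condition} as
\[
\cos^2\!\left(\frac{\mathcal{A}(E)}{h}\right)=\widehat m(E;h),\qquad
\widehat m:=m_0-(m_1+m_2)\cos\!\left(\frac{\mathcal{A}}{h}\right)-m_1m_2,
\]
and then uses the already-established localisation $b_h(E)-E=\mathcal{O}(h^{3/2})$ to get $\cos(\mathcal{A}(E)/h)=\mathcal{O}(h^{1/2})$, whence $\widehat m=\mathcal{D}(E)h+\mathcal{O}(h^{3/2})$. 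Taking square roots and using $|\sqrt{a+b}-\sqrt{a}|\le\sqrt{|b|}$ yields the $\mathcal{O}(h^{7/4})$ remainder without any hypothesis linking $m_1$ and $m_2$. Your argument is easily repaired along these lines (or, equivalently, by solving the quadratic in $\cos$ and observing that the asymmetry $(m_1-m_2)^2=\mathcal{O}(h^2)$ is absorbed in the error), but as written the symmetry step does not go through.

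\textbf{The methodological difference.} For the bijection you work on the real line, splitting into the ``isolated'' case (implicit function theorem) and the ``near-coalescing'' case (quadratic reduction). The paper instead recycles the complex Rouch\'e argument from the proof of Theorem~\ref{ThapproxRes}, simply replacing the scale $h^{7/6}$ by $h^{3/2}$: on the circle $|z-e(h)|=C_1h^{3/2}$ one has $|\cos(\mathcal{A}_1/h)\cos(\mathcal{A}_2/h)|\ge C'h$ with $C'$ large for $C_1$ large (whether the disk contains one or two points of $\mathcal{U}_h$), which dominates $|\widehat m|=\mathcal{O}(h)$. This treats the two regimes uniformly and makes the multiplicity count immediate; your case analysis is more explicit but requires the extra care you correctly flag in the near-degenerate matching. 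Both routes are valid.
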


%\begin{remark}
%The function $C(E)$ appearing in \cite{FMW3} coincides with $\frac{\mathcal{D}(E)}{\mathcal{A}'(E)}$ here.
%\end{remark}

\begin{remark}
The eigenvalue splitting occurs in fact for a pair of potentials satisfying $V_1(x)=V_2(-x)$ only in the classically allowed region $[\alpha_1(E),\beta_2(E)]$ and our results hold true under this weaker assumption.  But in the analytic case this means the global symmetry.
\end{remark}

\begin{remark}
The microlocal ellipticity condition \eqref{Ell} on the interaction operator $W$ at the crossing points $\rho_{\pm}(E)$ 
is relevant for our result of the eigenvalue splitting. If it is not satisfied the principal term ${\mathcal D}(E)$ vanishes. 
But the results hold true without this condition. In fact we used this condition only for the reduction to the scalar microlocal normal form of \cite{Sj1, CdvPa} (see the proof of Theorem \ref{TMF}), but the reduction to the Landau-Zener microlocal normal form due to Colin de Verdi\`ere \cite{Cdv2,Cdv3} is also possible without this condition.
\end{remark}

\section{Solutions to the system and existence of eigenvalues}
\label{sect3_0}

In this section we prove the existence of eigenvalues of the operator $P(h)$ in the interval $I_{0}$, together with a preliminary but a fundamental result on their location. The main idea relies on the construction of four $L^2$-solutions to the system \eqref{System}, two on each half-line $\mathbb R_-=(-\infty, 0]$ and $\mathbb R_+= [0,+\infty)$, 
$$
w_{1,L},w_{2,L} \in L^2(\mathbb R_-)\oplus L^2(\mathbb R_-), \quad w_{1,R},w_{2,R} \in L^2(\mathbb R_+)\oplus L^2(\mathbb R_+).
$$
The quantization condition that determines the eigenvalues of $P(h)$ in $I_{0}$ will then be given by the linear dependence condition $\mathcal{W}(E;h) = 0$, where $\mathcal{W}(E;h)$ stands for the wronskian of these solutions. The construction of such solutions in the case of a system similar to \eqref{System0} was carried out in \cite{FMW1, FMW3} starting from solutions to the underlying scalar equations $(P_j(h)-E)u=0$, $j=1,2$, with suitable asymptotic behaviors at infinity (see also \cite{Ya}). In the following, we simply recall the main lines of this construction together with some estimates important for our next purposes and we refer to these works for the details. %For the reader's convenience, in the Appendix \ref{Scalarequ}, we also recall Yafaev's construction of solutions to the scalar equations. Notice that here unlike in \cite{FMW1}, we work with real $E$ only and we don't need to deform the interval $(0,+\infty)$ since we deal with an eigenvalues problem instead of resonances. 

\subsection{Solutions to the system} For $k\in \mathbb N$ and $U\subset \mathbb R$, we introduce the subspace $C^k_b(U)$ of $C^k(U)$ defined by 
$$
C^k_b(U) := \big\{ u\in C^k(U;\mathbb C) \,;\, \Vert u \Vert_{C^k_b(U)}:= \sum_{0\leq j\leq k} \sup_{x\in U} \vert u^{(j)}(x) \vert < +\infty \big\}.
$$

Let $E\in I_{0}$. For $j=1,2$, let $u_{j,L}^{\pm}$ be the solutions to the scalar equation 
$$
(P_j(h) - E) u = 0 \quad {\rm on} \;\; \mathbb R_-, 
$$
constructed in \cite[Appendix 2]{FMW1}. In particular, $u_{j,L}^-$ decays exponentially at $-\infty$, while $u_{j,L}^+$ grows exponentially, and their
Wronskian satisfies 
\begin{equation}\label{wronsL}
\W_{j,L}:=\W[u_{j,L}^-, u_{j,L}^+] = -\frac{2}{\pi } h^{-\frac23}(1+\ord(h)) \;\;\; \text{as}\;\; h \to 0^+.
\end{equation}
We introduce the linear operator
$$
K_{j,L}\, :\, C^0_b(\mathbb R_-) \to C^2_b(\mathbb R_-)
$$
defined by
\begin{equation*}
\label{eq1}
K_{j,L}[v](x) := \frac{1}{h^2 \W_{j,L}}\left( u_{j,L}^+(x) \int_{-\infty}^x  u_{j,L}^-(t)v(t)\,dt + u_{j,L}^-(x) \int_x^{0}\!\!\!\! u_{j,L}^+(t)v(t)\,dt\right), \;\;\; v\in C^0_b(\mathbb R_-) .
\end{equation*}
The operator $K_{j,L}$ is a fundamental solution of $(P_j(h)-E)u=0$ on $\mathbb R_-$, i.e., 
$$
(P_j(h)-E)K_{j,L}={\mathbf 1} \quad \text{on} \; \; \mathbb R_-, 
$$
and because of the form of the operator $W$, an integration by parts shows that we have,
$$
K_{j,L}W,\,\, K_{j,L}W^*\, :\, C^0_b(\mathbb R_-) \to C^0_b(\mathbb R_-) \quad (j=1,2).
$$

In a similar way, starting from solutions $u_{j,R}^{\pm}$ to the scalar equation $(P_j(h)-E) u =0$ on $\mathbb R_+$, the same construction holds and leads to a fundamental solution
$$
K_{j,R}\, :\, C^0_b(\mathbb R_+) \to C^2_b(\mathbb R_+), 
$$
of $P_j(h) - E$ on $\mathbb R_+$, i.e., 
$$
(P_j(h)-E)K_{j,R}={\mathbf 1} \quad \text{on} \; \; \mathbb R_+,
$$ 
and we have 
$$
K_{j,R}W,\,\, K_{j,R}W^*\, :\, C^0_b(\mathbb R_+) \to C^0_b(\mathbb R_+) \quad (j=1,2).
$$

Set 
$$
M_L:= h^2K_{1,L}WK_{2,L}W^*\in {\mathcal L}(C^0_b(\mathbb R_-)) , \quad N_L:= h^2K_{2,L}W^*K_{1,L}W \in {\mathcal L}(C^0_b(\mathbb R_-)),
$$
$$
M_R:= h^2K_{2,R}W^* K_{1,R}W \in {\mathcal L}(C^0_b(\mathbb R_+)),  \quad N_R:= h^2K_{1,R}WK_{2,R} W^*\in {\mathcal L}(C^0_b(\mathbb R_+)). 
$$

In view of the construction of solutions to the system \eqref{System}, the key result is the following proposition (see \cite[Proposition 3.1]{FMW3}).
\begin{proposition}
\label{PropK1K2.} As $h\rightarrow 0^+$,
one has,
\begin{equation}
\label{estnormM1L}
\Vert M_L\Vert_{{\mathcal L}(C^0_b(\mathbb R_-))} + \Vert N_L \Vert_{{\mathcal L}(C^0_b(\mathbb R_-))}= \ord (h^{\frac1{3}}),
\end{equation}
\begin{equation}
\label{estnormM1r}
\Vert M_R\Vert_{{\mathcal L}(C^0_b(\mathbb R_+))} + \Vert N_R \Vert_{{\mathcal L}(C^0_b(\mathbb R_+))}= \ord (h^{\frac1{3}}),
\end{equation}
\begin{equation}
\label{estnormK2L}
 |hK_{2,L}W^* v(0)| + |hK_{1,L}W v(0)| = \ord (\sup_{\mathbb R_-}|v|), \quad \forall v\in C^0_b(\mathbb R_-),
\end{equation}
\begin{equation}
\label{estnormK2L}
 |hK_{1,R}W v(0)| + |hK_{2,R}W^* v(0)| = \ord (\sup_{\mathbb R_+}|v|), \quad \forall v\in C^0_b(\mathbb R_+).
\end{equation}
\end{proposition}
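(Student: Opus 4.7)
I would derive the four bounds from a common set of kernel estimates on the scalar half-line fundamental solutions $K_{j,L}, K_{j,R}$, handling $W = r_0 + i r_1 h D_x$ and $W^*$ by integration by parts in the inner integrals. The starting ingredients are the precise asymptotics of $u_{j,L}^\pm, u_{j,R}^\pm$ from \cite[Appendix 2]{FMW1}: the decaying branches $u_{j,L}^-, u_{j,R}^+$ are bounded on their half-lines, integrable there, of amplitude $\ord(h^{-1/6})$ near the turning points and oscillating at scale $h$ in the allowed region; the growing branches are $\ord(1)$ on compact subsets of the allowed region. Combined with $\mathcal{W}_{j,L}, \mathcal{W}_{j,R}\sim -(2/\pi)h^{-2/3}$ from \eqref{wronsL}, this fixes the $h$-scaling of the kernels.

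To absorb the $i r_1 h D_x$ component of $W$, I would integrate by parts in $t$ in the inner integrals defining $K_{j,\cdot}Wv$, transferring $hD_t$ from $v$ onto the factor $u_{j,\cdot}^\pm$. This is legitimate because $u_{j,L}^-$ (resp.\ $u_{j,R}^+$) decays exponentially at $-\infty$ (resp.\ $+\infty$), so the boundary contributions at infinity vanish; the remaining integrands then depend linearly on $v$ (not on $v'$), with new amplitudes $r_0 u_{j,\cdot}^\pm + h r_1 (u_{j,\cdot}^\pm)'$ of the same $\ord$-order as $u_{j,\cdot}^\pm$ themselves in the allowed region. The boundary terms produced at the finite endpoint $0$ are themselves of order $h$ and will feed the right-hand sides of \eqref{estnormK2L}.

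For the pointwise bounds \eqref{estnormK2L} at $x=0$, the second integral in the formula for $K_{j,L}v(0)$ drops out, so only
\[
K_{1,L}Wv(0) = \frac{u_{1,L}^+(0)}{h^2\mathcal{W}_{1,L}}\int_{-\infty}^0 u_{1,L}^-(t)(Wv)(t)\, dt
\]
(and the analogue for $K_{2,L}W^*$) survives; after the integration by parts, the integrand is a bounded function of $t$ times $v(t)$ times $u_{1,L}^-$, and $u_{1,L}^-\in L^1(\mathbb R_-)$, so the whole expression is $\ord(h^{-2/3}\|v\|_\infty)$ and the extra $h$ on the left yields $\ord(\|v\|_\infty)$; the right-half-line case is symmetric. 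For the operator-norm estimates \eqref{estnormM1L}, \eqref{estnormM1r} I would view $M_L = h^2(K_{1,L}W)(K_{2,L}W^*)$ as a composition and bound each inner operator on $C_b^0(\mathbb R_-)$ with norm $\ord(h^{-1/3})$, so that the explicit $h^2$ in front produces the claimed $\ord(h^{1/3})$; $N_L, M_R, N_R$ are treated analogously.

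The main technical obstacle is the quantitative tracking of the Airy/WKB asymptotics of $u_{j,\cdot}^\pm$ uniformly in $E\in I_0$ up to and through the turning points $\alpha_j(E), \beta_j(E)$, since a naive pointwise kernel bound on $K_{j,L}W$ only yields $\ord(1)$ rather than the required $\ord(h^{-1/3})$. The needed gain comes from estimating in $L^1$ in the $t$-variable the products of a WKB amplitude against the second factor, using the Airy-matching formulas near each turning point, and similarly for the double kernel appearing in $M_L$. All of this is carried out explicitly in \cite[Proposition 3.1]{FMW3}, and my proof would follow that analysis once the scalar asymptotics of \cite{FMW1} are in place; the present statement is essentially a recollection of it for use in Section \ref{sect3_0}.
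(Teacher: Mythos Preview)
The paper does not give a proof of this proposition; it simply refers to \cite[Proposition 3.1]{FMW3}, exactly as you do at the end of your proposal, and your sketch (integration by parts to absorb $W$, then tracking the Airy/WKB asymptotics of $u_{j,\cdot}^\pm$ from \cite{FMW1}) is the correct outline of that argument. One small slip worth flagging: bounding each factor $K_{j,L}W$ by $\ord(h^{-1/3})$ in $\mathcal L(C_b^0)$ and multiplying by $h^2$ would yield $\ord(h^{4/3})$, not $\ord(h^{1/3})$; as you yourself observe in your last paragraph, the genuine $h^{1/3}$ gain comes from the double-kernel (oscillatory/Airy $L^1$) analysis rather than a naive composition of two $C_b^0$ operator bounds.
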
 
%\begin{remark}
%Actually, it can also be proved that $\Vert hK_{2,L}W^* \Vert_{{\mathcal L}(C^0_b(I_L))} = \ord (h^{-\frac16})$, but for our purpose the better estimate on $hK_{2,L}W^* v(0)$ is needed.
%\end{remark}

Now, using the fundamental solutions of the scalar equations $(P_j(h)-E)u=0$, $j=1,2$, constructed above, together with the above estimates, we can construct two solutions $w_{1,L},w_{2,L}$ to the system \eqref{System} on $\mathbb R_-$ which approach ${}^t(u_{1,L}^- , 0)$ and ${}^t(0, u_{2,L}^-)$, respectively,  and two other solutions $w_{1,R},w_{2,R}$ on $\mathbb R_+$ which approach ${}^t(u_{1,R}^- , 0)$ and ${}^t(0, u_{2,R}^-)$, respectively, as $h\to 0^+$.

On $\mathbb R_-$, setting $u={}^t(u_1, u_2)$, the system \eqref{System} can be re-written as
\begin{equation}\label{rew}
\left\{ \begin{array}{lll}
(P_1 - E) u_1 = - h W u_2 \\
(P_2-E)u_2 = -h W^* u_1.
\end{array}\right.
\end{equation}
Assume that $u_1\in C_b^0(\mathbb R_-)$ and set $u_2= -h K_{2,L}W^* u_1$. Then, \eqref{rew} reduces to the scalar equation of $u_1$
$$
(P_1 - E) u_1 = h^2 W K_{2,L} W^* u_1,
$$
and a solution will be given by any $u_1\in C_b^0(\mathbb R_-)$ such that 
$$
u_1 = u_{1,L}^- + M_L u_1.
$$
By Proposition \ref{PropK1K2.}, for $h>0$ small enough, the operator $({\mathbf 1}- M_L)^{-1}=\sum_{j\geq 0} M_L^j$ exists as an operator from $C^0_b(\mathbb R_-)$ to itself. Since by construction $u_{1,L}^-\in C^0_b(\mathbb R_-)$, we can define 
\begin{equation}\label{sol1}
w_{1,L} :=  \begin{pmatrix} \sum_{j\geq 0} M_L^j u_{1,L}^-  \\ - hK_{2,L}W^* \sum_{j\geq 0} M_L^j u_{1,L}^- \end{pmatrix} \in C_b^{0}(\mathbb R_-) \oplus C_b^{0}(\mathbb R_-),
\end{equation}
and we see that $w_{1,L}$ is solution to the system \eqref{System} on $\mathbb R_-$, with $w_{1,L} \rightarrow  \begin{pmatrix} u_{1,L}^-  \\ 0 \end{pmatrix}$ as $h\rightarrow 0^+$.

Similarly, $w_{2,L}$ defined by 
 \begin{equation}\label{sol2}
w_{2,L} :=  \left(\begin{array}{c} -h K_{1,L}W \sum_{j\geq 0} N_L^j u_{2,L}^-  \\ \sum_{j\geq 0} N_L^j u_{2,L}^- \end{array}\right)\in C_b^{0}(\mathbb R_-) \oplus C_b^{0}(\mathbb R_-)
\end{equation}
is solution to the system \eqref{System} on $\mathbb R_-$, with $w_{2,L} \rightarrow  \begin{pmatrix} 0\\ u_{2,L}^-   \end{pmatrix}$ as $h\rightarrow 0^+$.

On $\mathbb R_+$, a similar construction can be done and we see that the convergent series 
\begin{equation}\label{sol3}
w_{1,R}:=   \left(\begin{array}{c} \sum_{j\geq 0} N_R^j u_{1,R}^-  \\ -hK_{2,R}W^*\sum_{j\geq 0} N_R^j u_{1,R}^- \end{array}\right)\in C_b^{0}(\mathbb R_+) \oplus C_b^{0}(\mathbb R_+),
\end{equation}
\begin{equation}\label{sol4}
w_{2,R}:=   \left(\begin{array}{c} -hK_{1,R} W\sum_{j\geq 0} M_R^j u_{2,R}^-  \\ \sum_{j\geq 0} M_R^j u_{2,R}^- \end{array}\right)\in C_b^{0}(\mathbb R_+) \oplus C_b^{0}(\mathbb R_+),
\end{equation}
are both solutions to \eqref{System} on $\mathbb R_+$ and they tend respectively to $\begin{pmatrix} u_{1,R}^-\\0\end{pmatrix}$ and $\begin{pmatrix} 0 \\u_{2,R}^-\end{pmatrix}$ as $h \rightarrow 0^+$.

The constructed solutions are actually in $L^2$, more precisely we have (see \cite[Proposition 4.1]{FMW1}),

\begin{proposition} The solutions $w_{j,L}$ given by \eqref{sol1}-\eqref{sol2}, and $w_{j,R}$ given by \eqref{sol3}-\eqref{sol4}, satisfy
$$
w_{j,L} \in L^2(\mathbb R_-) \oplus L^2(\mathbb R_-), \quad w_{j,R} \in L^2(\mathbb R_+) \oplus L^2(\mathbb R_+) \quad (j=1,2).
$$
\end{proposition}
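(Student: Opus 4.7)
The solutions $w_{j,L}$ and $w_{j,R}$ are already bounded on their respective half-lines and continuous there, so only the behavior at infinity is at stake. My plan is to exploit \textbf{(A1)}(i): on a neighborhood of $-\infty$ (and, by the same token, on a neighborhood of $+\infty$) both potentials $V_1,V_2$ lie strictly above the energy $E\in I_0$, so the system $(P(h)-E)u=0$ is ``classically forbidden'' there, and I intend to deduce that any bounded solution is automatically exponentially small at infinity, hence in $L^2$.

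Concretely, I would fix $x_0 < \alpha_1(E)$ and rewrite $(P(h)-E)u=0$ on $(-\infty, x_0]$ as a $4\times 4$ first-order linear ODE in $U := {}^t(u_1, h u_1', u_2, h u_2')$ of the form $h U'(x) = A(x; h) U(x)$. The principal block-diagonal part of $A(x;h)$ has the four real eigenvalues $\pm\sqrt{V_j(x) - E}$, $j=1,2$, all uniformly bounded away from zero on $(-\infty, x_0]$, while the coupling block inherits a prefactor $h$ from the interaction $hW$ in \eqref{System0}. A semiclassical WKB diagonalization, or equivalently an application of Levinson's theorem after freezing coefficients at $-\infty$, then produces a fundamental system of four solutions splitting into two ``stable'' ones decaying like $\exp(-c/h)$ and two ``unstable'' ones growing at the same rate as $x\to-\infty$. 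Since each $w_{j,L}$ is bounded, its components along the unstable branches must vanish, so it lies in the stable subspace and both of its scalar components decay exponentially at $-\infty$. Combined with continuity on the compact interval $[x_0, 0]$ this yields $w_{j,L}\in L^2(\mathbb R_-)\oplus L^2(\mathbb R_-)$, and the mirror argument on $[x_1, +\infty)$ with $x_1 > \beta_2(E)$ handles $w_{j,R}$.

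The key technical point is to check that the coupling perturbation, which carries the prefactor $h$ and is bounded uniformly in $x$ thanks to the boundedness of $r_0,r_1,r_1'$ in \textbf{(A2)}, does not destroy the hyperbolic splitting of the principal part at $-\infty$; the uniform lower bound on $V_j - E$ at infinity and the factor $h$ in front of $W$ make this a standard perturbation exercise for $h$ small enough. An alternative route, which avoids the ODE dichotomy altogether and is probably closer to the proof in \cite[Proposition 4.1]{FMW1}, is to track decay through the Neumann series \eqref{sol1}--\eqref{sol4} directly: $u_{j,\cdot}^-$ has the well-known WKB exponential decay at infinity, and a direct inspection of the Green's kernels $K_{j,\cdot}$ shows that $M_L,N_L,M_R,N_R$ as well as $hK_{j,\cdot}W$ and $hK_{j,\cdot}W^*$ preserve that decay, with the same small operator norms already used in Proposition \ref{PropK1K2.}; every term of the series is then square-integrable, and so is its sum.
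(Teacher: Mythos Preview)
The paper does not actually prove this proposition: it simply states the result and refers the reader to \cite[Proposition~4.1]{FMW1}. So there is no in-paper argument to compare against, only the cited one.

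Your two sketches are both sound. The second route---tracking the exponential decay of $u_{j,\cdot}^-$ through the integral operators $K_{j,\cdot}$ and hence through each term of the Neumann series---is exactly the strategy used in \cite{FMW1}, as you correctly guessed; the point there is that the Green kernel built from $u_{j,\cdot}^\pm$ propagates decay from the input to the output because $u_{j,\cdot}^-$ itself decays at infinity, and the extra factor of $h$ in front of $W$ makes the resulting series geometrically convergent in a weighted $C^0_b$ norm. Your first route (recast as a $4\times4$ first-order system and invoke an exponential dichotomy in the forbidden region) is a legitimate alternative and arguably cleaner conceptually; just note that Levinson's theorem in its classical form requires integrability of the coefficient derivatives, which \textbf{(A1)}(i) does not guarantee, so you should phrase it instead as a uniform spectral-gap/dichotomy argument (the real parts of the eigenvalues of $A(x;h)$ stay bounded away from zero on $(-\infty,x_0]$, and the $h$-small off-diagonal coupling does not close that gap). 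Either way the conclusion follows.
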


\subsection{Quantization condition and existence of eigenvalues} 

Using the four solutions to the system \eqref{System} constructed in the previous paragraph, we can now deduce the quantization condition that determines the eigenvalues of $P(h)$ in $I_0$. In fact, by the general theory on systems of ordinary differential equations, the subspace of solutions of \eqref{System} that are in $L^2(\mathbb R_-)\oplus L^2(\mathbb R_-)$ is of dimension $2$, and the same is true for the subspace of solutions that are in $L^2(\mathbb R_+)\oplus L^2(\mathbb R_+)$. Therefore, $E=E(h)\in I_{0}$ is an eigenvalue of $P(h)$ if and only if the solutions $w_{1,L}, w_{2,L}, w_{1,R}, w_{2,R}$ are linearly dependent, which equivalent to say that
\begin{equation}
\label{condquant}
 {\mathcal W}(E;h)=0,
\end{equation}
where ${\mathcal W}(E;h):={\mathcal W}(w_{1,L}, w_{2,L}, w_{1,R}, w_{2,R})$ stands for the Wronskian of $w_{1,L}, w_{2,L}, w_{1,R}$ and $w_{2,R}$, that is 
$$
 {\mathcal W}(E;h) := \det \begin{pmatrix}
 w_{1,L} & w_{2,L} &  w_{1,R} & w_{2,R} \\
 w'_{1,L} & w'_{2,L}&  w'_{1,R} & w'_{2,R}
\end{pmatrix}.
$$
\begin{proposition}
\label{approxW0}
We have 
$$
{\mathcal W}(E;h)=\frac {16}{\pi^2}h^{-\frac43}\cos \left(\frac{{\mathcal A_1}(E)}{h} \right) \cos \left(\frac{{\mathcal A_2}(E)}{h}\right)+ {\mathcal O}(h^{-\frac76}),
$$
uniformly for $E\in I_{0}$ and $h>0$ small enough. 
\end{proposition}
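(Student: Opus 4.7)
The plan is to evaluate the $4\times 4$ determinant $\mathcal{W}(E;h)$ at the matching point $x=0$---the only point where all four vector-valued solutions $w_{j,L/R}$ are simultaneously defined---using the Neumann-series representations \eqref{sol1}--\eqref{sol4}. First I would split each solution as $w_{j,L/R} = w^{(0)}_{j,L/R} + r_{j,L/R}$, where $w^{(0)}_{1,L} = {}^t(u^-_{1,L},0)$, $w^{(0)}_{2,L} = {}^t(0, u^-_{2,L})$ (and analogously on $\mathbb{R}_+$) capture the zeroth-order scalar contribution, while each $r_{j,L/R}$ collects both the higher iterates of $M_L, N_L, M_R, N_R$ and the action of the off-diagonal operators $-hK_{\cdot,\cdot}W^{(\ast)}$. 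By multilinearity of the determinant,
\[
\mathcal{W}(E;h) = \mathcal{W}^{(0)}(E;h) + \mathcal{R}(E;h),
\]
where $\mathcal{W}^{(0)}$ is the Wronskian of the four leading columns and $\mathcal{R}$ collects every term in the expansion having at least one $r$-column.

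The main term is read off $\mathcal{W}^{(0)}$: one swap of rows together with one swap of columns brings the underlying matrix into block-diagonal form, giving
\[
\mathcal{W}^{(0)}(E;h) = W[u^-_{1,L}, u^-_{1,R}]\cdot W[u^-_{2,L}, u^-_{2,R}].
\]
To each scalar factor I would apply the standard WKB/Airy-matching asymptotic for the Wronskian of the two purely decaying solutions of a one-dimensional Schr\"odinger operator with a simple potential well (as obtained in \cite[Appendix 2]{FMW1} and \cite{Ya}):
\[
W[u^-_{j,L}, u^-_{j,R}] = -\frac{4}{\pi}\, h^{-2/3}\, \cos\!\bigl(\mathcal{A}_j(E)/h\bigr) + \text{lower order},
\]
uniformly in $E\in I_0$. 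Multiplying the two factors and using $(-4/\pi)^2 = 16/\pi^2$ then yields the announced main term $\frac{16}{\pi^2}h^{-4/3}\cos(\mathcal{A}_1/h)\cos(\mathcal{A}_2/h)$.

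It then remains to absorb the determinant remainder $\mathcal{R}(E;h)$, together with the subleading scalar corrections, into the error $\mathcal{O}(h^{-7/6})$. Here Proposition~\ref{PropK1K2.} is the crucial ingredient: the operator-norm bounds $\|M_L\|, \|N_L\|, \|M_R\|, \|N_R\| = \mathcal{O}(h^{1/3})$ ensure that the iterative parts of $r_{j,L/R}$ are of relative size $\mathcal{O}(h^{1/3})$, while the pointwise estimates at $x=0$ for the operators $hK_{j,\cdot}W^{(\ast)}$ control the off-diagonal pieces of $r_{j,L/R}$ at the matching point. Combining these with the Airy sup-norms $\|u^-_{j,L/R}\|_{C^0_b}=\mathcal{O}(h^{-1/6})$ of the scalar solutions in the classically allowed region (and the usual extra $h^{-1}$-loss for their derivatives), a careful inspection of the multilinear expansion yields $\mathcal{R}(E;h) = \mathcal{O}(h^{-7/6})$.

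The main technical obstacle lies precisely in this last step: the off-diagonal entries of the Born-type approximations---such as $-hK_{2,L}W^* u^-_{1,L}$---are only bounded, not small, compared to the diagonal ones, so the block-diagonal picture is valid solely for the purely scalar columns $w^{(0)}_{j,L/R}$ and not for the full $w_{j,L/R}$. Achieving the sharp exponent $-7/6$ (rather than a cruder $-4/3$) therefore requires careful column-by-column bookkeeping in the multilinear expansion, distinguishing which cross contributions are genuinely small (from the $h^{1/3}$-smallness of the iterative corrections combined with the Airy scales of the scalar solutions at $x=0$) and which are merely bounded, rather than a single crude column-wise size estimate.
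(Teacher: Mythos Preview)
Your approach coincides with the paper's: evaluate the Wronskian at $x=0$, split off the scalar leading columns, identify the main term as the block-diagonal product $\mathcal{W}(u_{1,L}^-,u_{1,R}^-)\,\mathcal{W}(u_{2,L}^-,u_{2,R}^-)$, and control the multilinear remainder. The paper streamlines the last step by quoting directly from \cite{FMW1} the pointwise column estimates
\[
w_{j,S}(0)=(\text{scalar})+\mathcal{O}(h^{1/3}),\qquad
w_{j,S}'(0)=(\text{scalar})+\mathcal{O}(h^{-2/3}),
\]
together with $u_{j,S}^-(0)=\mathcal{O}(h^{1/6})$ and $(u_{j,S}^-)'(0)=\mathcal{O}(h^{-5/6})$; since every remainder column is then uniformly a factor $h^{1/6}$ smaller than its scalar counterpart in each entry, the $\mathcal{O}(h^{-7/6})$ bound is immediate and no delicate bookkeeping is needed. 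Two small corrections to your outline: the relevant Airy scale at the matching point is $u_{j,S}^-(0)=\mathcal{O}(h^{+1/6})$ (positive power), not a sup-norm of order $h^{-1/6}$; and Proposition~\ref{PropK1K2.} by itself only gives $hK_{2,L}W^*v(0)=\mathcal{O}(\sup|v|)$, which is not sharp enough for the off-diagonal entries---the $\mathcal{O}(h^{1/3})$ bound genuinely requires the finer analysis in \cite{FMW1}, exactly the obstacle you flag in your final paragraph.
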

\begin{proof} 
Since ${\mathcal W}(E;h)$ is constant with respect to $x$, it is enough to compute it at a fixed point, say at $x=0$. We recall the following estimates from \cite{FMW1}. For $S=L,R$, and $h>0$ small enough, one has,
$$
\begin{aligned}
& w_{1,S}(0) = \begin{pmatrix} u_{1,S}^-(0)\\\\ 0\end{pmatrix}+\cO(h^{\frac13})\quad ; \quad w_{1,S}'(0)=  \begin{pmatrix} (u_{1,S}^-)'(0)\\\\ 0\end{pmatrix}+\cO(h^{-\frac23}); \\
& w_{2,S}(0) =  \begin{pmatrix} 0 \\ \\ u_{2,S}^-(0) \end{pmatrix}+\cO(h^{\frac13})\quad ; \quad w_{2,S}'(0)=  \begin{pmatrix} 0 \\ \\ (u_{2,S}^-)'(0)  \end{pmatrix} +\cO(h^{-\frac23}).
\end{aligned}
$$
Moreover, we have, uniformly as $h\rightarrow 0^+$,
$$
u_{j,S}^-(0)=\mathcal{O}(h^{\frac16}) \;\;\; ;\;\;\ (u_{j,S}^-)'(0)=\mathcal{O}(h^{-\frac56}).
$$
Using the above estimates, we immediately obtain, for $h>0$ small enough,
$$
{\mathcal W}(E;h)={\mathcal W}(u_{1,L}^-,u_{1,R}^-){\mathcal W}(u_{2,L}^-,u_{2,R}^-)+\cO(h^{-\frac76}),
$$
where ${\mathcal W}(u_{j,L}^-,u_{j,R}^-)$ stands for the Wronskian of $u_{j,L}^-,u_{j,R}^-$, $j=1,2$. On the other hand, we know from standard WKB constructions (see also \cite[Appendix]{FMW1}) that we have,
\begin{equation}
 {\mathcal W}(u_{j,L}^-,u_{j,R}^-)= -\frac4{\pi}h^{-\frac23}\cos \left(\frac{{\mathcal A}_j(E)}{h} \right)+ \cO(h^{\frac13}), \quad j=1,2.
\end{equation}
This ends the proof of the proposition.
\end{proof}

Now, we are able to establish the existence of eigenvalues, together with a preliminary (but fundamental) result on their location. Let $E_0\in I_0$ and set $I_{h}:=[E_0-C_0h, E_0+C_0h]$ with $C_0>0$ fixed arbitrarily large and $h$-independent. With the definition of $\mathcal{U}_h$ given by \eqref{SetU}, we have
\begin{thm} \label{ThapproxRes}
Assume \textbf{(A1)} and \textbf{(A2)}. There exists a subset $J\subset (0,1]$ satisfying $0\in \overline{J}$ such that for all $h\in J$, there exists a bijection 
$$
b_h: \sigma\,(P(h))\cap I_{h} \rightarrow \mathcal{U}_{h}
$$
such that 
\begin{equation}\label{estbij}
b_h(E) - E = \mathcal{O}(h^{\frac76}), 
\end{equation}
uniformly as $h\to 0^+$ in $J$.
\end{thm}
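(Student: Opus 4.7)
The plan is to recast the quantization condition \eqref{condquant} as a perturbed zero-counting problem for an explicit trigonometric function, and apply Rouch\'e's theorem disk by disk. Normalizing Proposition \ref{approxW0} by $\tfrac{\pi^2}{16}h^{4/3}$ gives
\begin{equation*}
F(E;h) := \tfrac{\pi^2}{16}h^{4/3}\mathcal{W}(E;h) = g(E;h) + R(E;h), \qquad g(E;h) := \cos\!\left(\tfrac{\mathcal{A}_1(E)}{h}\right) \cos\!\left(\tfrac{\mathcal{A}_2(E)}{h}\right),
\end{equation*}
with $R(E;h) = \mathcal{O}(h^{1/6})$ uniformly for $E\in I_h$. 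All the ingredients of Section \ref{sect3_0}---the WKB solutions $u_{j,S}^{\pm}$, the fundamental solutions $K_{j,S}$, and the Neumann series---depend analytically on $E$ in a complex neighborhood of $I_0$, so $F(\cdot;h)$ extends holomorphically there and the bound on $R$ persists on the extension. By \eqref{condquant}, the eigenvalues of $P(h)$ in $I_h$ are exactly the zeros of $F(\cdot;h)$ in $I_h$.

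Since $\mathcal{A}_j'>0$ on $I_0$, each factor $\cos(\mathcal{A}_j(E)/h)$ has simple zeros in $I_h$ spaced at distance $\pi h/\mathcal{A}_j'(E) + \mathcal{O}(h^2)$, and their union (counted with multiplicity $2$ on $\mathcal{U}_h^{(1)}\cap\mathcal{U}_h^{(2)}$) is exactly $\mathcal{U}_h$. Fix $\delta > 0$ and define
\begin{equation*}
J := \bigl\{h\in (0,1]: \operatorname{dist}(E,E') \geq \delta h \text{ for all } E\in\mathcal{U}_h^{(1)},\; E'\in\mathcal{U}_h^{(2)},\; E\neq E'\bigr\}.
\end{equation*}
The simultaneous relations $\mathcal{A}_j(E)=(k_j+\tfrac12)\pi h$, $j=1,2$, cut out a countable discrete family of $h$-values, and removing thin neighborhoods of these yields $J$ with $0\in\overline{J}$. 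For $h\in J$ and $E_*\in\mathcal{U}_h^{(1)}$, the separation together with $|\sin x|\geq 2|x|/\pi$ on $[-\tfrac\pi2,\tfrac\pi2]$ produces a uniform lower bound
\begin{equation*}
|\cos(\mathcal{A}_2(E_*)/h)| \geq c := \tfrac{2\delta}{\pi}\inf_{I_0}\mathcal{A}_2' > 0.
\end{equation*}

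Fix $E_*\in\mathcal{U}_h^{(1)}$ and let $D_* := \{E\in \C : |E-E_*|\leq Mh^{7/6}\}$ for a large constant $M$. Taylor expanding $g$ at $E_*$ yields
\begin{equation*}
g(E;h) = \pm \tfrac{\mathcal{A}_1'(E_*)}{h}\cos(\mathcal{A}_2(E_*)/h)(E-E_*) + \mathcal{O}\bigl(h^{-2}|E-E_*|^2\bigr),
\end{equation*}
so on $\partial D_*$ the linear term has modulus at least $cM h^{1/6}\inf_{I_0}\mathcal{A}_1'$ while the quadratic tail is $\mathcal{O}(M^2 h^{1/3})$. Choosing $M$ large enough makes $|g|>|R|$ on $\partial D_*$, and Rouch\'e's theorem gives exactly one zero of $F$ in $D_*$. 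The disks $D_*$ are pairwise disjoint for small $h\in J$, so this defines an injection $\iota: \mathcal{U}_h\to \sigma(P(h))\cap I_h$ with $|\iota(E_*)-E_*|\leq Mh^{7/6}$. A final Rouch\'e argument on a thin complex strip around $I_h\setminus\bigcup_{E_*}D_*$, where $|g|$ is bounded below by a positive constant, shows $F$ has no other zeros there, so $\iota$ is a bijection and $b_h := \iota^{-1}$ satisfies \eqref{estbij}.

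The main obstacle is the clean handling of $J$: one must avoid simultaneously the collisions and the near-collisions between $\mathcal{U}_h^{(1)}$ and $\mathcal{U}_h^{(2)}$ on a set of $h$ accumulating at $0$, while keeping $c>0$ uniform. Indeed, at an exact collision $g$ has a double zero with $g''\sim h^{-2}$, and the perturbation $R=\mathcal{O}(h^{1/6})$ would shift it by $\mathcal{O}(h^{13/12})$, exceeding the target $\mathcal{O}(h^{7/6})$; restricting to $J$ is precisely what makes the $h^{7/6}$ estimate possible and forces the theorem to be stated on a subset of $(0,1]$ rather than for all small $h$.
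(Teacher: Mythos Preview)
Your approach---normalize Proposition~\ref{approxW0}, isolate $g=\cos(\mathcal A_1/h)\cos(\mathcal A_2/h)$ with remainder $R=\mathcal O(h^{1/6})$, and run Rouch\'e disk by disk around the elements of $\mathcal U_h$---is exactly the paper's.

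The gap is in your choice of $J$. With the clause ``$E\neq E'$'' as written, exact collisions $E_*\in\mathcal U_h^{(1)}\cap\mathcal U_h^{(2)}$ are not excluded; but at such a point $\cos(\mathcal A_2(E_*)/h)=0$, so your claimed lower bound $|\cos(\mathcal A_2(E_*)/h)|\ge c>0$ is simply false. If instead you drop the clause and impose $\operatorname{dist}(\mathcal U_h^{(1)},\mathcal U_h^{(2)})\ge\delta h$, then in the symmetric case $V_1(x)=V_2(-x)$---precisely the case the paper is after---one has $\mathcal A_1\equiv\mathcal A_2$, hence $\mathcal U_h^{(1)}=\mathcal U_h^{(2)}$ for \emph{every} $h$, and your $J$ is empty, contradicting $0\in\overline J$. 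Either way the symmetric case is not covered.

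The paper handles this differently: it chooses $J$ so that any two elements of $\mathcal U_h$ are either within $C_2h^2$ of each other or farther than $C_1h^{7/6}$, and applies Rouch\'e on one circle of radius $\tfrac{C_1}{2}h^{7/6}$ around each $e(h)$, matching the (possibly two) clustered approximate eigenvalues inside with the same number of true eigenvalues. This at least yields a bijection in the symmetric situation, whereas your strict separation condition rules that case out entirely. Your closing paragraph is right that the double-zero case is delicate---on that circle $|g|\sim h^{1/3}$ rather than $h^{1/6}$---so the paper's written estimate there is also brief; but the structural point is that one must allow clusters of two, not forbid them.
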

\begin{proof} We set
$$
\mathcal{G}(E,h):=\mathcal{F}(E;h) -\frac{\pi^2h^{\frac43}}{16}{\mathcal W}(E;h), \;\;\; \mathcal{F}(E;h):= \cos \left(\frac{\mathcal{A}_1(E)}{h} \right) \cos \left(\frac{\mathcal{A}_2(E)}{h}\right).
$$
Then, the quantization condition \eqref{condquant} can be written as
\begin{equation}\label{condquantbis}
  \mathcal{F}(E;h) =\mathcal{G}(E,h).
\end{equation}
By Proposition \ref{approxW0}, we have $\mathcal{G}(E,h)=\cO(h^{\frac16})$, uniformly for $E\in I_{0}$ and $h>0$ small enough. To solve equation \eqref{condquantbis}, we first observe that in $I_{h}$, the roots of the equation $\mathcal{F}(E;h)=0$ are precisely the set of approximate eigenvalues $\mathcal{U}_h =\mathcal{U}_h^{(1)} \cup \mathcal{U}_h^{(2)}$ defined by \eqref{SetU}. For $h>0$ small enough and $e(h)\in \mathcal{U}_h$, there exist $j\in\{1,2\}$ and a unique $k\in \mathbb Z$ such that $\mathcal{A}_j(e(h))=(k+\frac12)h\pi$, and by Taylor's formula, we have 
\begin{equation}\label{Tapr}
e(h) = E_0 + h \bigg(\frac{\pi}{\mathcal{A}'_j(E_0)} (k+\frac12) - \frac{\mathcal{A}_j(E_0)}{\mathcal{A}'_j(E_0)} h^{-1} + \mathcal{O}(h)\bigg). 
\end{equation}
We recall that 
$$
{\mathcal A}_j'(E_0)=\frac12\int_{\alpha_j(E_0)}^{\beta_j(E_0)}(E_0-V_j(x))^{-1/2}dx\not=0\quad (j=1,2).
$$ 
In particular, the cardinal of $\mathcal{U}_h$ is uniformly bounded with respect to $h$, for $h>0$ small enough. 

Now, using \eqref{Tapr}, it is easy to check that, for all $C_1>0$, there exist $C_2>0$ and a subset $J\subset (0,1]$ with $0\in \overline{J}$ such that for all $h\in J$ and all $e(h)\in \mathcal{U}_h$, we have $\left\{z\in \mathbb C;\, \vert z-e(h)\vert\leq C_1 h^{\frac76} \right\}\cap \mathbb R \subset I_{h}$, and 
$$
\left\{z\in \mathbb C;\, C_2h^2\leq \vert z-e(h)\vert\leq C_1 h^{\frac76} \right\} \cap \mathcal{U}_h=\emptyset.
$$
Moreover, the set $\left\{z\in \mathbb C;\, \vert z-e(h)\vert\leq C_2 h^{2} \right\} \cap \mathcal{U}_h$ contains at most two elements including $e(h)$. Thus, we can apply Rouch\'e Theorem on $\left\{z\in \mathbb C;\, \vert z-e(h)\vert\leq \frac{C_1}{2} h^{\frac76} \right\}$, using that
$$
\vert \mathcal{F}(z;h) \vert \geq C' h^{\frac16} ,\;\; \forall z\in \left\{z\in \mathbb C;\, \vert z-e(h)\vert= \frac{C_1}{2} h^{\frac76} \right\},
$$
where $C'>0$ can be taken arbitrarily large by taking $C_1>0$ large enough, and conclude that there exists a bijection $b_h$ between the set of zeros of \eqref{condquantbis} in $I_{h}$ which are exactly the eigenvalues of $P(h)$ in $I_{h}$ and the set $\mathcal{U}_{h}$. Furthermore, this bijection satisfies \eqref{estbij} in the semiclassical limit $h\to 0^+$. This ends the proof of the Theorem.
\end{proof}

\section{Microlocal study of solutions at a crossing point for a general system}
Now, in order to improve the quantization rule that determines the eigenvalues of $P(h)$ in $I_0$ and to obtain a better estimate on their location, we use a microlocal approach that relies on the study of the behavior of the corresponding eigenfunctions microlocally near the characteristic set 
\begin{equation}\label{Characteristic set}
{\rm Char} (P(h)-E)=\Gamma_1(E)\cup \Gamma_2(E).
\end{equation}
The key point in this method consists on the computation of the connection formulae that link the microlocal data at the crossing and turning points of ${\rm Char} (P(h)-E)$. In this section, we study the microlocal solutions of a general $2\times 2$ system of pseudodifferential operators near a transversal crossing point of its charactersitic set and we compute the transfer matrix at this point.

\subsection{Semiclassical and microlocal terminologies}  Let us recall briefly some basic notions of semiclassical and microlocal analysis, referring to the books \cite{DiSj, Ma,Zw} for more details. We will use the notations of \cite{DiSj} for symbols and $h$-pseudodifferential operators. In particular, $S^0$ is the space of symbols 
$$
S^0:= \left\{q\in C^{\infty}(\mathbb R^2;\mathbb C); \, \big\vert\partial_x^{\alpha}\partial_{\xi}^{\beta} q(x,\xi;h) \big\vert = \mathcal{O}_{\alpha,\beta}(1), \;  \forall \alpha,\beta\in \mathbb N \right\},
$$
uniformly with respect to $h>0$ small enough. We recall that for a symbol $q\in S^0$, the corresponding $h$-pseudodifferential operator denoted ${\rm Op}_h^w(q)$ can be defined using the $h$-Weyl quantization by 
$$
{\rm Op}_h^w(q) u(x) := \frac{1}{2\pi h} \int_{\mathbb R^2} e^{i(x-y)\xi/h} q\left(\frac{x+y}{2},\xi;h\right) u(y) dy d\xi, \;\;\; u\in C_0^{\infty}(\mathbb R).
$$

Since our study is of microlocal nature and since we shall constantly use this vocabulary in the following, we briefly recall from \cite{Ma} (see also \cite{HeSj3}) the meaning of expression like $u=0$ microlocally in some open subset of the phase space. For $u\in \mathcal{S}'(\mathbb R)$, we denote $\mathcal{T}u$ the so-called semiclassical FBI-transform of $u$ given by 
$$
\mathcal{T}u(x,\xi;h) := 2^{-1/2} (\pi h)^{-3/4} \int_{\mathbb R} e^{i(x-y)\xi/h - (x-y)^2/2h} u(y) dy.
$$
The function $\mathcal{T}u$ is a $C^{\infty}$ function on $\mathbb R^2$. Let $\Omega\subset \mathbb R^2$ and $u=u(x;h)\in L^2(\mathbb R)$ with $\Vert u\Vert_{L^2}\leq 1$. We say that $u$ is microlocally $0$ in $\Omega$ and we write 
$$
u\sim 0 \;\; {\rm microlocally \; in} \; \Omega,
$$ 
if and only if $\Vert \mathcal{T} u\Vert_{L^2(\Omega)} = \mathcal{O}(h^{\infty})$. Here we use the standard asymptotic notation $f_h=\mathcal{O}(h^{\infty})$ which means that $f_h=\mathcal{O}(h^k)$ for all $k\in \mathbb N$ and $h>0$ small enough. The closed set of points where $u$ is not microlocally $0$ is called the frequency set of $u$ and denoted ${\rm FS}\,(u)$. This notion is analogous to the notion of microsupport in the analytic framework (see \cite{Ma,HeSj3}). For a $h$-pseudodifferential operator $A(h)$, we say that $u$ is a microlocal solution to the equation $A(h)u=0$ in $\Omega$ and we write 
$$
A(h) u \sim 0 \;\; {\rm microlocally \; in} \;\; \Omega,
$$ 
if $\Omega\cap {\rm FS}\,(A(h)u)=\emptyset$.

%In the following, we are interested on the study of the system \eqref{GSystem} microlocally near a fixed point in  the phase space $\mathbb R^2=\mathbb R_x\times \mathbb R_{\xi}$, say the origin. Thus, we only need to define the symbols $q_1,q_2$ and $r$ in a small neighborhood of $(0,0)$, and the (microlocal) action of the corresponding pseudodifferential operators on $\mathcal{D}'(\mathbb R)$ can be defined by multiplying these symbols by a cutoff function centered at $(0,0)$ and then taking the their usual $h$-Weyl quantization.

Consider a $2\times 2$ system 
\begin{equation}
\mathcal{Q}:= \begin{pmatrix} \mathcal{Q}_1 & h \mathcal{R}\\\\ h \mathcal{R}^* & \mathcal{Q}_2\end{pmatrix},
\end{equation}
where $\mathcal{Q}_1,\mathcal{Q}_2$ and $\mathcal{R}$ are pseudodifferential operators with symbols $q_1(x,\xi)$, $q_2(x,\xi) $ and $r(x,\xi)$ respectively. We make the following assumptions.

\begin{itemize}

\item[i)] The symbols $q_1,q_2\in S^0$ are real-valued, vanish at $\rho_0=(0,0)$, i.e. 
$$
q_1(\rho_0)=q_2(\rho_0) =0,
$$ 
and satisfy the following conditions 
\begin{equation*}
\partial_{\xi} q_1(\rho_0) \partial_{\xi} q_2(\rho_0) >0, 
\end{equation*}
\begin{equation}\label{Cond model}
\{q_1,q_2\}(\rho_0)>0,
\end{equation}
where $\{q_1,q_2\}(x,\xi):=(\partial_{\xi}q_1 \partial_x q_2 - \partial_x q_1 \partial_{\xi}q_2)(x,\xi)$ denotes the Poisson bracket of $q_1,q_2$. 

\vspace{0.2cm}

\item[ii)] The symbol $r\in S^0$ of the interaction operator $\mathcal{R}$ satisfies the ellipticity condition at $\rho_0$,
\begin{equation}\label{Ellipticity condition}
r(\rho_0)\neq 0.
\end{equation}
\end{itemize}

In particular, the condition \eqref{Cond model} means that the characteristic sets 
$$
\Gamma_{q_j}:= \{(x,\xi)\in \mathbb R^2; q_j(x,\xi)=0\}\;\;\; (j=1,2).
$$ 
intersect transversally at $\rho_0$. We plan to study the microlocal solutions to the system
\begin{equation}\label{GSystem}
\mathcal{Q}u=0,
\end{equation} 
 microlocally near the crossing point $\rho_0=(0,0)$.

\begin{comment}
\begin{center}
\begin{tikzpicture}
%\draw[help lines] (-2,-2) grid (2,2);
%\centerarc[thick,blue](1,1)(70:195:1cm);
%\centerarc[thick,blue](1,1)(-90:25:1cm);
%\draw[thin,blue] (1,1) circle (1cm);
\centerarc[thick,red](0,0)(200:285:2.5cm);
\centerarc[thick,red](-2,-0.5)(-102:0:2cm);
%\draw[thin,dashed,red] (0,0) circle (2cm);
\end{tikzpicture}
\end{center}
\end{comment}

%Actually, since we are interested on the study of solutions to the system \eqref{GSystem} microlocally near $\rho_0$, one only needs the symbols $q_1,q_2$ and $r$ to be defined only in a small neighborhood of this point. In this case, the microlocal action the corresponding pseudodifferential operators can be defined by multiplying these symbols by a cut-off function centered at $\rho_0$, and then taking its usual Weyl quantization. 

\subsection{Microlocal WKB solutions} Since the operator $\mathcal{Q}$ is microlocally elliptic outside its characteristic set $\Gamma_{q_1}\cup \Gamma_{q_2}$, it follows by standard arguments of microlocal analysis that the solutions of the system \eqref{GSystem} are microlocally supported in a neighborhood of this set. First, we study these microlocal solutions away from the crossing point $\rho_0$, that is, near the four curves 
$$
\Gamma_{q_1}^+:= \left\{(x,\xi)\in \Gamma_{q_1};\, q_2(x,\xi)>0 \right\}, \quad \Gamma_{q_1}^-:= \{(x,\xi)\in \Gamma_{q_1}; \, q_2(x,\xi)<0\},
$$
$$
\Gamma_{q_2}^+:= \{(x,\xi)\in \Gamma_{q_2}; \,q_1(x,\xi)>0\}, \quad \Gamma_{q_2}^-:= \{(x,\xi)\in \Gamma_{q_2};\,  q_1(x,\xi)<0\}.
$$

On $\Gamma_{q_1}^{\pm}$, the operator $\mathcal{Q}_1$ is of real principal type while $\mathcal{Q}_2$ is elliptic, and the same is true on $\Gamma_{q_2}^{\pm}$ by interchanging $\mathcal{Q}_1$ and $\mathcal{Q}_2$. Hence microlocally on each of the four curves $(\Gamma_{q_j}^{\pm})_{j=1,2}$, the system \eqref{GSystem} is reduced  to a scalar one-dimensional equation. Thus, the space of microlocal solutions on each of these curves is one-dimensional and a basis of WKB solutions is given by the following proposition.
\begin{proposition}\label{GenWKB}
On each of the curves $\Gamma_{q_j}^{\pm}$, the space of microlocal solutions to the system \eqref{GSystem} is one-dimensional and there exist $f_{q_j}^{\pm}$ such that for $j=1,2$,
$$
\mathcal{Q} f_{q_j}^{\pm} \sim 0 \,\,\,{\rm microlocally\, on}\,\, \Gamma_{q_j}^{\pm},
$$
and $f_{q_j}^{\pm}$ have the following WKB form
\begin{equation}\label{WKB form}
f_{q_j}^{\pm}(x;h) \sim \begin{pmatrix}
a_{q_j}(x;h) \\\\
b_{q_j}(x;h)
\end{pmatrix}
e^{i\phi_{q_j}(x)/h}
\,\,\,{\rm microlocally\, on}\,\,
\Gamma_{q_j}^{\pm},
\end{equation}
where the phase function $\phi_{q_j}$ is defined as the unique solution of the eikonal equation
\begin{equation}\label{eikonalphase}
\left\{\begin{array}{lll}
q_j(x,\phi_{q_j}'(x))=0, \\
\phi_{q_j}(0) =0,
\end{array}\right.
\end{equation}
and $a_{q_j}(x;h)$, $b_{q_j}(x;h)$ are symbols of the form
$$
a_{q_j}(x;h)\sim \sum_{k\geq0}h^k a_{q_j,k}(x), \quad b_{q_j}(x;h)\sim \sum_{k\geq0}h^k b_{q_j,k}(x),
$$ 
with leading terms given by
\begin{equation*}
\begin{aligned}
&a_{q_1,0}(x) =  \exp\left( -\int_0^x \frac{ \partial_x\partial_{\xi}q_1(t,\phi_{q_1}'(t)) + \phi_{q_1}''(t)\partial_{\xi}^2 q_1(t,\phi_{q_1}'(t))}{ 2\partial_{\xi} q_1(t,\phi_{q_1}'(t))       } dt\right), \\
&b_{q_1,0}(x) = 0, \\
&b_{q_1,1}(x) = - \frac{ \overline{r(x,\phi_{q_1}'(x))}  }{q_2(x,\phi_{q_1}'(x))} a_{q_1,0}(x),
\end{aligned}
\end{equation*}
and 
\begin{equation*}
\begin{aligned}
&b_{q_2,0}(x) =  \exp\left( -\int_0^x \frac{ \partial_x\partial_{\xi}q_2(t,\phi_{q_2}'(t)) + \phi_{q_2}''(t)\partial_{\xi}^2 q_2(t,\phi_{q_2}'(t))     }{ 2\partial_{\xi} q_2(t,\phi_{q_2}'(t))       } dt\right), \\
&a_{q_2,0}(x) = 0, \\
&a_{q_2,1}(x) = - \frac{ \overline{r(x,\phi_{q_2}'(x))}  }{q_1(x,\phi_{q_2}'(x))} b_{q_2,0}(x).
\end{aligned}
\end{equation*}
\end{proposition}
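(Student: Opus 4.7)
\textbf{Proof plan for Proposition \ref{GenWKB}.}

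The plan is to reduce the microlocal $2\times 2$ problem on each curve $\Gamma_{q_j}^{\pm}$ to a scalar one-dimensional equation by inverting the elliptic block. Concretely, fix $j=1$ (the case $j=2$ is identical up to interchanging indices). On $\Gamma_{q_1}^{\pm}$ the symbol $q_2$ does not vanish, so $\mathcal{Q}_2$ is microlocally elliptic in a neighborhood of these curves and admits a microlocal parametrix $\mathcal{Q}_2^{-1}$ there. Writing $u={}^t(a,b)$, the second line of $\mathcal{Q}u=0$ yields, microlocally near $\Gamma_{q_1}^{\pm}$, the relation $b \sim -h\,\mathcal{Q}_2^{-1}\mathcal{R}^{*}a$, and substitution into the first line gives the scalar microlocal equation
\begin{equation*}
\bigl(\mathcal{Q}_1 - h^{2}\mathcal{R}\,\mathcal{Q}_2^{-1}\mathcal{R}^{*}\bigr)a \sim 0.
\end{equation*}
Its Weyl principal symbol is $q_1$, which is of real principal type at $\rho_0$ (since $\partial_\xi q_1(\rho_0)\neq 0$). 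Standard microlocal theory for real principal type scalar operators then guarantees that the space of microlocal solutions on each connected component $\Gamma_{q_1}^{\pm}$ is one-dimensional.

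Next, I construct the WKB representative. Putting the ansatz $f_{q_1}^{\pm}={}^t(a_{q_1},b_{q_1})e^{i\phi_{q_1}/h}$ with $a_{q_1}\sim\sum_{k\ge0}h^{k}a_{q_1,k}$ and $b_{q_1}\sim\sum_{k\ge0}h^{k}b_{q_1,k}$ into the system, and using the classical Weyl-symbol expansion
\begin{equation*}
e^{-i\phi/h}\,\mathrm{Op}^{w}_{h}(q)\bigl(c\,e^{i\phi/h}\bigr) = q(x,\phi')c + \frac{h}{i}\left[\partial_\xi q(x,\phi')\,\partial_x c + \tfrac{1}{2}\bigl(\partial_x\partial_\xi q + \phi''\partial_\xi^{2} q\bigr)(x,\phi')\,c\right] + \mathcal{O}(h^{2}),
\end{equation*}
I read off the hierarchy order by order. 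At order $h^{0}$ in the first line, $q_1(x,\phi_{q_1}'(x))a_{q_1,0}=0$ forces the eikonal equation \eqref{eikonalphase}. At order $h^{0}$ in the second line, since $q_2(x,\phi_{q_1}'(x))\neq 0$ on $\Gamma_{q_1}^{\pm}$, we obtain $b_{q_1,0}=0$. At order $h^{1}$ in the first line, the coupling term $h\mathcal{R}b_{q_1}e^{i\phi/h}$ does not yet contribute (since $b_{q_1,0}=0$), and the transport equation reduces to
\begin{equation*}
\partial_\xi q_1(x,\phi_{q_1}')\,a_{q_1,0}' + \tfrac{1}{2}\bigl(\partial_x\partial_\xi q_1 + \phi_{q_1}''\partial_\xi^{2} q_1\bigr)(x,\phi_{q_1}')\,a_{q_1,0} = 0,
\end{equation*}
whose unique solution normalized by $a_{q_1,0}(0)=1$ is precisely the exponential stated in the proposition. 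Finally, at order $h^{1}$ in the second line, $\overline{r(x,\phi_{q_1}')}\,a_{q_1,0} + q_2(x,\phi_{q_1}')\,b_{q_1,1} = 0$, which (using $q_2(x,\phi_{q_1}')\neq 0$) yields the stated formula for $b_{q_1,1}$.

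The higher-order coefficients are determined inductively in the same fashion: at each step, $b_{q_1,k+1}$ is obtained algebraically by dividing by $q_2(x,\phi_{q_1}')$, and $a_{q_1,k+1}$ solves a linear first-order inhomogeneous ODE whose inhomogeneity is a polynomial expression in the previously constructed coefficients. A Borel summation then produces symbols $a_{q_1}, b_{q_1}\in S^{0}$ realizing the asymptotic expansion, and the residual error $\mathcal{Q}f_{q_1}^{\pm} = \mathcal{O}(h^{\infty})$ in the WKB symbol sense translates into $\mathcal{Q}f_{q_1}^{\pm}\sim 0$ microlocally on $\Gamma_{q_1}^{\pm}$. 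The case $j=2$ is obtained verbatim by swapping the roles of $q_1$ and $q_2$ (and correspondingly of $a$ and $b$).

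\textbf{Main obstacle.} The only delicate point is the microlocal inversion of $\mathcal{Q}_k$ and the justification that the resulting reduction to a scalar operator preserves the space of microlocal solutions together with its dimension; this requires one to argue that the map $a\mapsto {}^t(a,-h\mathcal{Q}_k^{-1}\mathcal{R}^{*}a)$ is a bijection between microlocal solution spaces near $\Gamma_{q_j}^{\pm}$. Once this is in place, the construction is a routine application of the WKB transport hierarchy, and the explicit leading coefficients follow by direct reading from the Weyl symbol expansion.
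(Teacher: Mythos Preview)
Your proposal is correct and follows essentially the same approach as the paper: the paper also invokes the ellipticity of $\mathcal{Q}_2$ on $\Gamma_{q_1}^{\pm}$ to reduce to a scalar real-principal-type equation (stated just before the proposition), and then in Appendix~A carries out exactly the WKB computation you outline, deriving the same Weyl-symbol expansion formula and reading off the eikonal equation, $b_{q_1,0}=0$, the transport equation for $a_{q_1,0}$, and the algebraic relation for $b_{q_1,1}$, with the higher orders handled inductively. Your write-up is slightly more explicit about the bijection between microlocal solution spaces and the Borel summation step, but there is no substantive difference in strategy.
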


\begin{remark}
In particular, we have the following asymptotic behaviors as $x\to 0$
$$
a_{q_1,0}(x) = 1+ \mathcal{O}(x) , \quad b_{q_1,1}(x) = - \frac{\partial_{\xi} q_1(\rho_0)\overline{r(\rho_0)}}{\{q_1,q_2\}(\rho_0)} \left( \frac{1+ \mathcal{O}(x)}{x}\right),
$$
$$
b_{q_2,0}(x) =  1+ \mathcal{O}(x) , \quad a_{q_2,1}(x) = \frac{\partial_{\xi} q_2(\rho_0)r(\rho_0)}{\{q_1,q_2\}(\rho_0)} \left(\frac{1+ \mathcal{O}(x)}{x}\right).
$$

\end{remark}

We refer to the Appendix \ref{MWSA} for the construction of these WKB solutions.

\subsection{Transfer matrix} Now, we state the main result of this section which provides the transfer formula of microlocal solutions near the crossing point $\rho_0$.

\begin{theorem}\label{TMF}
Let $u(x;h)\in L^2(\R)$ be a solution to the system $\mathcal{Q} u\sim 0$ microlocally in a small neighbourhood of $\rho_0=(0,0)$ such that 
$$
u \sim t_{j}^{\pm} f_{q_j}^{\pm} \,\,\,{\rm microlocally\, on}\,\,\Gamma_{q_j}^{\pm}, 
$$
for some scalar complex numbers $t_{j}^{\pm}=t_{j}^{\pm}(h)$, $j=1,2$. Then, there exist classical symbols of order $0$, $\mu=\mu(h)\sim \sum_{k\geq 0}h^k \mu_{k}$ and $\widehat{\mu}=\widehat{\mu}(h)\sim \sum_{k\geq 0}h^k \widehat{\mu}_{k}$ such that
\begin{equation}
\label{tau-}
\begin{pmatrix}
t_{1}^{+} \\\\
t_{2}^{+}
\end{pmatrix}
= \begin{pmatrix}
\kappa_{1,1}(h) & h^{\frac12-ih\mu }\kappa_{1,2}(h) \\\\
h^{\frac12-ih\widehat{\mu}} \kappa_{2,1}(h) & \kappa_{2,2}(h)
\end{pmatrix}
\begin{pmatrix}
t_{1}^{-} \\\\
t_{2}^{-}
\end{pmatrix},
\end{equation}
where $\kappa_{j,k}(h)\sim \sum_{n\geq 0} h^n \kappa_{j,k}^n$ are symbols with leading terms given by 
$$
\kappa_{1,1}^0 = \kappa_{2,2}^0=1,
$$
$$
\kappa_{1,2}^0 = - e^{i\frac{\pi}{4}} \left(r(x,\xi) \sqrt{\frac{2\pi \partial_{\xi} q_2(x,\xi)}{\partial_{\xi}q_1(x,\xi)\{q_1,q_2\}(x,\xi)}}\right)_{\vert (x,\xi)=\rho_0},
$$
$$
\kappa_{2,1}^0 = - e^{-i\frac{\pi}{4}} \left(\overline{r(x,\xi)} \sqrt{\frac{2\pi \partial_{\xi} q_1(x,\xi)}{\partial_{\xi}q_2(x,\xi)\{q_1,q_2\}(x,\xi)}}\right)_{\vert (x,\xi)=\rho_0}.
$$
\end{theorem}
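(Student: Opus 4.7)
The plan is to follow the scheme of \cite{FMW3} and reduce the system to a scalar microlocal normal form in the spirit of Sj\"ostrand--Colin de Verdi\`ere--Parisse \cite{Sj1, CdvPa}. The reduction proceeds in four steps: (i) normalize $q_1,q_2$ by a canonical transformation implemented as an FIO; (ii) use the ellipticity of $r$ at $\rho_0$ to decouple the system via a Grushin problem, obtaining a scalar equation of Landau--Zener type; (iii) apply the known scalar connection formula; (iv) push the amplitudes back along each of the four branches $\Gamma_{q_j}^\pm$ to recover the coefficients $\kappa_{j,k}^0$.

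For step (i), since $q_1,q_2$ vanish at $\rho_0$ with linearly independent Hamiltonian vector fields (by \eqref{Cond model}), I would construct a canonical transformation $\chi$ fixing $\rho_0$ such that $q_1\circ\chi^{-1} = e_1\,\xi$ and $q_2\circ\chi^{-1} = e_2\, x$ near $\rho_0$, with $e_1,e_2$ nonvanishing. Implementing $\chi$ as a microlocally unitary FIO $U$ and factoring out the elliptic prefactors reduces $\mathcal{Q}$ to a system whose principal part is
\[
\begin{pmatrix} hD_x & h\widetilde r \\ h\widetilde r^{\,*} & x \end{pmatrix},\qquad \widetilde r(\rho_0)\neq 0.
\]
For step (ii), the ellipticity of $\widetilde r$ permits a standard Grushin reduction, which decouples the system microlocally and yields a scalar equation $(hD_x\cdot x + h\mu(x,hD_x;h))v\sim 0$, where $\mu$ is a classical symbol of order $0$ whose leading value at $\rho_0$ is $|r(\rho_0)|^2/\{q_1,q_2\}(\rho_0)$ up to the Jacobians from step (i); the analogous reduction eliminating the other component supplies the parameter $\widehat{\mu}$.

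For step (iii), the scalar normal form $(hD_x\cdot x + h\mu)v\sim 0$ admits the explicit microlocal solutions $x_\pm^{-1/2+i\mu}$ on the rays $\{\pm x>0,\xi=0\}$, and their semiclassical Fourier transforms on $\{x=0,\pm\xi>0\}$. The $4\times 4$ transition matrix between these four rays is classical; applying Stirling's formula to its $\Gamma$-function entries produces off-diagonal entries of size $h^{1/2-ih\mu}$ carrying the Fresnel phase $e^{\pm i\pi/4}$, and diagonal entries $1+\mathcal{O}(h)$. For step (iv), conjugation by $U$ transports the model microlocal solutions back to the WKB solutions of Proposition \ref{GenWKB}. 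Comparing principal amplitudes on each curve $\Gamma_{q_j}^\pm$ then produces the square roots $\sqrt{2\pi\,\partial_\xi q_j/(\partial_\xi q_{3-j}\,\{q_1,q_2\})}$ at $\rho_0$ and the complex-conjugation pattern distinguishing $r(\rho_0)$ from $\overline{r(\rho_0)}$ in the off-diagonal coefficients $\kappa_{1,2}^0$ and $\kappa_{2,1}^0$.

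The main obstacle I expect is precisely this bookkeeping of amplitudes and phases in step (iv). One must keep a precise account of the half-density factors in the FIO $U$, the Maslov indices coming from the rotation by $\pi/2$ in phase space implicit in the normal form, the choice of branch in the scalar connection formula, and the orientation of each ray $\Gamma_{q_j}^\pm$, so that all prefactors combine into the exact expressions stated in the theorem. The diagonal leading values $\kappa_{1,1}^0=\kappa_{2,2}^0=1$ reflect the fact that at principal order no amplitude mixing occurs along a single characteristic curve; the $\mathcal{O}(h)$ corrections are absorbed into the full classical symbols $\kappa_{j,j}(h)$.
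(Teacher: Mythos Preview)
Your approach is essentially the paper's, with one organizational difference worth noting. The paper reverses your steps (i) and (ii): it first eliminates one component directly using the ellipticity of $\mathcal{R}$, obtaining a single scalar operator $\mathcal{L}=\mathcal{R}\mathcal{Q}_2\mathcal{R}^{-1}\mathcal{Q}_1-h^2\mathcal{R}\mathcal{R}^*$ with principal symbol $q_1q_2$, and only then applies the Sj\"ostrand--Colin de Verdi\`ere--Parisse normal form to $\mathcal{L}$ (the crossing point being a saddle of $q_1q_2$). This has the mild advantage that the FIO $U$ acts on a scalar equation, so the half-density and Maslov bookkeeping you flag in step (iv) reduces to a single explicit stationary-phase computation of $U^{-1}$ applied to the model solutions $H(\pm y)\,|y|^{i\mu h}$, rather than being tracked through a $2\times 2$ conjugation. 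The paper then repeats the reduction with the roles of the components exchanged (eliminating $u_1$ instead of $u_2$) to produce $\widehat{\mu}$ and a second pair of auxiliary solutions $v^\pm$; the four solutions $u^\pm,v^\pm$, each vanishing microlocally on exactly one of the four rays, make reading off the transfer-matrix entries a matter of matching constants (take $t_1^-=1$, $t_2^-=0$ and identify $u$ with the appropriate multiple of $v^+$, etc.) rather than inverting a full $4\times 4$ scalar connection formula. Your route via a preliminary symplectic straightening of $q_1,q_2$ followed by a Grushin reduction on the model system is valid and leads to the same scalar model; the trade-off is that the amplitude accounting is somewhat more delicate because the FIO acts on the system before decoupling.
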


The rest of this section is devoted to the proof of this result which relies on several steps. The first step consists to reduce the system \eqref{GSystem} to a scalar equation using the ellipticity condition \eqref{Ellipticity condition} and then to solve this equation by means of a normal form in the spirit of \cite{HeSj3, Sj1,CdvPa}. 

To simplify the notations, we set 
\begin{equation}\label{Notationsabcd}
\alpha:= \partial_{x} q_1(\rho_0), \; \beta:=  \partial_{\xi} q_1(\rho_0), \; \gamma:= \partial_{x} q_2(\rho_0), \;  \delta:=\partial_{\xi} q_2(\rho_0), \; D:=\{q_1,q_2\}(\rho_0).
\end{equation}

\subsubsection{Reduction to a scalar equation and normal form}\label{RSE}

Setting $u={}^t(u_1,u_2)$ and using the ellipticity of $\mathcal{R}$ at $\rho_0$ according to assumption \eqref{Ellipticity condition}, the system \eqref{GSystem} is reduced microlocally near the origin to a scalar equation of $u_1$. More precisely, there exists a small neighborhood $\mathcal{V}\subset \mathbb R^2$ of $\rho_0$ such that microlocally in $\mathcal{V}$, the system $\mathcal{Q} u \sim 0$ is reduced to 
\begin{equation}\label{FR1}
\left\{\begin{array}{lll}
\mathcal{L}u_1 \sim 0,\\
u_2 \sim - h^{-1} \mathcal{R}^{-1} \mathcal{Q}_1 u_1,
\end{array}\right.
\end{equation}
where $\mathcal{R}^{-1}$ denotes a parametrix of $\mathcal{R}$ in $\mathcal{V}$ and $\mathcal{L}$ is the $h$-pseudodifferential operator defined by 
$$
\mathcal{L} := \mathcal{R} \mathcal{Q}_2 \mathcal{R}^{-1} \mathcal{Q}_1 - h^2 \mathcal{R} \mathcal{R}^*.
$$
In particular, the principal symbol of $\mathcal{L}$ is given by $\ell_0(x,\xi):= q_1(x,\xi) q_2(x,\xi)$. The crossing point $\rho_0$ is a saddle point for $\ell_0$, and a normal form for $\mathcal{L}$, reducing its study microlocally near $\rho_0$ to that of the operator $\frac{1}{2} (yh D_y + h D_y\cdot y)$ microlocally near the origin is well-known (see \cite{Sj1,CdvPa} in the $C^{\infty}$ case and \cite{HeSj3} in the analytic case). More precisely, we have

\begin{lemma}
There exist a small neighborhood $\Omega\subset \mathbb R^2$ of $(0,0)$, a Fourier integral operator $U$, not necessary unitary, with associated canonical transformation $\kappa$ sending $\mathcal{V}$ to $\Omega$ and $\kappa(\rho_0)=(0,0)$, and a classical symbol $F(t;h)\sim \sum_{k\geq 0} h^k F_k(t)\in C^{\infty}$ defined near $t=0$, such that 
\begin{equation}\label{FR}
U F(\mathcal{L};h) U^{-1} \sim \mathcal{G}:= \frac{1}{2} (yh D_y + h D_y\cdot y)  \;\;\; {\rm microlocally\, in} \;\; \Omega.
\end{equation}
Moreover, we can choose the reduction in such a way that we have 
\begin{equation}\label{FSF}
F(0;h) = -\frac{i}{2} h + \mu h^2,
\end{equation}
where $\mu=\mu(h)\sim \sum_{k\geq 0} h^k \mu_{k}$ is a classical symbol of order $0$.
\end{lemma}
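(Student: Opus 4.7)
\medskip
\noindent\textbf{Proof plan.}
This is the classical microlocal hyperbolic normal form at a Morse saddle of the principal symbol, due to Sj\"ostrand \cite{Sj1} and Colin de Verdi\`ere--Parisse \cite{CdvPa}; what is specific to our setting is only the identification of the leading coefficients of $F(0;h)$. I would reduce to their result in two stages, a principal--symbol stage and an iterative lower--order stage.

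\medskip
\noindent\textbf{Principal reduction.}
The principal symbol of $\mathcal L$ is $\ell_0=q_1q_2$, which vanishes at $\rho_0$ together with its differential, while its Hessian is a product of two independent linear forms (since $dq_1(\rho_0)$ and $dq_2(\rho_0)$ are independent by $\{q_1,q_2\}(\rho_0)=D>0$). Hence $\rho_0$ is a non--degenerate hyperbolic Morse critical point of $\ell_0$. By the symplectic hyperbolic Morse lemma (a Moser--type argument) one produces a local symplectomorphism $\kappa\colon (\mathcal V,\rho_0)\to(\Omega,0)$ such that $\ell_0\circ\kappa^{-1}(y,\eta)=y\eta$ near the origin. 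Quantizing $\kappa$ by a (non--necessarily unitary) Fourier integral operator $U$ transforms $\mathcal L$ into $U\mathcal L U^{-1}$, a pseudodifferential operator whose Weyl principal symbol coincides with the Weyl symbol $y\eta$ of $\mathcal G=\tfrac12(yhD_y+hD_y\,y)$.

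\medskip
\noindent\textbf{Lower--order reduction.}
Write the Weyl symbol of $U\mathcal L U^{-1}$ as $y\eta+\sum_{k\ge 1}h^k r_k(y,\eta)$. To absorb these terms into a scalar function $F$ of $\mathcal L$, I would iterate the classical cohomological procedure: at each order $h^k$, seek a symbol $b_k(y,\eta)$ and a one--variable germ $F_k(t)$ such that
\[
H_{y\eta}(b_k)=r_k-F_k(y\eta),\qquad H_{y\eta}:=y\partial_y-\eta\partial_\eta,
\]
and then conjugate by $\exp(ih^k\,\mathrm{Op}_h^w(b_k)/2)$ to remove the off--kernel part, while $F_k(y\eta)$ is absorbed into the symbol of $F$ via $F(\mathcal L;h)=\mathcal L+\sum_k h^k F_k(\mathcal L)+\cdots$. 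The key algebraic fact is that on smooth germs at $(0,0)$ the kernel of $H_{y\eta}$ consists of functions of $y\eta$, and $r_k-F_k(y\eta)$ lies in its range provided $F_k(t)$ is taken to be (the smooth extension at $t=0$ of) the average of $r_k$ on the hyperbola $\{y\eta=t\}$. Iterating and Borel--summing gives the classical symbol $F(t;h)\sim\sum h^k F_k(t)$ realizing $UF(\mathcal L;h)U^{-1}\sim\mathcal G$ microlocally in $\Omega$.

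\medskip
\noindent\textbf{Value at the origin and the hard step.}
For the second assertion I would use the explicit relation $\mathcal G=yhD_y-\tfrac{i}{2}h$, which shows that the Weyl subprincipal symbol of $\mathcal G$ at $(0,0)$ vanishes while its standard (left) value is $-ih/2$. Since the construction of $F$ in the previous step can be normalized so that $F_0(t)=t$ and $F_1(0)$ is pinned by this constant $-i/2$, one obtains $F(0;h)=-\tfrac{i}{2}h+\mu h^2$ with $\mu(h)=F_2(0)+hF_3(0)+\cdots$ a free classical symbol of order $0$ encoding the remaining invariant subprincipal data of $\mathcal L$ at $\rho_0$. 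The main obstacle in the whole argument is the careful solvability of the cohomological equation in the $C^\infty$ category at a fixed point of the hyperbolic flow, together with the Borel summation making $F$ a genuine classical symbol; both are standard (as treated in \cite{Sj1,CdvPa}), so the proof essentially amounts to importing their normal form and tracking the explicit constant arising from the Weyl convention on $\mathcal G$.
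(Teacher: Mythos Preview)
Your overall plan---invoke the hyperbolic normal form of \cite{Sj1,CdvPa} and then identify $F(0;h)$ from the subprincipal data---is exactly the paper's approach. The gap is in the last step.

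With your normalization $F_0(t)=t$ (coming from a $\kappa$ with $\ell_0\circ\kappa^{-1}=y\eta$ exactly), the subprincipal matching of $UF(\mathcal L;h)U^{-1}\sim\mathcal G$ in Weyl calculus reads
\[
F_0'(0)\,\ell_1(\rho_0)+F_1(0)=0,
\]
where $\ell_1$ is the Weyl subprincipal symbol of $\mathcal L=\mathcal R\mathcal Q_2\mathcal R^{-1}\mathcal Q_1-h^2\mathcal R\mathcal R^*$. A direct Moyal-product computation at $\rho_0$ (where $q_1=q_2=0$, so the conjugation by $\mathcal R$ contributes nothing at this order) gives $\ell_1(\rho_0)=\tfrac{1}{2i}\{q_2,q_1\}(\rho_0)=\tfrac{iD}{2}$ with $D=\{q_1,q_2\}(\rho_0)$. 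Since $F_0'(0)=1$ in your setup, you obtain $F_1(0)=-iD/2$, \emph{not} $-i/2$. The identity $\mathcal G=yhD_y-\tfrac{i}{2}h$ you invoke is merely the standard-vs-Weyl discrepancy for the symbol $y\eta$; it does not feed into $F_1(0)$, because the Weyl subprincipal symbol of $\mathcal G$ at the origin is $0$, and that zero is precisely the right-hand side of the matching equation above.

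The paper's remedy is to choose $\kappa$ with linear part $\kappa_0(x,\xi)=D^{-1/2}(\gamma x+\delta\xi,\,\alpha x+\beta\xi)$, under which $\ell_0\circ\kappa^{-1}\approx D\cdot y\eta$ near the origin and hence $F_0'(0)=1/D$. Then $F_1(0)=-F_0'(0)\,\ell_1(\rho_0)=-\tfrac{1}{D}\cdot\tfrac{iD}{2}=-\tfrac{i}{2}$, which is the stated value. So the two ingredients missing from your argument are (i) the actual computation $\ell_1(\rho_0)=iD/2$, and (ii) the specific normalization of $\kappa$ (equivalently of $F_0$) that cancels the factor $D$. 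With $F_0=\mathrm{id}$ the conclusion \eqref{FSF} is false in general.
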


\begin{proof}
The normal form \eqref{FR} is due to \cite{Sj1,CdvPa} in the smooth case. Notice that in these works, this result was proved for self-adjoint operators, but it still holds for our non-self-adjoint operator $\mathcal{L}$ thanks to the form of its principal symbol $\ell_0$.
In the following, we prove \eqref{FSF}.

The FIO $U$ is associated with the canonical transform $\kappa: (x,\xi)\mapsto (y,\eta)$ satisfying 
$$
F(\ell(x,\xi;h)) = y \eta,
$$
where $\ell(x,\xi;h)\sim \sum_{k\geq 0}h^k \ell_k(x,\xi)$ is the full symbol of $\mathcal{L}$. In particular, we can choose $\kappa(x,\xi)=\kappa_0(x,\xi) + \mathcal{O}((x,\xi)^2)$ with 
$$
\kappa_0(x,\xi) = \frac{1}{\sqrt{D}} \left( \gamma x + \delta \xi, \alpha x + \beta \xi\right).
$$

After a convenient normalization, we can write $U^{-1}$ in the form 
$$
U^{-1} v(x;h) = \int_{\mathbb R} e^{i\psi(x,y)/h} c(x,y;h) v(y) dy,
$$
where $c(x,y;h)\sim \sum_{k\geq 0} c_k(x,y)$ is a symbol with $c_0(0,0)=1$ and the phase function $\psi(x,y)$ is a generating function of $\kappa^{-1}$, in the sense that $\kappa^{-1}: (y,-\nabla_y \psi)\mapsto (x,\nabla_x \psi)$. In particular, near $(x,y)=\rho_0$, we have
\begin{equation}\label{POIF}
\psi(x,y)=\frac{1}{2\delta}(-\gamma x^2+ 2 \sqrt{D}xy- \beta y^2) + \mathcal{O}((x,y)^3).
\end{equation}
At the levels of principal and sub-principal symbols, the relation \eqref{FR} implies that 
$$
F_0(\ell_0(\kappa^{-1}(y,\eta))) = y \eta,
$$
$$
F_1(\ell_0(\kappa^{-1}(y,\eta))) + \ell_1(\kappa^{-1}(y,\eta)) F_0'(\ell_0(\kappa^{-1}(y,\eta))) = 0.
$$
In particular, the first equation at $(y,\eta)=(0,0)$ implies that $F_0(0)=0$ and $F_0'(0)= \frac{1}{D}$, and the second one gives 
$$
F_1(0)=-\ell_1(\rho_0)F_0'(0)=-\frac{i}{2},
$$
since $\ell_1(\rho_0)=\frac{iD}{2}$. Thus the symbol $F(0;h)$ has the form \eqref{FSF}.
\end{proof}

\subsubsection{Microlocal solutions near the crossing point} Setting $\tilde{u}_1:= U u_1$, the equation $\mathcal{L}u_1\sim 0$ microlocally in $\mathcal{V}$ is equivalent to 
\begin{equation}\label{Eqreduite}
\mathcal{G} \tilde{u}_1 \sim F(0;h) \tilde{u}_{1}\;\;\; {\rm microlocally \; in}\;\; \Omega,
\end{equation}
which can be rewritten as
$$
y (\tilde{u}_1)'(y) \sim i \mu h \tilde{u}_1(y) \;\;\; {\rm microlocally \; in}\;\; \Omega.
$$
The space of microlocal solutions of this equation is two dimensional and a basis is given by the two functions 
\begin{equation}\label{gfunctions}
g_{\mu}^+(y) := H(y) y^{i\mu h}, \quad g_{\mu}^-(y) := H(-y) \vert y \vert^{i\mu h},
\end{equation}
where $H$ denotes the Heaviside function, i.e., $H(y)=1$ for $y\geq 0$ and $H(y) = 0$ for $y<0$. In particular, we have 
$$
{\rm FS}( g_{\mu}^{\pm})=\{\pm y> 0, \eta=0\} \cup \{y=0\}.
$$
Thus, $u_1^{\pm} := U^{-1} g_{\mu}^{\pm}$ are solutions to the equation $\mathcal{L}u_1\sim 0$ microlocally in $\mathcal{V}$, and we have 
$$
{\rm FS}(u_1^{\pm}) \cap \mathcal{V} \subset  \big(\Gamma_{q_1}^{\pm} \cup \Gamma_{q_2}\big) \cap \mathcal{V}.
$$
More precisely, we have the following asymptotic formulae for $u_1^{\pm}$.
\begin{proposition}\label{pre1}
There exist symbols 
$$
\sigma^{\pm}(x;h)\sim \sum_{k\geq 0}h^k \sigma_k^{\pm}(x), \;\;\; \eta^{\pm}(x;h)\sim \sum_{k\geq 0}h^k \eta_k^{\pm}(x),
$$ 
with leading terms given by 
$$
\sigma_0^+(x) = \sqrt{\frac{\delta}{\beta}} e^{-i\frac{\pi}{4}} + \mathcal{O}(x) , \quad  \eta_0^+(x) = \frac{i \delta}{\sqrt{D}x}(1+\mathcal{O}(x)),
$$
$$
\sigma_0^-(x) = \sqrt{\frac{\delta}{\beta}} e^{-i\frac{\pi}{4}} + \mathcal{O}(x) , \quad  \eta_0^-(x) = - \frac{i \delta}{\sqrt{D}x}(1+\mathcal{O}(x)),
$$
such that, modulo $\mathcal{O}(h^{\infty})$ as $h\to 0^+$, we have 

$$
u_1^+(x;h) = \left\{\begin{array}{lll}
\sqrt{2\pi h}\, \sigma^+(x;h) e^{i\phi_{q_1}(x)/h} + h^{1+ i\mu h} \eta_+(x;h) e^{i\phi_{q_2}(x)/h}  \;\;\; & (x>0) \\ \\
 h^{1+ i\mu h} \, \eta_+(x;h) e^{i\phi_{q_2}(x)/h}  \;\;\; & (x<0)
\end{array}\right.
$$
and
$$
u_1^-(x;h) = \left\{\begin{array}{lll}
 h^{1+ i\mu h} \,\eta_-(x;h) e^{i\phi_{q_2}(x)/h}  \;\;\; & (x>0) \\ \\
\sqrt{2\pi h} \, \sigma^-(x;h) e^{i\phi_{q_1}(x)/h} + h^{1+ i\mu h} \eta_-(x;h) e^{i\phi_{q_2}(x)/h}  \;\;\; & (x<0).
\end{array}\right.
$$
\end{proposition}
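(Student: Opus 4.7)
The plan is to realize $u_1^\pm = U^{-1} g_\mu^\pm$ as oscillatory integrals and analyze them by combining stationary phase away from $y=0$ with an explicit boundary-type expansion near $y=0$, which is the only place where the distributional singularities of $g_\mu^\pm$ are located.

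I first write, using the representation of $U^{-1}$ as an FIO with phase $\psi$ and amplitude $c$,
\[
u_1^\pm(x;h) = \int_{\mathbb R} e^{i\psi(x,y)/h}\, c(x,y;h)\, g_\mu^\pm(y)\, dy,
\]
and treat $u_1^+$ in detail; $u_1^-$ follows by the substitution $y\mapsto -y$, which flips the sign of $a(x):=\partial_y\psi(x,0)$ and thus accounts for the sign change in $\eta_0^-$. From \eqref{POIF}, the stationary point in $y$ for fixed $x$ is $y_c(x)=(\sqrt{D}/\beta)x + \mathcal{O}(x^2)$, which lies in the integration range $(0,\infty)$ of $u_1^+$ precisely when $x>0$. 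This dichotomy explains the two cases in the statement.

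I then insert a smooth partition $1=\chi_0(y)+\chi_\infty(y)$ with $\chi_0$ supported near $y=0$ and $\chi_\infty$ vanishing there. On $\operatorname{supp}\chi_\infty$ the amplitude $y^{i\mu h}$ is smooth, so standard stationary phase applies: for $x>0$, since $\kappa^{-1}$ sends $\{\eta=0\}$ to $\Gamma_{q_1}$, the critical phase $\psi(x,y_c(x))$ coincides with $\phi_{q_1}(x)$, and I obtain a full WKB expansion
\[
\sqrt{2\pi h}\,\sigma^+(x;h)\,e^{i\phi_{q_1}(x)/h},
\]
with $\sigma_0^+(0)=\sqrt{\delta/\beta}\,e^{-i\pi/4}$ coming from $|\partial_y^2\psi(0,0)|=\beta/\delta$, its negative sign, and $c_0(0,0)=1$; for $x<0$ non-stationary phase yields $\mathcal{O}(h^\infty)$.

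The substantive part is the $\chi_0$-piece. Expanding $\psi(x,y)=\psi(x,0)+a(x)y+\mathcal{O}(y^2)$ and $c(x,y;h)$ in Taylor series in $y$, and noting that $\psi(x,0)=\phi_{q_2}(x)$ (since $\kappa^{-1}$ maps $\{y=0\}$ to $\Gamma_{q_2}$), the integral reduces to a series of elementary integrals
\[
\int_0^\infty e^{ia(x)y/h}\, y^{k+i\mu h}\, \chi_0(y)\,dy = e^{i\pi(k+1+i\mu h)\operatorname{sign}(a)/2}\,\Gamma(k+1+i\mu h)\,\bigl(h/|a(x)|\bigr)^{k+1+i\mu h} + \mathcal{O}(h^\infty),
\]
the error coming from the cutoff tail, where non-stationary phase applies since $a(x)\neq 0$ for $x\neq 0$. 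The leading $(k=0)$ term produces the factor $h^{1+i\mu h}$ together with the coefficient $\eta_0^+(x)=i\delta/(\sqrt{D}\,x)(1+\mathcal{O}(x))$; here $\operatorname{sign}(a(x))$ combines with $1/|a(x)|$ to give the same expression $i\delta/(\sqrt{D}\,x)$ for both signs of $x$. The higher powers $y^k$ generate the full symbol expansion $\sum_k h^k \eta_k^+(x)$, and logarithms arising from differentiating $y^{i\mu h}$ are reabsorbed into the classical symbol $\mu(h)$.

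The main technical obstacle will be the bookkeeping of remainders in the Taylor expansions of $\psi$ and $c$ and in the cutoff-dependent tail, and checking their uniformity in $x$ on a fixed range bounded away from the crossing point. The $1/x$ singularity in $\eta_0^\pm$ already signals that these expansions cannot extend to $x=0$ and must be interpreted microlocally on the four components $\Gamma_{q_j}^\pm\setminus\{\rho_0\}$, consistent with the statement.
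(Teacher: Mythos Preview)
Your proposal is correct and follows essentially the same approach as the paper: write $u_1^\pm=U^{-1}g_\mu^\pm$ as an oscillatory integral, identify the stationary point $y_c(x)=(\sqrt{D}/\beta)x+\mathcal{O}(x^2)$ (lying in the integration range exactly when $\pm x>0$) to produce the $\sqrt{2\pi h}\,\sigma^\pm e^{i\phi_{q_1}/h}$ contribution via stationary phase, and extract the endpoint $y=0$ contribution $h^{1+i\mu h}\eta_\pm e^{i\phi_{q_2}/h}$. The only cosmetic difference is that the paper handles the endpoint by a contour deformation reducing to a Laplace transform, while you compute it directly via the Gamma-function identity for $\int_0^\infty e^{iay/h}y^{k+i\mu h}\,dy$; these are the same computation, and your observation that $\operatorname{sign}(a)/|a|=1/a$ makes $\eta_0^+$ independent of the sign of $x$ is exactly what the paper uses implicitly.
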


\begin{proof}
We only prove the formula for $u_1^+$. The proof for the other one is similar. By definition, we have
\begin{equation}\label{SPO}
u_1^+(x;h) = U^{-1} g_{\mu}^+(x;h)= \int_0^{+\infty} e^{i\psi(x,y)/h} c(x,y;h) y^{i\mu h} dy,
\end{equation}
where we recall that the phase function $\psi$ satisfies (see \eqref{POIF})
$$
\psi(x,y)=\frac{1}{2\delta}(-\gamma x^2+ 2 \sqrt{D}xy-\beta y^2) + \mathcal{O}((x,y)^3) \;\;\; {\rm as}\;\; (x,y)\to \rho_0.
$$
The right hand side of \eqref{SPO} is an oscillatory integral and up to terms of order $\mathcal{O}(h^{\infty})$, its asymptotic behavior as $h\to 0$ is governed by the contributions of the critical points of the phase function $y \mapsto \psi(x,y)$ and the singular point $y=0$ of $y\mapsto y^{i\mu h}$. 

Notice first that for $x<0$, there is no positive critical points of $y\mapsto \psi(x,y)$, hence $u_1^+$ is microlocally $0$ on $\Gamma_{q_1}^{-}$.

For $x>0$, the function $y \mapsto \psi(x,y)$ has a positive non degenerate critical point $y_c(x)$ which behaves like 
$$
y_c(x) = \frac{\sqrt{D}}{\beta}x + \mathcal{O}(x^2) \;\;\; {\rm as}\;\; x\to 0.
$$
In particular, the corresponding critical value $\psi(x,y_c(x))$ coincides with the generating function $\phi_{q_1}(x)$ of $\Gamma_{q_1}$, and we have 
$$
\psi(x,y_c(x))= -\frac{\alpha}{2\beta}x^2 + \mathcal{O}(x^3) \;\;\; {\rm as}\;\; x\to 0.
$$
Moreover, we have $\partial_y^2\psi(x,y_c(x))= -\frac{\beta}{\delta}+ \mathcal{O}(x)<0$. Then, by the stationary phase theorem (see e.g. \cite{Ma} Corollary 2.6.3), the contribution of this critical point to the integral \eqref{SPO} is of the form
\begin{equation}\label{FCI}
\sqrt{2\pi h} \,\sigma^+(x;h) e^{i\phi_{q_1}(x)/h},
\end{equation}
where $\sigma^{+}(x;h)\sim \sum_{k\geq 0} h^k \sigma^+_{k}(x)$ is a symbol with leading term 
\begin{equation}
\sigma_0^+(x) = e^{-i\frac{\pi}{4}} \vert \partial_y^2\psi(x,y_c(x)) \vert^{-\frac{1}{2}} c_0(x,y_{c}(x)) = \sqrt{\frac{\delta}{\beta}}  e^{-i\frac{\pi}{4}} + \mathcal{O}(x).
\end{equation}

On the other hand, by a change of contour of integration which reduces the integral to a Laplace transform, one can see that the contribution of the endpoint $y=0$ to the integral is both for $x>0$ and $x<0$ of the form
$$
h^{1+i\mu h} \eta_+(x;h) e^{i\phi_{q_2}(x)/h},
$$
where $\eta_+(x;h) \sim \sum_{k\geq 0} h^k \eta^+_{k}(x)$ with 
$$
\eta_0^+(x) = \frac{i\delta}{\sqrt{D}x}c_0(x,0) = \frac{i\delta}{\sqrt{D}x} (1+ \mathcal{O}(x)).
$$
\end{proof}

Now, we construct another pair of solutions $v^{\pm}={}^t(v_1^{\pm},v_2^{\pm})$ to the system \eqref{GSystem} that are microlocally zero on one of $\Gamma_{q_2}^+$ and $\Gamma_{q_2}^-$. To do this, we proceed in a similar way as above but now by reducing the system \eqref{GSystem} to a scalar equation of $v_2$ instead of $v_1$.  Setting $v={}^t(v_1,v_2)$ and using the ellipticity of $\mathcal{R}^*$ at $\rho_0$, the system $\mathcal{Q}v\sim 0$ is reduced microlocally near $\rho_0$ to 
\begin{equation}\label{FR2}
\left\{\begin{array}{lll}
\widehat{\mathcal{L}} v_2 \sim 0,\\
v_1 \sim - h^{-1} (\mathcal{R}^*)^{-1} \mathcal{Q}_2 v_2,
\end{array}\right.
\end{equation}
where $(\mathcal{R}^*)^{-1} $ denotes a parametrix of $\mathcal{R}^*$ in a neighborhood of $\rho_0$ and $\widehat{\mathcal{L}}$ is the $h$-pseudodifferential operator defined by 
$$
\widehat{\mathcal{L}} := \mathcal{R}^* \mathcal{Q}_1 (\mathcal{R}^*)^{-1}  \mathcal{Q}_2 - h^2 \mathcal{R}^* (\mathcal{R}^*)^{-1} .
$$
By means of a normal form similar to \eqref{FR}, the equation $\widehat{\mathcal{L}} v_2 \sim 0$ is reduced microlocally near $\rho_0$  to $\mathcal{G} \widehat{v}_2 \sim \widehat{F}(0;h) \widehat{v}_2$ with $ \widehat{v}_2:= \widehat{U} v_2$, where the FIO $\widehat{U}$ is given by 
$$
\widehat{U}^{-1} u(x;h) = \int_{\mathbb R} e^{i \widehat{\psi}(x,y)/h} \widehat{c}(x,y;h) u(y) dy,
$$
with a symbol $\widehat{c}(x,y;h)\sim \sum_{k\geq 0}h^k \widehat{c}_k(x,y)$ satisfying $\widehat{c}_0(\rho_0)=1$, and the phase $\widehat{\psi}$ is of the form 
$$
\widehat{\psi}(x,y) = \frac{1}{2\beta}(-\alpha x^2+ 2 \sqrt{D}xy + \delta y^2) + \mathcal{O}((x,y)^3).
$$
The formal symbol $\widehat{F}(t;h)$ at $t=0$ is of the form $\widehat{F}(0;h)=-\frac{i}{2} h + \widehat{\mu} h^2$, with $\widehat{\mu}=\widehat{\mu}(h)\sim \sum_{k\geq 0} h^k \widehat{\mu}_{-,k}$ is a classical symbol of order $0$. Then, as above, setting 
$$
v_2^{\pm}:= \widehat{U}^{-1} g_{\widehat{\mu}}^{\pm},
$$
with $g_{\widehat{\mu}}^{\pm}$ defined by \eqref{gfunctions} with $\widehat{\mu}$ instead of $\mu$, we see that $v_2^{\pm}$ are microlocal solutions to the equation $\widehat{\mathcal{L}}v_2\sim 0$ microlocally in a small neighborhood of $\rho_0$ that we still denote by $\mathcal{V}$, and we have 
$$
{\rm FS}(v_2^{\pm}) \cap \mathcal{V} \subset  \big( \Gamma_{q_2}^{\pm} \cup \Gamma_{q_1} \big) \cap \mathcal{V}.
$$
Moreover, as in Proposition \ref{pre1}, we have the following asymptotic formula for $v_2^{\pm}$.
\begin{proposition}\label{pre2}
There exist symbols 
$$
\widehat{\sigma}^{\pm}(x;h)\sim \sum_{k\geq 0}h^k \widehat{\sigma}_k^{\pm}(x), \;\;\; \widehat{\eta}^{\pm}(x;h)\sim \sum_{k\geq 0}h^k \widehat{\eta}_k^{\pm}(x),
$$ 
with leading terms given by 
$$
\widehat{\sigma}_0^+(x) = \sqrt{ \frac{\beta}{\delta} } e^{i\frac{\pi}{4}} + \mathcal{O}(x) , \quad  \widehat{\eta}_0^+(x) = \frac{i \beta}{\sqrt{D}x}(1+\mathcal{O}(x)),
$$
$$
\widehat{\sigma}_0^-(x) = \sqrt{\frac{\beta}{\delta}} e^{i\frac{\pi}{4}} + \mathcal{O}(x) , \quad  \widehat{\eta}_0^-(x) = -\frac{i \beta}{\sqrt{D}x}(1+\mathcal{O}(x)),
$$
such that, modulo $\mathcal{O}(h^{\infty})$ as $h\to 0^+$, we have 

$$
v_2^+(x;h) = \left\{\begin{array}{lll}
 h^{1+ i\widehat{\mu}h} \, \widehat{\eta}_+(x;h) e^{i\phi_{q_1}(x)/h} \;\;\; & (x>0) \\ \\
\sqrt{2\pi h}\, \widehat{\sigma}^+(x;h) e^{i\phi_{q_2}(x)/h} + h^{1+ i\widehat{\mu}h} \widehat{\eta}_+(x;h) e^{i\phi_{q_1}(x)/h} \;\;\; &(x<0) \
\end{array}\right.
$$
and
$$
v_2^-(x;h) = \left\{\begin{array}{lll}
\sqrt{2\pi h} \, \widehat{\sigma}^-(x;h) e^{i\phi_{q_2}(x)/h} + h^{1+ i\widehat{\mu}h} \widehat{\eta}_-(x;h) e^{i\phi_{q_1}(x)/h} \;\;\; & (x>0) \\ \\
h^{1+ i\widehat{\mu} h} \, \widehat{\eta}_-(x;h) e^{i\phi_{q_1}(x)/h} \;\;\; & (x<0).
\end{array}\right.
$$
\end{proposition}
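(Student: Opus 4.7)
The plan is to transcribe the argument of Proposition \ref{pre1} to the dual reduction \eqref{FR2}. From $v_2^{\pm}:=\widehat{U}^{-1}g^{\pm}_{\widehat\mu}$ we have the oscillatory integral representation
\begin{equation*}
v_2^{\pm}(x;h) = \int_{\pm y > 0} e^{i\widehat\psi(x,y)/h}\,\widehat{c}(x,y;h)\,|y|^{i\widehat\mu h}\,dy,
\end{equation*}
whose asymptotics as $h\to 0^+$ decompose, modulo $\mathcal{O}(h^\infty)$, into two contributions: one from the non-degenerate critical point of $y\mapsto\widehat\psi(x,y)$, and one from the singular endpoint $y=0$.

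First I would locate the critical point. From the quadratic expansion $\widehat\psi(x,y)=\frac{1}{2\beta}(-\alpha x^2+2\sqrt{D}\,xy+\delta y^2)+\mathcal{O}((x,y)^3)$ one finds $y_c(x)=-\frac{\sqrt{D}}{\delta}x+\mathcal{O}(x^2)$, so $y_c(x)$ belongs to $\{y>0\}$ only when $x<0$ and to $\{y<0\}$ only when $x>0$; this already matches the piecewise form of the statement. By construction of the canonical transformation associated with $\widehat{U}^{-1}$, the critical value $\widehat\psi(x,y_c(x))$ coincides with the generating function $\phi_{q_2}$ of $\Gamma_{q_2}$, while the boundary value $\widehat\psi(x,0)$ coincides with $\phi_{q_1}$ (this can be checked directly from the leading quadratic part, using the identity $D=\beta\gamma-\alpha\delta$).

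Second, I would apply the stationary phase theorem at $y_c(x)$. Since $\partial_y^2\widehat\psi(x,y_c(x))=\delta/\beta+\mathcal{O}(x)>0$, the standard expansion (\cite[Cor.~2.6.3]{Ma}) produces, up to $\mathcal{O}(h^\infty)$, the term $\sqrt{2\pi h}\,\widehat\sigma^{\pm}(x;h)\,e^{i\phi_{q_2}(x)/h}$ with leading coefficient
\begin{equation*}
\widehat\sigma^{\pm}_0(x)=e^{i\pi/4}|\partial_y^2\widehat\psi(x,y_c(x))|^{-1/2}\widehat{c}_0(x,y_c(x))=\sqrt{\beta/\delta}\,e^{i\pi/4}+\mathcal{O}(x).
\end{equation*}
Note the $+\pi/4$ sign, opposite to the $-\pi/4$ obtained in Proposition \ref{pre1}; this reflects the fact that $\partial_y^2\widehat\psi>0$ while $\partial_y^2\psi<0$. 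Third, I would treat the endpoint $y=0$ by the same deformation of contour into a Laplace-type integral as in the proof of Proposition \ref{pre1}. The $|y|^{i\widehat\mu h}$ factor against the Laplace measure yields the prefactor $h^{1+i\widehat\mu h}$, and the resulting contribution has the form $h^{1+i\widehat\mu h}\,\widehat\eta^{\pm}(x;h)\,e^{i\phi_{q_1}(x)/h}$. A direct computation using $\widehat{c}_0(0,0)=1$ and $\partial_y\widehat\psi(x,0)=\sqrt{D}x/\beta+\mathcal{O}(x^2)$ gives
\begin{equation*}
\widehat\eta_0^{\pm}(x)=\pm\frac{i\beta}{\sqrt{D}\,x}\bigl(1+\mathcal{O}(x)\bigr),
\end{equation*}
the $\pm$ sign arising from the orientation of the integration half-line.

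The main obstacle is the sign bookkeeping: correctly matching the half-line of $x$ on which each contribution lives, obtaining the $+\pi/4$ versus $-\pi/4$ in the stationary factor, and tracking the sign of $\widehat\eta_0^{\pm}$ coming from the direction of contour deformation. Once the explicit quadratic form of $\widehat\psi$ is spelled out and one applies the role swap $(q_1,q_2)\leftrightarrow(q_2,q_1)$, $(\beta,\delta)\leftrightarrow(\delta,\beta)$ relative to Proposition \ref{pre1}, the argument is a routine adaptation.
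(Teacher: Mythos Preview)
Your proposal is correct and follows exactly the approach the paper intends: the paper gives no separate proof of Proposition~\ref{pre2}, stating only that the argument is ``as in Proposition~\ref{pre1}'', and your write-up is precisely that adaptation with the roles of $q_1,q_2$ (hence $\beta,\delta$) swapped and the sign of $\partial_y^2\widehat\psi$ correctly tracked. The critical-point location $y_c(x)=-\frac{\sqrt D}{\delta}x+\mathcal O(x^2)$, the resulting $e^{+i\pi/4}$ factor, and the endpoint contribution $\widehat\eta_0^\pm(x)=\pm\frac{i\beta}{\sqrt D\,x}(1+\mathcal O(x))$ all match the statement.
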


Summing up, we then have constructed $4$ microlocal solutions to the system \eqref{GSystem} microlocally in a small neighborhood $\mathcal{V}$ of $\rho_0$
$$
u^{\pm}={}^t(u_1^{\pm},u_2^{\pm}), \quad v^{\pm}={}^t(v_1^{\pm},v_2^{\pm}),
$$
with 
$$
{\rm FS}(u^{\pm}) \cap \mathcal{V} \subset  \big(\Gamma_{q_1}^{\pm} \cup \Gamma_{q_2}\big) \cap \mathcal{V},
$$
$$
{\rm FS}(v^{\pm}) \cap \mathcal{V} \subset  \big( \Gamma_{q_2}^{\pm} \cup \Gamma_{q_1} \big) \cap \mathcal{V},
$$
where $u_1^{\pm}$ and $v_2^{\pm}$ are defined above and 
$$
u_2^{\pm} \sim -h^{-1} \mathcal{R}^{-1} \mathcal{Q}_1 u_1^{\pm},\;\; v_1^{\pm} \sim -h^{-1} (\mathcal{R}^*)^{-1} \mathcal{Q}_2 v_2^{\pm}.
$$
\subsubsection{Proof of theorem \ref{TMF}} Now we connect our microlocal solutions $u^{\pm}$ and $v^{\pm}$ to the WKB solutions $f_{q_j}^{\pm}$, $j=1,2$, given by Proposition \ref{GenWKB} and we deduce the transfer matrix at the crossing point. The following result is an immediate consequence of Propositions \ref{pre1} and \ref{pre2}.
\begin{proposition}
There exist symbols
$$
A_{q_1}^{\pm}(h) \sim \sum_{k\geq 0} h^kA_{q_1,k}^{\pm}, \; A_{q_2}^{\pm,\pm}(h) \sim \sum_{k\geq 0} h^kA_{q_2,k}^{\pm,\pm}, \;  B_{q_1}^{\pm,\pm}(h) \sim \sum_{k\geq 0} h^kB_{q_1,k}^{\pm,\pm}, \; B_{q_2}^{\pm}(h) \sim \sum_{k\geq 0} h^kB_{q_2,k}^{\pm},
$$
with leading terms 
\begin{equation}\label{ledf1}
A_{q_1,0}^+ = A_{q_1,0}^- = \sqrt{\frac{2\pi \delta}{\beta}} e^{-i\frac{\pi}{4}}, \;\; A_{q_2,0}^{+,+} = A_{q_2,0}^{+,-}= -A_{q_2,0}^{-,+} = - A_{q_2,0}^{-,-}= \frac{i\sqrt{D}}{r(\rho_0)},
\end{equation}
\begin{equation}\label{ledf2}
B_{q_2,0}^+ = B_{q_2,0}^- = \sqrt{\frac{2\pi \beta}{\delta}} e^{i\frac{\pi}{4}}, \;\; - B_{q_1,0}^{+,+} = - B_{q_1,0}^{+,-}= B_{q_1,0}^{-,+} = B_{q_1,0}^{-,-}= \frac{i\sqrt{D}}{\overline{r(\rho_0)}},
\end{equation}
such that
$$
u^+ \sim \left\{ \begin{array}{lll}
A_{q_1}^+ h^{\frac12} f_{q_1}^+ \;\;\; & {\rm on} \;\; \Gamma_{q_1}^+ \cap \mathcal{V} \\\\
0   \;\;\; & {\rm on} \;\; \Gamma_{q_1}^- \cap \mathcal{V} \\ \\
A_{q_2}^{+,+} h^{i\mu h} f_{q_2}^{+} \;\;\; & {\rm on} \;\; \Gamma_{q_2}^{+} \cap \mathcal{V} \\\\
A_{q_2}^{+,-} h^{i\mu h} f_{q_2}^{-} \;\;\; & {\rm on} \;\; \Gamma_{q_2}^{-} \cap \mathcal{V} 
\end{array}\right. , \quad 
u^- \sim \left\{ \begin{array}{lll}
0 \;\;\; & {\rm on} \;\; \Gamma_{q_1}^+ \cap \mathcal{V} \\\\
A_{q_1}^- h^{\frac12} f_{q_1}^+  \;\;\; & {\rm on} \;\; \Gamma_{q_1}^- \cap \mathcal{V} \\ \\
A_{q_2}^{-,+} h^{i\mu h} f_{q_2}^{+} \;\;\; & {\rm on} \;\; \Gamma_{q_2}^{+} \cap \mathcal{V} \\\\
A_{q_2}^{-,-} h^{i\mu h} f_{q_2}^{-} \;\;\; & {\rm on} \;\; \Gamma_{q_2}^{-} \cap \mathcal{V}
\end{array}\right. 
$$
$$
v^+ \sim \left\{ \begin{array}{lll}
B_{q_2}^+ h^{\frac12} f_{q_2}^+ \;\;\; & {\rm on} \;\; \Gamma_{q_2}^+ \cap \mathcal{V} \\\\
0   \;\;\; & {\rm on} \;\; \Gamma_{q_2}^- \cap \mathcal{V} \\ \\
B_{q_1}^{+,+} h^{i\widehat{\mu} h} f_{q_1}^{+} \;\;\; & {\rm on} \;\; \Gamma_{q_1}^{+} \cap \mathcal{V}\\\\
B_{q_1}^{+,-} h^{i\widehat{\mu} h} f_{q_1}^{-} \;\;\; & {\rm on} \;\; \Gamma_{q_1}^{-} \cap \mathcal{V}
\end{array}\right. , \quad 
v^- \sim \left\{ \begin{array}{lll}
0 \;\;\; & {\rm on} \;\; \Gamma_{q_2}^+ \cap \mathcal{V} \\\\
B_{q_2}^- h^{\frac12} f_{q_2}^+  \;\;\; & {\rm on} \;\; \Gamma_{q_2}^- \cap \mathcal{V} \\ \\
B_{q_1}^{-,+} h^{i\widehat{\mu} h} f_{q_1}^{+} \;\;\; & {\rm on} \;\; \Gamma_{q_1}^{+} \cap \mathcal{V} \\\\
B_{q_1}^{-,-} h^{i\widehat{\mu} h} f_{q_1}^{-} \;\;\; & {\rm on} \;\; \Gamma_{q_1}^{-} \cap \mathcal{V}.
\end{array}\right. 
$$
\end{proposition}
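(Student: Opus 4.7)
The plan is to identify each coefficient by matching, branch by branch, the asymptotic expansions of Propositions~\ref{pre1} and~\ref{pre2} with the WKB basis of Proposition~\ref{GenWKB}. Since the space of microlocal solutions on each regular half-curve $\Gamma_{q_j}^{\pm}$ is one-dimensional, it is enough to compare a single (conveniently chosen) scalar component; the identity on the other component is then automatic, because both sides solve the full system \eqref{GSystem} and the relation between components is fixed by \eqref{FR1} (for $u^{\pm}$) or \eqref{FR2} (for $v^{\pm}$).

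For $u^{+}$ on $\Gamma_{q_1}^{+}$, Proposition~\ref{pre1} gives $u_1^{+}(x;h)\sim \sqrt{2\pi h}\,\sigma^{+}(x;h)\,e^{i\phi_{q_1}(x)/h}$ modulo a remainder whose frequency set is contained in $\Gamma_{q_2}$, while the first component of $f_{q_1}^{+}$ is $a_{q_1}(x;h)\,e^{i\phi_{q_1}(x)/h}$ with $a_{q_1,0}(0)=1$. The coefficient $A_{q_1}^{+}(h)$ is then forced to be $\sqrt{2\pi}\,\sigma^{+}/a_{q_1}$, a classical symbol of order $0$ with leading term $A_{q_1,0}^{+}=\sqrt{2\pi\delta/\beta}\,e^{-i\pi/4}$, using \eqref{Notationsabcd}. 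On $\Gamma_{q_1}^{-}$ the stationary-phase contribution is absent in the formula for $u_1^{+}$, so $u^{+}\sim 0$ there. On $\Gamma_{q_2}^{\pm}$ only the endpoint term $h^{1+i\mu h}\eta_{+}(x;h)\,e^{i\phi_{q_2}(x)/h}$ is visible, and the first component of $f_{q_2}^{\pm}$ starts at order $h$ with $a_{q_2,1}(x)\,e^{i\phi_{q_2}(x)/h}$; both $\eta_0^{+}$ and $a_{q_2,1}$ carry a simple $1/x$ pole with residues $i\delta/\sqrt{D}$ and $\delta\,r(\rho_0)/D$ respectively, and their ratio $i\sqrt{D}/r(\rho_0)$ is the same on either side of the origin, producing $A_{q_2,0}^{+,+}=A_{q_2,0}^{+,-}=i\sqrt{D}/r(\rho_0)$.

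The coefficients for $u^{-}$ are read off in exactly the same way, with $\eta_0^{-}=-\eta_0^{+}$ giving $A_{q_2,0}^{-,\pm}=-i\sqrt{D}/r(\rho_0)$, while the stationary-phase piece is now supported on $\Gamma_{q_1}^{-}$ and yields $A_{q_1,0}^{-}=\sqrt{2\pi\delta/\beta}\,e^{-i\pi/4}$. The statement for $v^{\pm}$ is obtained by the same procedure applied to the second component via Proposition~\ref{pre2}: the normal form for $\widehat{\mathcal L}$ exchanges the roles of $\beta$ and $\delta$ in the prefactors (through the phase $\widehat\psi$) and replaces $r(\rho_0)$ by $\overline{r(\rho_0)}$ in the subprincipal transport coefficient $b_{q_1,1}$, which gives $B_{q_2,0}^{\pm}=\sqrt{2\pi\beta/\delta}\,e^{i\pi/4}$ and the announced $B_{q_1,0}^{\pm,\pm}=\mp i\sqrt{D}/\overline{r(\rho_0)}$, the signs being dictated by $\widehat\eta_0^{-}=-\widehat\eta_0^{+}$.

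The only point of the argument that goes beyond a mechanical matching of leading terms is the verification that each quotient defined this way is really a classical symbol of order $0$ in $h$, even at points where numerator and denominator are individually singular (as on $\Gamma_{q_2}^{\pm}$ for $u^{\pm}$ and on $\Gamma_{q_1}^{\pm}$ for $v^{\pm}$). This follows from the one-dimensionality of the microlocal solution space on each regular branch: both $u_1^{\pm}$ (respectively $v_2^{\pm}$) and $f_{q_j}^{\pm}$ satisfy the same scalar transport equation along that branch, derived from $\mathcal{L}$ (respectively $\widehat{\mathcal{L}}$) as in Section~\ref{RSE}, so once their ratio is computed at a single point it is propagated as a constant along the branch and the full asymptotic expansion of the ratio is obtained order by order in $h$. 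With this in hand, the higher-order symbols $A_{q_j,k}^{\cdot}$ and $B_{q_j,k}^{\cdot}$ exist and are uniquely determined, completing the proof.
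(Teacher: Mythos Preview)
Your proof is correct and follows exactly the route the paper intends: the authors simply state that the Proposition is ``an immediate consequence of Propositions~\ref{pre1} and~\ref{pre2}'' and give no further detail, so what you have written is precisely the verification they leave to the reader. Your added remark, that the quotients are genuine classical symbols because the microlocal solution space on each regular branch is one-dimensional (so the ratio is constant in $x$ and determined order by order in $h$), is a useful clarification that the paper does not spell out.
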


We set 
$$
\begin{pmatrix}
t_{1}^{+}(h) \\\\
t_{2}^{+}(h)
\end{pmatrix}
= \begin{pmatrix}
s_{1,1}(h) & s_{1,2}(h) \\\\
s_{2,1}(h) & s_{2,2}(h)
\end{pmatrix}
\begin{pmatrix}
t_{1}^{-}(h) \\\\
t_{2}^{-}(h)
\end{pmatrix}.
$$
Observe that if $t_1^-(h) = 1$ and $t_2^-(h)=0$ then $u$ should be equal to $(B_{q_1}^{+,-} h^{i\widehat{\mu} h})^{-1} v^+$ microlocally near $\rho_0$, and therefore we have 
$$
s_{2,1}(h) = t_2^+(h) =  h^{\frac12 - i \widehat{\mu} h}  \kappa_{2,1}(h) \;\;\; {\rm with} \;\; \kappa_{2,1}(h) := \frac{B_{q_2}^+(h)}{B_{q_1}^{+,-}(h)},
$$
$$
s_{1,1}(h) = t_1^+(h) =  \frac{B_{q_1}^{+,+}(h)}{B_{q_1}^{+,-}(h)}.
$$

Analogously, if $t_1^-(h) = 0$ and $t_2^-(h)=1$ then $u$ should be equal to $(A_{q_2}^{+,-} h^{i\mu h})^{-1} u^+$ microlocally near $\rho_0$, and therefore we have 
$$
s_{1,2}(h) = t_1^+(h) =  h^{\frac12 - i \mu h}  \kappa_{1,2}(h) \;\;\; {\rm with} \;\; \kappa_{1,2}(h) := \frac{A_{q_1}^+(h)}{A_{q_2}^{+,-}(h)},
$$
$$
s_{2,2}(h) = t_2^+(h) =  \frac{A_{q_2}^{+,+}(h)}{A_{q_2}^{+,-}(h)}.
$$
This ends the proof of Theorem \ref{TMF}.

\section{Microlocal connection formulae}

In this section, we establish the microlocal connection formulae for our operator \eqref{System0}. In our case, the characteristic set ${\rm Char}\,(P(h)-E)=\Gamma_1(E)\cup \Gamma_2(E)$ is divided  by $2$ \textit{crossing points} $\rho_{\pm}(E)=(0,\pm \sqrt{E})$ and $4$ \textit{turning points (caustics)} $(\gamma(E),0)$, $\gamma=\alpha_1,\alpha_2,\beta_1,\beta_2$,
into the following $8$ curves $\Gamma_{j,S}^{\pm}$ (see Figure \ref{Fig 3})
$$
\Gamma_{j,L}:=\{(x,\xi)\in \Gamma_j(E); x<0\},\quad
\Gamma_{j,L}^\pm:=\{(x,\xi)\in \Gamma_j(E); x<0,\pm\xi>0\},
$$
$$
\Gamma_{j,R}:=\{(x,\xi)\in \Gamma_j(E); x>0\},\quad
\Gamma_{j,R}^\pm:=\{(x,\xi)\in \Gamma_j(E); x>0,\pm\xi>0\}.
$$

\begin{center}
\begin{figure}[h]
	\begin{tikzpicture}[scale=1.2]
	%\coordinate (O) at (0,0);
	%\draw[very thick,->] (-4.3,0) -- (6,0) coordinate[label = {below:\tiny{$x$}}] (xmax);
	%\draw[very thick,->] (0.5,-2.3) -- (0.5,2.7) coordinate[label = {right:\tiny{$\xi$}}] (ymax);
	%\draw (-0.5,0) ellipse (2.2cm and 1.6cm);
	%\draw (1.5,0) ellipse (2.2cm and 1.6cm);
	\begin{scope}[decoration={markings,
	mark=at position 0.05 with {\arrowreversed[very thick]{>}},
    mark=at position 0.25 with {\arrowreversed[very thick]{>}},
     mark=at position 0.75 with {\arrowreversed[very thick]{>}},
                   mark=at position 0.95 with {\arrowreversed[very thick]{>}},
    },
            ]
	\draw[very thick,postaction={decorate}] \boundellipse{-0.9,0}{2.5}{1.5} ;
	\end{scope}
	\begin{scope}[decoration={markings,
    mark=at position 0.44 with {\arrowreversed[very thick]{>}},
        mark=at position 0.20 with {\arrowreversed[very thick]{>}},
    mark=at position 0.55 with {\arrowreversed[very thick]{>}},
        mark=at position 0.75 with {\arrowreversed[very thick]{>}}},
              ]
	\draw[very thick,postaction={decorate}] \boundellipse{1.89,0}{2.5}{1.5};
	\end{scope}
	%\node at (0.9,1.7) {\tiny{$\rho_+(E_0)$}};
	\node at (3,1.7) {\tiny{$\Gamma_{2,R}^+$}};
	\node at (-2,1.7) {\tiny{$\Gamma_{1,L}^+$}};
	\node at (-2,-1.7) {\tiny{$\Gamma_{1,L}^-$}};
	\node at (1.7,0.8) {\tiny{$\Gamma_{1,R}^+$}};
	\node at (1.7,-0.8) {\tiny{$\Gamma_{1,R}^-$}};
	\node at (3,-1.7) {\tiny{$\Gamma_{2,R}^-$}};
	\node at (-0.8,0.8) {\tiny{$\Gamma_{2,L}^+$}};
	\node at (-0.8,-0.8) {\tiny{$\Gamma_{2,L}^-$}};
	\node at (0.56,1.6) {\tiny{$\rho_+(E)$}};
	\node at (0.56,-1.6) {\tiny{$\rho_-(E)$}};
	%\node at (0.51,1.24) {{$\bullet$}};
        	%\node at (0.51,-1.24) {{$\bullet$}};
	 %\node at (-3.4,0) {{$\bullet$}};
	  %\node at (4.4,0) {{$\bullet$}};
	  %\node at (1.6,0) {{$\bullet$}};
	  %\node at (-0.6,0) {{$\bullet$}};
         \node at (-3.99,0) {\tiny{$(\alpha_1(E),0)$}};
          \node at (-1.2,0) {\tiny{$(\alpha_2(E),0)$}};
         \node at (2.2,0) {\tiny{$(\beta_1(E),0)$}};
         \node at (4.98,0) {\tiny{$(\beta_2(E),0)$}};
                   \draw [fill=white]  (0.51,1.24) circle[radius= 0.15 em];
                   \draw [fill=white] (0.51,-1.24) circle[radius= 0.15 em];
                   \draw [fill=white] (-3.4,0) circle[radius= 0.15 em];
                   \draw [fill=white] (4.4,0) circle[radius= 0.15 em];
                   \draw [fill=white] (1.6,0) circle[radius= 0.15 em];
          \draw [fill=white]  (-0.6,0) circle[radius= 0.15 em];
	\end{tikzpicture}
	\caption{The $8$ curves $\Gamma_{j,S}^{\pm}$, $j=1,2$, $S=L,R$.} \label{Fig 3}
	\end{figure}
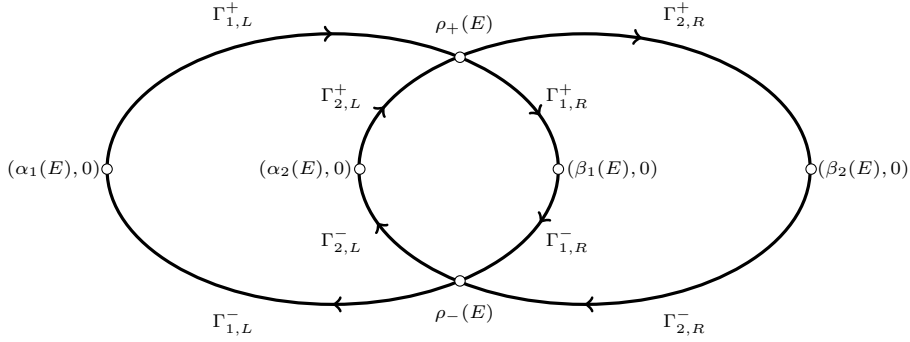
\end{center}

In a similar way as in Proposition \ref{GenWKB}, we have the following WKB basis of microlocal solutions to the system $(P(h)-E)u=0$ on each curve $\Gamma_{j,S}^{\pm}$, $j=1,2$, $S=L,R$.
\begin{proposition}\label{wkb-}
Let $E\in I_{0}$. On each of the $8$ curves $\Gamma_{j,S}^{\pm}$, $j=1,2$, $S=L,R$, the space of microlocal solutions to the system 
$$
(P(h)-E)u\sim 0 \,\,\,{\rm microlocally\, on}\,\,
\Gamma_{j,S}^{\pm},
$$
is one-dimensional, and there exist $8$ functions $f_{1,L}^{\pm}, f_{1,R}^{\pm}, f_{2,L}^{\pm}, f_{2,R}^{\pm}\in L^2(\R)$ such that 
$$
(P(h)-E)f_{1,L}^{\pm}\sim 0,\quad f_{1,L}^{\pm}\sim 
\begin{pmatrix}
a_1^{\pm} \\\\
ha_2^{\pm}
\end{pmatrix}
e^{\pm i\phi_1(x)/h}
\,\,\,{\rm microlocally\, on}\,\,
\Gamma_{1,L}^{\pm}(E),
$$
$$
(P(h)-E)f_{1,R}^{\pm}\sim 0,\quad f_{1,R}^{\pm}\sim 
\begin{pmatrix}
a_1^{\pm} \\\\
ha_2^{\pm}
\end{pmatrix}e^{\pm i\phi_1(x)/h}
\,\,\,{\rm microlocally\, on}\,\,
\Gamma_{1,R}^{\pm}(E),
$$
$$
(P(h)-E)f_{2,L}^{\pm}\sim 0,\quad f_{2,L}^{\pm}\sim 
\begin{pmatrix}
hb_1^{\pm} \\\\
b_2^{\pm}
\end{pmatrix}
e^{\pm i\phi_2(x)/h}
\,\,\,{\rm microlocally\, on}\,\,
\Gamma_{2,L}^{\pm}(E),
$$
$$
(P(h)-E)f_{2,R}^{\pm}\sim 0,\quad f_{2,R}^{\pm}\sim 
\begin{pmatrix}
hb_1^{\pm} \\\\
b_2^{\pm}
\end{pmatrix}
e^{\pm i\phi_2(x)/h}
\,\,\,{\rm microlocally\, on}\,\,
\Gamma_{2,R}^{\pm}(E),
$$
where for $j=1,2$,
\begin{equation}\label{phase function}
\phi_j(x):= \int_0^x \sqrt{ E-V_j(t)} \,dt, 
\end{equation}
and $a_j^{\pm}= a_j^{\pm}(x;h)\sim \sum_{k\geq 0}h^ka_{j,k}^{\pm}(x)$ and $b_j^{\pm}=b_j^{\pm}(x;h)\sim \sum_{k\geq 0}h^kb_{j,k}^{\pm}(x)$ are symbols whose first coefficients are given by

\begin{equation*}\label{coeffaj}
\begin{aligned}
& a_{1,0}^{\pm}(x)=\frac1{(E-V_1(x))^{\frac14}} \quad ; \quad  b_{2,0}^{\pm}(x)=\frac1{(E-V_2(x))^{\frac14}} ,
\end{aligned}
\end{equation*}
\begin{equation*}\label{coeffbj}
\begin{aligned}
& a_{2,0}^{\pm}(x)=\frac{r_0(x)\mp ir_1(x)\sqrt{E-V_1(x)}}{(V_1(x)-V_2(x))(E-V_1(x))^{\frac14}} \quad ; \quad  b_{1,0}^{\pm}(x) =\frac{r_0(x)\pm ir_1(x)\sqrt{E-V_2(x)}}{(V_2(x)-V_1(x))(E-V_2(x))^{\frac14}}.
\end{aligned}
\end{equation*}
\end{proposition}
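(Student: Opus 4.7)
I fix one of the eight curves, say $\Gamma_{1,S}^{\pm}$; the other seven cases are identical modulo interchanging $P_1\leftrightarrow P_2$ (and hence $W\leftrightarrow W^*$) and/or flipping the sign of $\xi$. On $\Gamma_{1,S}^{\pm}$, assumption (A1) gives $V_2(x)-V_1(x)\neq 0$, so the scalar symbol $p_2-E=\xi^2+V_2-E$ restricted to this curve equals $V_2-V_1$ and is non-vanishing. Consequently $P_2(h)-E$ is microlocally elliptic there, hence admits a microlocal parametrix $(P_2-E)^{-1}$. Writing $u={}^t(u_1,u_2)$, the system $(P(h)-E)u\sim 0$ becomes, microlocally on $\Gamma_{1,S}^{\pm}$,
\begin{equation*}
u_2\sim -h(P_2(h)-E)^{-1}W^* u_1,\qquad \bigl(P_1(h)-E-h^2 W(P_2(h)-E)^{-1}W^*\bigr)u_1\sim 0.
\end{equation*}
The reduced scalar operator has principal symbol $p_1-E$, which is of real principal type along $\Gamma_{1,S}^{\pm}$. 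Standard propagation of singularities then gives that its microlocal kernel along this bicharacteristic is one-dimensional, yielding the first assertion of the proposition.

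\textbf{WKB ansatz, eikonal and transport.} I seek $u_1\sim a_1(x;h)\,e^{\pm i\phi_1(x)/h}$ with $a_1\sim\sum_{k\geq 0}h^k a_{1,k}$. The $h^0$-term forces the eikonal equation $(\phi_1')^2=E-V_1$, whose solution with $\phi_1(0)=0$ is precisely \eqref{phase function}. The $h^1$-term yields the transport equation $2\phi_1' a_{1,0}'+\phi_1'' a_{1,0}=0$ (the contribution $h^2 W(P_2-E)^{-1}W^*$ is of order $h^2$ and enters only at higher orders), whose normalized solution is $a_{1,0}(x)=(E-V_1(x))^{-1/4}$. The coefficients $a_{1,k}$ for $k\geq 1$ are then determined iteratively by linear inhomogeneous transport equations solvable in closed form by integration along the bicharacteristic flow.

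\textbf{The second component.} From $u_2\sim -h(P_2(h)-E)^{-1}W^* u_1$, substitution of the WKB expression for $u_1$ gives $u_2= h\, a_2(x;h)\,e^{\pm i\phi_1(x)/h}$ whose coefficients are determined algebraically by successive division of the symbol of $W^*$ applied to $u_1$ by the symbol of $P_2-E$, restricted to the phase $\xi=\pm\phi_1'(x)$. At principal order, the symbol of $W^*$ evaluated at $\xi=\pm\sqrt{E-V_1(x)}$ equals $r_0(x)\mp i r_1(x)\sqrt{E-V_1(x)}$, while $p_2-E$ evaluated at the same point equals $V_2(x)-V_1(x)$. Division gives
\begin{equation*}
a_{2,0}^{\pm}(x)=\frac{r_0(x)\mp i r_1(x)\sqrt{E-V_1(x)}}{(V_1(x)-V_2(x))(E-V_1(x))^{1/4}},
\end{equation*}
as announced. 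The same procedure with $P_1$ and $P_2$ interchanged, using the principal symbol $r_0+ir_1\xi$ of $W$ at $\xi=\pm\sqrt{E-V_2(x)}$ and the ellipticity of $P_1(h)-E$ on $\Gamma_{2,S}^{\pm}$, yields the formulae for $b_{1,0}^{\pm}$ and $b_{2,0}^{\pm}$.

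\textbf{Main difficulty.} No genuine obstacle arises: this is a textbook WKB construction once the off-diagonal component has been eliminated. The only care needed is bookkeeping. Each transport equation is solved by variation of constants along the non-degenerate bicharacteristic (non-degeneracy being guaranteed away from turning points), and each algebraic inversion is licit thanks to the non-vanishing of $V_1-V_2$ away from the crossing, which is exactly assumption (A1)(iii). The signs in $a_{2,0}^{\pm}$ and $b_{1,0}^{\pm}$ must be tracked consistently according to the branch $\xi=\pm\sqrt{E-V_j(x)}$ selected on each of the eight curves.
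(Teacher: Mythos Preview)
Your proof is correct and follows essentially the same approach as the paper. The paper likewise notes that on each $\Gamma_{j,S}^{\pm}$ one of the two diagonal operators is elliptic, reducing the system to a scalar real-principal-type equation (hence the one-dimensionality), and then performs a direct WKB expansion of the $2\times 2$ system to obtain the eikonal equation, the transport equation for $a_{1,0}$ (resp.\ $b_{2,0}$), and the algebraic determination of the second component by division by $V_2-V_1$ (resp.\ $V_1-V_2$); the only cosmetic difference is that the paper carries out the expansion on the full $2\times 2$ system rather than first eliminating the second component, but the computations and conclusions are identical.
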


Now, to know the global behavior of solutions to the system $(P(h)-E)u=0$, we compute the connection formulae between these WKB solutions at the crossing and turning points.

\subsection{Transfer Matrices at crossing points} By applying the results of the previous section, more precisely Theorem \ref{TMF}, to our system $P(h)-E$ with $q_j(x,\xi)=p_j(x,\xi)-E=\xi^2+ V_j(x)-E$, $j=1,2$, and $r(x,\xi)=r_0(x)+ i r_1(x)\xi$, we obtain the following transfert matrices at the crossing points $\rho_{\pm}(E)$.

\begin{proposition}[\textbf{Transfer matrix at $\rho_-(E)$}]\label{connection1}
Let $u(x;h)\in L^2(\R)$ be a solution to $(P(h)-E)u\sim 0$ microlocally in a small neighbourhood of $\rho_-(E)$ and set
\begin{equation}
u \sim \left\{ \begin{array}{lll}  t_{1,L}^{-}f_{1,L}^{-}
\,\,\,{\rm microlocally\, on}\,\,
\Gamma_{1,L}^{-}(E) \\
 t_{1,R}^{-}f_{1,R}^{-}
\,\,\,{\rm microlocally\, on}\,\,
\Gamma_{1,R}^{-}(E) \\
t_{2,L}^{-}f_{2,L}^{-}
\,\,\,{\rm microlocally\, on}\,\,
\Gamma_{2,L}^{-}(E) \\
t_{2,R}^{-}f_{2,R}^{-}
\,\,\,{\rm microlocally\, on}\,\,
\Gamma_{2,R}^{-}(E)\\
\end{array}\right.
\end{equation}
for some constants $t_{j,S}^-=t_{j,S}^-(E;h)$. Then it holds that
\begin{equation}
\label{tau-}
\begin{pmatrix}
t_{1,R}^{-} \\\\
t_{2,L}^{-}
\end{pmatrix}
=
\mathcal{M}_-
\begin{pmatrix}
t_{1,L}^{-} \\\\
t_{2,R}^{-}
\end{pmatrix},
\end{equation}
where $\mathcal{M}_-= \mathcal{M}_-(E;h)$ is the $2\times 2$ matrix given by 
$$
\mathcal{M}_-(E;h):= \begin{pmatrix}
\kappa_{1,1}^-(E;h) & \kappa_{1,2}^-(E;h) \\\\
\kappa_{2,1}^-(E;h) & \kappa_{2,2}^-(E;h)
\end{pmatrix}
$$ 
and the coefficients $\kappa_{j,k}^-$ have the following asymptotic behaviors 
$$
\kappa_{1,1}^-(E;h)= 1+ \mathcal{O}(h) , \quad \kappa_{2,2}^-(E;h)=1+ \mathcal{O}(h),
$$
$$
\kappa_{1,2}^-(E;h) = \tau_0 h^{\frac12} + \mathcal{O}(h^{\frac32} \ln(1/h)),
$$
$$
\kappa_{2,1}^-(E;h) =  \overline{\tau_0} h^{\frac12} + \mathcal{O}(h^{\frac32} \ln(1/h)),
$$
uniformly for $E\in I_{0}$ and $h\to 0^+$, with 
\begin{equation}
\tau_0:= e^{i\frac{\pi}{4}} \sqrt{ \frac{\pi}{V_1'(0)-V_2'(0)}} (r_0(0)E^{-\frac14} - i r_1(0)E^{\frac14}).
\end{equation}

\end{proposition}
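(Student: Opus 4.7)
The plan is to deduce Proposition~\ref{connection1} from Theorem~\ref{TMF} applied to the symbols $q_j(x,\xi) = \xi^2 + V_j(x) - E$ and $r(x,\xi) = r_0(x) + ir_1(x)\xi$ at the crossing point $\rho_-(E) = (0,-\sqrt{E})$. A direct computation at $\rho_-$ yields $\partial_\xi q_j = -2\sqrt{E}$, $\partial_x q_j = V_j'(0)$ and $\{q_1,q_2\}(\rho_-) = 2\sqrt{E}(V_1'(0) - V_2'(0)) > 0$ by \textbf{(A1)}, while \eqref{Ell} gives $r(\rho_-) = r_0(0) - ir_1(0)\sqrt{E} \neq 0$, so the hypotheses of Theorem~\ref{TMF} are satisfied. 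Along $\Gamma_1$ one has $q_2 = V_2 - V_1$, with sign near $0$ controlled by $V_1'(0)>0>V_2'(0)$, and similarly on $\Gamma_2$; this provides the identification, near $\rho_-$,
\[
\Gamma_{q_1}^{+} = \Gamma_{1,L}^{-},\quad \Gamma_{q_1}^{-} = \Gamma_{1,R}^{-},\quad \Gamma_{q_2}^{+} = \Gamma_{2,R}^{-},\quad \Gamma_{q_2}^{-} = \Gamma_{2,L}^{-}.
\]
Moreover, \eqref{eikonalphase} forces $\phi_{q_j}'(0) = -\sqrt{E}$, hence $\phi_{q_j} = -\phi_j$ near $0$, matching the phase of the WKB solutions $f_{j,S}^{-}$ of Proposition~\ref{wkb-}.

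The second step is to match the two WKB bases at $\rho_-$. A direct computation of the transport integral in Proposition~\ref{GenWKB} gives $a_{q_1,0}(x) = E^{1/4}(E-V_1(x))^{-1/4}$ and $b_{q_2,0}(x) = E^{1/4}(E-V_2(x))^{-1/4}$; comparing with $a_{1,0}^{\pm}, b_{2,0}^{\pm}$ of Proposition~\ref{wkb-}, and analogously comparing the singular subleading coefficients $b_{q_1,1}, a_{q_2,1}$ with $a_{2,0}^{\pm}, b_{1,0}^{\pm}$ via the remark following Proposition~\ref{GenWKB}, shows that the two WKB bases differ at $\rho_-$ by the same constant factor $E^{-1/4}$ in both components. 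Consequently the amplitudes $t_j^{\pm}$ of Theorem~\ref{TMF} equal $E^{-1/4}$ times the corresponding $t_{j,S}^{-}$, and this common factor cancels from the linear transfer relation. Rewriting Theorem~\ref{TMF} in our variables then reads
\[
t_{1,L}^{-} = \kappa_{1,1}\, t_{1,R}^{-} + h^{\frac{1}{2} - ih\mu}\kappa_{1,2}\, t_{2,L}^{-},\qquad t_{2,R}^{-} = h^{\frac{1}{2} - ih\widehat{\mu}}\kappa_{2,1}\, t_{1,R}^{-} + \kappa_{2,2}\, t_{2,L}^{-}.
\]

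Since Proposition~\ref{connection1} is stated in the opposite direction, the final step is to invert this $2\times 2$ linear system. The diagonal entries $\kappa_{1,1}, \kappa_{2,2} = 1 + \mathcal{O}(h)$ together with off-diagonal entries of size $\mathcal{O}(h^{1/2})$ yield a determinant $1 + \mathcal{O}(h)$, so $\mathcal{M}_-$ has diagonal entries $1 + \mathcal{O}(h)$ and off-diagonal entries equal, at leading order, to minus those of the matrix above. Substituting the closed-form leading terms $\kappa_{1,2}^{0}$ and $\kappa_{2,1}^{0}$ furnished by Theorem~\ref{TMF}, together with the values of $\partial_\xi q_j$, $\{q_1,q_2\}(\rho_-)$ and $r(\rho_-)$ computed above, reproduces exactly $\tau_0$ and $\overline{\tau_0}$ as announced. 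The logarithmic remainder $\mathcal{O}(h^{3/2}\ln(1/h))$ arises from the expansion $h^{\frac{1}{2}-ih\mu} = h^{1/2}e^{-ih\mu\ln h} = h^{1/2} + \mathcal{O}(h^{3/2}\ln(1/h))$. The main obstacle is not any new analytic computation but rather the careful bookkeeping of three conventions, namely the identification of branches (with the correct sign of $\phi_{q_j}$), the common normalization factor $E^{-1/4}$ between the two WKB bases, and the inversion of the transfer direction; once these are fixed, the leading-order formulas follow by direct substitution.
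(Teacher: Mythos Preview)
Your proof is correct and follows precisely the approach indicated in the paper, which simply states that Propositions~\ref{connection1} and~\ref{connection2} are obtained ``by applying the results of the previous section, more precisely Theorem~\ref{TMF}, to our system $P(h)-E$ with $q_j(x,\xi)=\xi^2+V_j(x)-E$ and $r(x,\xi)=r_0(x)+ir_1(x)\xi$'' without spelling out the details. You have carried out exactly this specialization, correctly identifying the branches $\Gamma_{q_j}^{\pm}$ with the $\Gamma_{j,S}^{-}$, matching the WKB normalizations (the common factor $E^{1/4}$ between the bases of Proposition~\ref{GenWKB} and Proposition~\ref{wkb-} indeed cancels), inverting the $2\times2$ system to pass from the direction of Theorem~\ref{TMF} to that of Proposition~\ref{connection1}, and verifying that $-\kappa_{1,2}^{0}=\tau_0$ and $-\kappa_{2,1}^{0}=\overline{\tau_0}$ after substituting $\partial_\xi q_j(\rho_-)=-2\sqrt{E}$, $\{q_1,q_2\}(\rho_-)=2\sqrt{E}(V_1'(0)-V_2'(0))$, and $r(\rho_-)=r_0(0)-ir_1(0)\sqrt{E}$.
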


\begin{proposition}[\textbf{Transfer matrix at $\rho_+(E)$}]
\label{connection2}
Let $u(x;h)\in L^2(\R)$ be a solution to $(P(h)-E)u\sim 0$ microlocally in a small neighbourhood ${\mathcal V}_+$ of $\rho_+(E)$ and set
\begin{equation}
u \sim \left\{ \begin{array}{lll}
 t_{1,L}^{+}f_{1,L}^{+}
\,\,\,{\rm microlocally\, on}\,\,
\Gamma_{1,L}^{+}(E)\\
t_{1,R}^{+}f_{1,R}^{+}
\,\,\,{\rm microlocally\, on}\,\,
\Gamma_{1,R}^{+}(E)\\
t_{2,L}^{+}f_{2,L}^{+}
\,\,\,{\rm microlocally\, on}\,\,
\Gamma_{2,L}^{+}(E)\\
t_{2,R}^{+}f_{2,R}^{+}
\,\,\,{\rm microlocally\, on}\,\,
\Gamma_{2,R}^{+}(E),
\end{array}\right.
\end{equation}
for some constants $t_{j,S}^+=t_{jS}^+(E;h)$. Then  it holds that
\begin{equation}\label{tau+}
\begin{pmatrix}
t_{1,L}^{+} \\ \\
t_{2,R}^{+} 
\end{pmatrix} = \mathcal{M}_+\begin{pmatrix}
t_{1,R}^{+}\\\\
t_{2,L}^{+} 
\end{pmatrix}
\end{equation}
where $\mathcal{M}_+= \mathcal{M}_+(E;h)$ is the $2\times 2$ matrix given by 
$$
\mathcal{M}_+(E;h):=\begin{pmatrix}
\kappa_{1,1}^+(E;h) & \kappa_{1,2}^+(E;h) \\ \\
\kappa_{2,1}^+(E;h) & \kappa_{2,2}^+(E;h)
\end{pmatrix}
$$
and the coefficients $\kappa_{j,k}^+(E;h)$ have the following asymptotic behaviors 
$$
\kappa^+_{1,1}(E;h)=1+ \mathcal{O}(h),\quad \kappa^+_{2,2}(E;h)= 1+ \mathcal{O}(h) ,
$$
$$
\kappa^+_{1,2}(E;h)= - \overline{\tau_0} h^{\frac12} + \mathcal{O}(h^{\frac32} \ln(1/h)),
$$
$$
\kappa^+_{2,1}(E;h)=-\tau_0 h^{\frac12} + \mathcal{O}(h^{\frac32} \ln(1/h)),
$$
uniformly for $E\in I_{0}$ and $h\to 0^+$.
\end{proposition}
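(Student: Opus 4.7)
The plan is to deduce Proposition \ref{connection2} from the already-established Proposition \ref{connection1} using the antilinear symmetry of complex conjugation. Since $V_1,V_2,r_0,r_1$ are real-valued, the matrix operator $P(h)-E$ has real coefficients, so if $u\in L^2(\mathbb R)$ satisfies $(P(h)-E)u\sim 0$ microlocally in some open set $\Omega$, then $\bar{u}$ satisfies the same equation microlocally in the image $\sigma(\Omega)$ under the reflection $\sigma(x,\xi):=(x,-\xi)$ (this follows at the FBI level from $\mathcal{T}\bar{u}(x,\xi)=\overline{\mathcal{T}u(x,-\xi)}$). In particular, $\bar{u}$ is a microlocal solution near $\sigma(\rho_+(E))=\rho_-(E)$, and $\sigma$ interchanges $\Gamma_{j,S}^+$ with $\Gamma_{j,S}^-$ for each pair $(j,S)$. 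This manoeuvre is forced: a direct appeal to Theorem \ref{TMF} at $\rho_+(E)$ fails because $\{q_1,q_2\}(\rho_+(E))=2\sqrt{E}(V_2'(0)-V_1'(0))<0$, violating the standing hypothesis of the general theorem.

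Next I would check that $\overline{f_{j,S}^+}\equiv f_{j,S}^-$ modulo $\mathcal{O}(h^\infty)$ microlocally on $\Gamma_{j,S}^-$. From the explicit leading coefficients in Proposition \ref{wkb-} one already sees $\overline{a_{1,0}^+}=a_{1,0}^-$, $\overline{a_{2,0}^+}=a_{2,0}^-$, $\overline{b_{1,0}^+}=b_{1,0}^-$, $\overline{b_{2,0}^+}=b_{2,0}^-$; the phases $\pm\phi_j$ are real and the transport equations defining the full symbols $a_j^\pm,b_j^\pm$ have real coefficients, so the identification propagates to every order. Consequently, if a microlocal solution $u$ near $\rho_+(E)$ has coefficients $t_{j,S}^+$ on $\Gamma_{j,S}^+$, then $\bar{u}$ is a microlocal solution near $\rho_-(E)$ with coefficients $\overline{t_{j,S}^+}$ on the reflected curves $\Gamma_{j,S}^-$.

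Applying Proposition \ref{connection1} to $\bar{u}$ and then conjugating back gives
\begin{equation*}
\begin{pmatrix}t_{1,R}^+\\ t_{2,L}^+\end{pmatrix}
=\overline{\mathcal{M}_-(E;h)}\begin{pmatrix}t_{1,L}^+\\ t_{2,R}^+\end{pmatrix}.
\end{equation*}
Since $\mathcal{M}_-=I+\mathcal{O}(h^{1/2})$ is invertible for small $h$, inverting this relation yields \eqref{tau+} with $\mathcal{M}_+(E;h)=\overline{\mathcal{M}_-(E;h)}^{-1}$. The entrywise asymptotics then follow from the cofactor formula together with $\det\mathcal{M}_-=1-|\tau_0|^2 h+\mathcal{O}(h\cdot h^{1/2}\ln(1/h))=1+\mathcal{O}(h)$: one obtains $\kappa^+_{1,1},\kappa^+_{2,2}=1+\mathcal{O}(h)$, together with $\kappa^+_{1,2}=-\overline{\tau_0}\,h^{1/2}+\mathcal{O}(h^{3/2}\ln(1/h))$ and $\kappa^+_{2,1}=-\tau_0\, h^{1/2}+\mathcal{O}(h^{3/2}\ln(1/h))$, as claimed.

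The only delicate point in this scheme is the identification $\overline{f_{j,S}^+}\equiv f_{j,S}^-$ to all orders modulo $\mathcal{O}(h^\infty)$; it must be verified in the full microlocal sense, not merely at leading order, so that no spurious correction is introduced when transporting coefficients from $\rho_-(E)$ to $\rho_+(E)$. Once this is in place, the remaining steps amount to the elementary algebra of inverting a $2\times 2$ matrix close to the identity, with all error bounds simply inherited from Proposition \ref{connection1}.
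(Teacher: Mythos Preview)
Your approach is correct and genuinely different from the paper's. The paper obtains both Propositions \ref{connection1} and \ref{connection2} by direct application of the general transfer-matrix result, Theorem \ref{TMF}, at the respective crossing points $\rho_\pm(E)$. You are right that a literal application of Theorem \ref{TMF} at $\rho_+(E)$ is obstructed by the sign hypothesis \eqref{Cond model}: with $q_j=\xi^2+V_j-E$ one computes $\{q_1,q_2\}(\rho_+)=2\sqrt{E}\,(V_2'(0)-V_1'(0))<0$. The paper's implicit remedy is to swap the roles of $q_1$ and $q_2$ (equivalently, conjugate the system by the permutation matrix), after which $\{q_2,q_1\}(\rho_+)>0$ and Theorem \ref{TMF} applies verbatim with $r$ replaced by $\bar r$; unpacking the swap produces the stated $\kappa^+_{j,k}$.

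Your conjugation argument sidesteps this relabeling entirely. Since $P(h)-E$ has real coefficients as a matrix differential operator (indeed $W=r_0+r_1h\partial_x$ and $W^*$ both have real coefficients), complex conjugation intertwines microlocal solutions near $\rho_+(E)$ with those near $\rho_-(E)$, and Proposition \ref{connection2} follows from Proposition \ref{connection1} via $\mathcal{M}_+=\overline{\mathcal{M}_-}^{\,-1}$, with the error bounds inherited directly. The only point needing care is the one you flag: the identification $\overline{f_{j,S}^+}\equiv f_{j,S}^-$ to all orders. This holds because the transport hierarchy in Appendix \ref{MWSA} is obtained by applying a real-coefficient operator to a WKB ansatz with real phase, so conjugating the ``$+$'' hierarchy reproduces exactly the ``$-$'' hierarchy, and the normalizations $a_{1,0}^\pm=(E-V_1)^{-1/4}$, $b_{2,0}^\pm=(E-V_2)^{-1/4}$ (together with the higher-order initial data) are real, hence conjugation-invariant. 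Even if the identification were only known modulo a factor $1+\mathcal{O}(h)$, the asymptotics of $\kappa^+_{j,k}$ would be unaffected at the stated precision.
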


\subsection{Microlocal connection formulae at turning points}
It remains now to compute the connection formulae at the turning points. We define the action integrals 
$$
S_{j,L}(E):=\int_{\alpha_j(E)}^0\sqrt{E-V_j(x)}dx,\quad S_{j,R}(E):=\int_0^{\beta_j(E)}\sqrt{E-V_j(x)}dx \quad (j=1,2).
$$
\begin{proposition}\label{maslov}
Let $u$ be a solution to $(P(h)-E)u\sim 0$ microlocally near $\Gamma_{j,S}$, $j=1,2$, $S=L,R$, and suppose that
\begin{equation}
u \sim \left\{ \begin{array}{lll} 
t_{j,S}^{+}f_{j,S}^{+}
\,\,\,{\rm microlocally\, on}\,\,
\Gamma_{j,S}^{+}(E)\\
t_{j,S}^{-}f_{j,S}^{-}
\,\,\,{\rm microlocally\, on}\,\,
\Gamma_{j,S}^{-}(E),
\end{array}\right.
\end{equation}
for some constants $t_{j,S}^{\pm}=t_{j,S}^{\pm}(E;h)$. Then it holds that
\begin{equation}
\label{sigmaS}
t_{j,R}^{+}=\mathcal{T}_{j,R}\; t_{j,R}^{-}, \quad t_{j,L}^{-}=\mathcal{T}_{j,L}\; t_{j,L}^{+},
\end{equation}
with constants $\mathcal{T}_{j,S}=\mathcal{T}_{j,S}(E;h)$, which behave, as $h\to 0^+$,
$$
\mathcal{T}_{j,S}=ie^{ { -2iS_{j,S}/h}}+\ord (h), \;\;\; j=1,2,S=L,R.
$$
\end{proposition}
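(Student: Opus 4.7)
The plan is to reduce $(P(h)-E)u\sim 0$ microlocally at each turning point to a scalar semiclassical Schr\"odinger equation, apply the classical Airy--Maslov turning-point connection formula to that scalar equation, and then translate the result back into the phase normalization fixed in Proposition \ref{wkb-}. The four turning points are all treated in the same way, so I would first focus on $(\beta_1(E),0)$ on $\overline{\Gamma_{1,R}}$.

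At this point $V_2(\beta_1(E))<E$ thanks to \textbf{(A1)}(ii) and the arrangement $\alpha_1(E)<\alpha_2(E)<0<\beta_1(E)<\beta_2(E)$, so the symbol $p_2-E=\xi^2+V_2-E$ does not vanish in a neighbourhood of $(\beta_1(E),0)$. Therefore $P_2(h)-E$ is microlocally elliptic there and admits a parametrix. Writing $u={}^t(u_1,u_2)$, the second line of the system yields $u_2 \sim -h(P_2(h)-E)^{-1}W^* u_1$ microlocally, and plugging this into the first line reduces the system to the scalar equation
\begin{equation*}
\bigl(P_1(h)-E-h^2 W(P_2(h)-E)^{-1}W^*\bigr)u_1 \sim 0.
\end{equation*}
The coupling correction is an $h$-pseudodifferential operator whose full semiclassical symbol is $\mathcal{O}(h^2)$, so the reduced scalar operator has principal symbol $\xi^2+V_1(x)-E$ and the same subprincipal symbol as $P_1(h)-E$. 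In particular the first two WKB coefficients of the reduced scalar equation on $\Gamma_{1,R}^{\pm}$ agree with those produced by Proposition \ref{wkb-} for the first component of $f_{1,R}^\pm$.

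The reduced scalar equation has a non-degenerate turning point at $\beta_1(E)$ since $V_1'(\beta_1(E))>0$. The classical Airy-model analysis, or equivalently an exact WKB argument (see e.g.\ \cite{DiSj,Ma,Ya}), then gives: a microlocal solution that is $L^2$ on the exponentially forbidden side $x>\beta_1(E)$ is uniquely determined on the allowed side, up to an overall scalar factor and an $\mathcal{O}(h)$ relative error, by the combination
\begin{equation*}
e^{-i\pi/4}\exp\!\bigl(-i\textstyle\int_{\beta_1(E)}^x\sqrt{E-V_1(t)}\,dt/h\bigr)+e^{i\pi/4}\exp\!\bigl(+i\textstyle\int_{\beta_1(E)}^x\sqrt{E-V_1(t)}\,dt/h\bigr).
\end{equation*}
Writing $\widehat t^{\,\pm}$ for the coefficients in the turning-point phase normalization $\pm\int_{\beta_1(E)}^x\sqrt{E-V_1}\,dt$, this translates into $\widehat t^{\,+}=i\,\widehat t^{\,-}+\mathcal{O}(h)\widehat t^{\,-}$, the Maslov factor $i=e^{i\pi/2}$ being the standard half-integer jump at a simple caustic.

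Finally, since $\phi_1(x)=S_{1,R}(E)+\int_{\beta_1(E)}^x\sqrt{E-V_1(t)}\,dt$, the WKB solutions $f_{1,R}^{\pm}$ of Proposition \ref{wkb-} are related to the turning-point-normalized ones by the scalar factors $e^{\pm iS_{1,R}(E)/h}$, so $\widehat t^{\,\pm}=t_{1,R}^{\pm}e^{\pm iS_{1,R}(E)/h}$. Substituting into $\widehat t^{\,+}=i\widehat t^{\,-}+\mathcal{O}(h)$ yields $t_{1,R}^{+}=ie^{-2iS_{1,R}(E)/h}t_{1,R}^{-}+\mathcal{O}(h)$, i.e.\ $\mathcal{T}_{1,R}(E;h)=ie^{-2iS_{1,R}/h}+\mathcal{O}(h)$, which is the first formula in \eqref{sigmaS}. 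The three remaining turning points are handled identically: at $\alpha_2(E)$ and $\beta_2(E)$ one uses instead the ellipticity of $P_1(h)-E$ and reduces to a scalar equation in $u_2$, while at the left turning points the $L^2$ tail lies on the side $x<\alpha_j(E)$, which exchanges the roles of $t^+$ and $t^-$ and produces the formula $t_{j,L}^{-}=ie^{-2iS_{j,L}/h}t_{j,L}^{+}+\mathcal{O}(h)$. The main point to check carefully is that the coupling correction $-h^2W(P_2(h)-E)^{-1}W^*$ in the scalar reduction really does not perturb the leading and subprincipal WKB coefficients, so that the transfer at leading order is indeed given by the scalar Airy--Maslov formula; once this is verified, the rest is routine bookkeeping of phases.
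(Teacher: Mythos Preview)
Your proposal is correct and follows essentially the same route as the paper: reduce the system to a scalar Schr\"odinger equation near each turning point using the microlocal ellipticity of the other $P_k(h)-E$, observe that the coupling term $-h^2W(P_k(h)-E)^{-1}W^*$ contributes only at order $h^2$ and hence leaves the principal and subprincipal symbols untouched, and then invoke the standard connection formula at a simple turning point. The paper phrases the last step as ``Maslov's method'' (constructing a $\xi$-side WKB solution and applying $\mathcal{F}_h^{-1}$, referring to \cite[Lemma~6.1]{FMW3}) whereas you invoke the Airy model directly, but these are equivalent implementations of the same idea, and your phase bookkeeping $\phi_1(x)=S_{1,R}(E)+\int_{\beta_1(E)}^x\sqrt{E-V_1}$ leading to $\mathcal{T}_{1,R}=ie^{-2iS_{1,R}/h}+\mathcal{O}(h)$ is exactly what is needed.
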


 The proof of this result relies on the fact that microlocally near a turning point $(\gamma_j(E),0)$, $j=1,2$, $\gamma=\alpha,\beta$, the operator $P_k(h)-E$, $j\neq k\in \{1,2\}$, is elliptic. The system \eqref{System} is then reduced to a scalar equation and the idea is to solve it using Maslov's method. We refer to \cite[Lemma 6.1]{FMW3} for a detailed proof. %Notice that the analyticity assumption on the potentials $V_1,V_1$ assumed in \cite{FMW3} don't play any role in this proof so it still holds under our assumptions.

\section{Proofs of Theorems \ref{thQCS} and \ref{mainth}} 

%In this section, we prove Theorems \ref{thQCS} and \ref{mainth} using the connection formulae established in the previous section.

\subsection{Proof of Theorem \ref{thQCS}} Let $E=E(h)\in I_{0}$ be eigenvalue of $P(h)$  and $u=u(x;h)\in L^2(\mathbb R)$ a corresponding eigenfunction. Suppose that  
\begin{equation}
u \sim \left\{ \begin{array}{lll} 
t_{1,L}^-(E;h) f_{1,L}^- \quad \text{microlocally on} \;\; \Gamma_{1,L}^-(E) \\
t_{2,R}^-(E;h) f_{2,R}^- \quad \text{microlocally on} \;\; \Gamma_{2,R}^-(E),
\end{array}\right.
\end{equation}
for some complex numbers $t_{1,L}^-(E;h), t_{2,R}^-(E;h)$. Then, it follows by the connection formulae of Propositions \ref{connection1}, \ref{connection2} and \ref{maslov}, that
\begin{equation}
u \sim \left\{ \begin{array}{lll} 
t_{1,L}^+(E;h) f_{1,L}^+ \quad \text{microlocally on} \;\; \Gamma_{1,L}^+(E) \\
t_{2,R}^+(E;h) f_{2,R}^+ \quad \text{microlocally on} \;\; \Gamma_{2,R}^+(E),
\end{array}\right.
\end{equation}
with 
\begin{equation}\label{expr1}
\begin{pmatrix} t_{1,L}^+(E;h)\\ \\ t_{2,R}^+(E;h)  \end{pmatrix} = \mathcal{M}_+(E;h) \begin{pmatrix} \mathcal{T}_{1,R}(E;h)  &   0 \\\\
0 &  (\mathcal{T}_{2,L}(E;h) )^{-1} \end{pmatrix} \mathcal{M}_-(E;h) \begin{pmatrix} t_{1,L}^-(E;h)\\\\ t_{2,R}^-(E;h)  \end{pmatrix} .
\end{equation}
On the other hand, by passing through the turning points $(\alpha_1(E),0)$ and $(\beta_2(E),0)$ using the relations \eqref{sigmaS}, we obtain 
\begin{equation}\label{expr2}
\begin{pmatrix} t_{1,L}^+(E;h)\\ \\ t_{2,R}^+(E;h)  \end{pmatrix}  = \begin{pmatrix} (\mathcal{T}_{1,L}(E;h))^{-1} & 0 \\ \\ 0 & \mathcal{T}_{2,R}(E;h)\end{pmatrix}\begin{pmatrix} t_{1,L}^-(E;h)\\ \\  t_{2,R}^-(E;h)  \end{pmatrix}.
\end{equation}
Putting together \eqref{expr1} and \eqref{expr2}, we deduce the following quantization condition for the eigenvalues of $P(h)$ in $I_0$.
\begin{proposition} $E=E(h)\in I_{0}$ is an eigenvalue of $P(h)$ if and only if $\det(\Lambda(E;h)-I_2)=0$, where $\Lambda(E;h)$ is the $2\times 2$ matrix defined by  
\begin{equation*}
\Lambda(E;h) := \begin{pmatrix} \mathcal{T}_{1,L}(E;h) & 0 \\ \\ 0 & (\mathcal{T}_{2,R}(E;h))^{-1}\end{pmatrix} \mathcal{M}_+(E;h) \begin{pmatrix} \mathcal{T}_{1,R}(E;h)  &   0 \\\\
0 &  (\mathcal{T}_{2,L}(E;h))^{-1} \end{pmatrix} \mathcal{M}_-(E;h).
\end{equation*}
\end{proposition}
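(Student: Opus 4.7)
The plan is to trace the microlocal data of an $L^2$ eigenfunction around the characteristic set using the connection formulae already established, and to extract a single consistency condition. Let $u$ be an $L^2$ eigenfunction of $P(h)$ at energy $E$. Since $P(h)-E$ is microlocally elliptic off $\Gamma_1(E)\cup\Gamma_2(E)$, the frequency set $\mathrm{FS}(u)$ is contained in this union. By Proposition \ref{wkb-}, on each of the eight open arcs $\Gamma_{j,S}^{\pm}$ the space of microlocal solutions is one-dimensional, so there exist scalars $t_{j,S}^{\pm}=t_{j,S}^{\pm}(E;h)\in\mathbb{C}$ with $u\sim t_{j,S}^{\pm} f_{j,S}^{\pm}$ on $\Gamma_{j,S}^{\pm}$. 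I single out the pair $v:={}^t(t_{1,L}^-,t_{2,R}^-)$ as free parameters and treat the other six coefficients as dependent.

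Next, I compute the upper coefficients ${}^t(t_{1,L}^+, t_{2,R}^+)$ starting from $v$ along two distinct paths on the characteristic set. The \emph{inner} path first transfers through the crossing $\rho_-(E)$ via $\mathcal{M}_-$ (Proposition \ref{connection1}) to obtain ${}^t(t_{1,R}^-, t_{2,L}^-)$, then passes through the inner turning points $(\beta_1(E),0)$ and $(\alpha_2(E),0)$ via Proposition \ref{maslov} to produce ${}^t(t_{1,R}^+, t_{2,L}^+)$, and finally transfers through $\rho_+(E)$ via $\mathcal{M}_+$ (Proposition \ref{connection2}) to arrive at ${}^t(t_{1,L}^+, t_{2,R}^+)$; composing these three steps yields \eqref{expr1}. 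The \emph{outer} path instead applies Proposition \ref{maslov} directly at the outer turning points $(\alpha_1(E),0)$ and $(\beta_2(E),0)$, yielding \eqref{expr2}. Equating the two expressions for ${}^t(t_{1,L}^+, t_{2,R}^+)$ gives exactly the relation $\Lambda(E;h) v = v$.

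For an eigenvalue, $v$ must be nonzero: otherwise all eight coefficients $t_{j,S}^{\pm}$ vanish by the transfer formulae, so $\mathrm{FS}(u)=\emptyset$, forcing $u\equiv 0$. Hence $E\in I_0$ is an eigenvalue if and only if $1$ is an eigenvalue of $\Lambda(E;h)$, i.e.\ $\det(\Lambda(E;h)-I_2)=0$. The converse implication is handled by choosing a nontrivial $v\in\ker(\Lambda(E;h)-I_2)$ and promoting the resulting microlocally consistent datum to a genuine $L^2$ eigenfunction via the global solutions $w_{1,L}, w_{2,L}, w_{1,R}, w_{2,R}$ of Section \ref{sect3_0}, whose linear dependence is equivalent to the same condition. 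The main substance has already been done in Propositions \ref{connection1}, \ref{connection2} and \ref{maslov}; the principal remaining work is careful bookkeeping of which coefficients are incoming versus outgoing at each crossing, and verifying that the matrix products defining $\Lambda$ are composed in the correct order consistent with the Hamiltonian flow on the two ellipses.
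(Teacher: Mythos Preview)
Your argument is correct and follows essentially the same route as the paper: start from the microlocal coefficients $(t_{1,L}^-,t_{2,R}^-)$, compute $(t_{1,L}^+,t_{2,R}^+)$ via the inner path (crossing $\rho_-$, inner turning points, crossing $\rho_+$) to obtain \eqref{expr1}, compute the same pair via the outer turning points to obtain \eqref{expr2}, and equate. You are in fact slightly more thorough than the paper, which leaves the nontriviality of $v$ and the converse direction implicit; your reduction of the converse to the Wronskian condition of Section~\ref{sect3_0} is the natural way to close that gap.
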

Setting $\Lambda(E;h)=(\lambda_{j,k}(E;h))_{1\leq j,k\leq 2}$ and using the asymptotic formulae for the $\kappa^{\pm}_{j,k}(E;h)$'s and the $\mathcal{T}_{j,S}(E;h)$'s given in the previous section, we obtain, 
%$$
%\alpha_{1,1} = - e^{-2i \mathcal{A}_1(E)/h} - \overline{\tau}^2 e^{2i(S_{2,L}(E)-S_{1,L}(E))/h} h + \mathcal{O}(h),
%$$
%$$
%\alpha_{2,2} = - e^{2i \mathcal{A}_2(E)/h} - \tau^2 e^{2i(S_{2,R}(E)-S_{1,R}(E))/h} h + \mathcal{O}(h),
%$$
$$
\lambda_{1,1}(E;h) = - e^{-2i \mathcal{A}_1(E)/h}  + m_{1,1}(E;h),
$$
$$
\lambda_{2,2}(E;h) = - e^{2i \mathcal{A}_2(E)/h} + m_{2,2}(E;h),
$$
$$
\lambda_{1,2}(E;h) =  \lambda_{1,2}^0(E) h^{\frac12}  + m_{1,2}(E;h),
$$
$$
\lambda_{2,1}(E;h) = \lambda_{2,1}^0(E) h^{\frac12}  + m_{2,1}(E;h),
$$
with 
$$
\lambda_{1,2}^0(E) := - \left( \tau_0 e^{-2i \mathcal{A}_1(E)/h} +  \overline{\tau_0} e^{2i(S_{2,L}(E)-S_{1,L}(E))/h}    \right),
$$
$$
\lambda_{2,1}^0(E) := - \left( \overline{\tau_0} e^{2i \mathcal{A}_2(E)/h} + \tau_0 e^{2i(S_{2,R}(E)-S_{1,R}(E))/h}    \right),
$$
and $m_{j,k}(E;h)$ are analytic symbols in $I_0$ satisfying the estimates
$$
m_{j,k}(E;h) = \left\{\begin{array}{lll}
\mathcal{O}(h) \;\;\; & j=k  \\
\mathcal{O}(h^{\frac32}\ln(1/h)) \;\;\; & j\neq k,
\end{array}\right.
$$
uniformly for $E\in I_{0}$ and $h\to 0^+$. This immediately implies \eqref{precise quantification condition}, with $m_1(E;h) = - \frac{1}{2}e^{i\mathcal{A}_1(E)/h} m_{1,1}(E;h)$, $m_2(E;h) = - \frac{1}{2}e^{-i\mathcal{A}_2(E)/h} m_{2,2}(E;h)$ and 
$$
m_0(E;h):=  \frac{1}{4} e^{i(\mathcal{A}_1(E)-\mathcal{A}_2(E))/h} \lambda_{1,2}(E;h) \lambda_{2,1}(E;h).
$$ 
In particular, we have 
$$
m_0(E;h) = e^{i(\mathcal{A}_1(E)-\mathcal{A}_2(E))/h}  \mathcal{D}(E)  h + \mathcal{O}(h^{2}\ln(1/h)),
$$
uniformly for $E\in I_{0}$ and $h\to 0^+$, with $\mathcal{D}(E):= \frac{1}{4}\lambda_{1,2}^0(E) \lambda_{2,1}^0(E)$.

Now, in the symmetric case $V_1(x)=V_2(-x)$, using that $\mathcal{A}_1(E)= \mathcal{A}_2(E)=:\mathcal{A}(E)$, $S_{1,L}(E)=S_{2,R}(E)$ and $S_{2,L}(E)=S_{1,R}(E)$, we get
\begin{eqnarray*}
\mathcal{D}(E) &=& \frac{1}{4} \left\vert  \overline{\tau_0} e^{2i \mathcal{A}(E)/h} + \tau_0 e^{2i(S_{1,L}(E)-S_{1,R}(E))/h}  \right\vert^2 \\
&=& \frac{1}{4} \left\vert  \overline{\tau_0} e^{i \mathcal{B}(E)/h} + \tau_0 e^{-i \mathcal{B}(E)/h}  \right\vert^2,
\end{eqnarray*}
with $\mathcal{B}(E):= \mathcal{A}(E)- \big( S_{1,L}(E)-S_{1,R}(E) \big)$. It is then an elementary computation to get \eqref{FuncD}.

\subsection{Proof of Theorem \ref{mainth}} The quantization condition \eqref{precise quantification condition} can be rewritten as 
\begin{equation}\label{eqpro}
\cos\left( \frac{\mathcal{A}_1(E)}{h}\right) \cos\left( \frac{\mathcal{A}_2(E)}{h}\right) = \widehat{m}(E;h),
\end{equation}
where $\widehat{m}(E;h)$ is an analytic symbol with respect to $E\in I_{0}$ given by 
$$
\widehat{m}(E;h) := m_0(E;h) - m_2(E;h) \cos\left( \frac{\mathcal{A}_1(E)}{h}\right)  - m_1(E;h) \cos\left( \frac{\mathcal{A}_2(E)}{h}\right)  - (m_1m_2)(E;h).
$$
According to \eqref{smestimates}, $\widehat{m}$ satisfies the estimate $\widehat{m}(E;h)= \mathcal{O}(h)$, uniformly for $E\in I_{0}$ and $h>0$ small enough. To solve \eqref{eqpro}, we recall that the roots in $I_h$ to the equation 
$$
\cos\left( \frac{\mathcal{A}_1(E)}{h}\right) \cos\left( \frac{\mathcal{A}_2(E)}{h}\right) =0
$$ 
are given by the set of approximate eigenvalues $\mathcal{U}_h$ defined by \eqref{SetU}. Thus,  we can proceed exactly in the same way as in the proof of Theorem \ref{ThapproxRes} using Rouch\'e Theorem, by replacing $h^{\frac76}$ by $h^{\frac32}$. This leads to the existence of a bijection $b_h: \sigma\,(P(h))\cap I_{h} \rightarrow \mathcal{U}_{h}$ satisfying $b_h(E)-E=\mathcal{O}(h^\frac32)$ for $h\to 0^+$ in some subset $J\subset(0,1]$ with $0\in \overline{J}$.

It remains now to estimate the splitting of eigenvalues in the symmetric case $V_1(x)=V_2(-x)$. Set $\mathcal{A}(E):=\mathcal{A}_1(E)=\mathcal{A}_2(E)$. Then, the quantization condition \eqref{eqpro} becomes in this case 
\begin{equation}\label{qusy}
\cos^2\left( \frac{\mathcal{A}(E)}{h}\right) =  \widehat{m}(E;h),
\end{equation}
where using \eqref{coefm0}, \eqref{smestimates} and the fact that $\cos\left( \frac{\mathcal{A}(E)}{h}\right) = \mathcal{O}(h^\frac12)$, we have
$$
\widehat{m}(E;h) = \mathcal{D}(E) h + \mathcal{O}(h^\frac32),
$$
uniformly for $E\in I_{h}$ and $h>0$ small enough. By Taylor's formula, we immediately see that the solutions to the equation \eqref{qusy} in $I_{h}$ are of the form 
$$
E_{\pm}(h) = e(h) \pm \frac{\sqrt{D(e(h))}}{\mathcal{A}'(e(h))} h^\frac32 + \mathcal{O}(h^{\frac74}), \;\; e(h)\in \mathcal{U}_h,
$$
as $h\to 0^+$. This ends the proof of Theorem \ref{mainth}.

\appendix

\section{Microlocal WKB solutions}\label{MWSA}

In this appendix, we prove Proposition \ref{GenWKB} where a basis of microlocal solutions to the system \eqref{GSystem} on each curve $\Gamma_{q_j}^{\pm}$, $j=1,2$, is given. Recall that the space of microlocal solutions on each of these curves is one-dimensional (see the paragraph before the statement of the Proposition). Our construction is based on formal computations using standard pseudodifferential calculus. 
 
Le us start by computing $e^{-i \phi(x)/h} M(e^{i\phi/h} a)$, where $M$ is a pseudodifferential operator with Weyl symbol $m(x,\xi)$. We have
$$
e^{-i \phi(x)/h} M(e^{i\phi/h} a)(x) = \frac{1}{2\pi h} \int_{\mathbb R^2} e^{i(x-y)\xi/h} e^{-i(\phi(x)-\phi(y))/h} m\left(\frac{x+y}{2},\xi \right) a(y)dy d\xi.
$$
Writing $\phi(x)-\phi(y)=(x-y)\psi(x,y)$ with $\psi(x,y)=\int_0^1 \phi'(sx + (1-s)y) ds$, we have, after a change of variable $(y,\xi -\psi(x,y)) \mapsto (y,\xi)$,
\begin{align*}
e^{-i \phi(x)/h} M(e^{i\phi/h} a)(x) &= \frac{1}{2\pi h} \int_{\mathbb R^2} e^{i(x-y)\xi/h}  m\left(\frac{x+y}{2},\xi + \psi(x,y)\right) a(y)dy d\xi \\
&= \frac{1}{2\pi h} \int_{\mathbb R^2} e^{-iz\xi/h}  m\left(x+\frac{z}{2},\xi + \psi(x,x+z)\right) a(x+z)dz d\xi.
\end{align*}
The RHS is an oscillatory integral with quadratic phase $(z,\xi)\mapsto -z\xi$. Hence the stationary phase theorem (see e.g. \cite{Ma} Corollary 2.6.3) gives the asymptotic expansion 
$$
e^{-i \phi(x)/h} M(e^{i\phi/h} a)(x) \sim \sum_{k=0}^{+\infty} \frac{h^k}{i^k k!} (\partial_z \partial_{\xi})^k c(x,0,0),
$$
where $c(x,z,\xi):= m\left(x+\frac{z}{2},\xi + \psi(x,x+z)\right) a(x+z)$. In particular, using that $\psi(x,x)=\phi'(x)$ and $\partial_y\psi(x,x)=\frac12 \phi''(x)$, we obtain
\begin{equation}\label{FEQWKB}
e^{-i\phi(x)/h} M(e^{i\phi/h}a)(x) = m(x,\phi'(x)) a - ih S_{m}(x,\partial_x)a+ \mathcal{O}(h^2),
\end{equation}
where $S_{m}(x,\partial_x)$ is a first order differential operator given by 
\begin{equation}\label{opS}
S_{m}(x,\partial_x) = \partial_{\xi}m(x,\phi'(x))\partial_x + \frac12 \left( \partial_x\partial_{\xi}m(x,\phi'(x)) + \phi''(x) \partial_{\xi}^2m(x,\phi'(x)) \right).
\end{equation}
Now, we substitute \eqref{WKB form} into \eqref{GSystem}. Then, for $a_{q_1}\sim \sum h^k a_{q_1,k} $, $b_{q_1}\sim \sum h^k b_{q_1,k}$, using the above computation, we obtain,
$$
e^{-i\phi_{q_1}/h} \mathcal{Q} f_{q_1}^{\pm} \sim \begin{pmatrix}
q_1(x,\phi_{q_1}'(x)) a_{q_1} - ih S_{q_1} a_{q_1} + h r(x,\phi_{q_1}'(x)) b_{q_1} + \mathcal{O}(h^2)\\\\
 h \overline{r(x,\phi_{q_1}'(x))} a_{q_1} + q_2(x,\phi_{q_1}'(x)) b_{q_1} - ih S_{q_2} b_{q_1}+ \mathcal{O}(h^2)
\end{pmatrix}.
$$
The RHS is a power series of $h$, and in order that $f_{q_1}^{\pm}$ is a microlocal solution, each coefficient should vanish.

The coefficient of the $0$-th order is ${}^t\big( q_1(x,\phi_{q_1}'(x))a_{q_1,0}, q_2(x,\phi_{q_1}'(x))b_{q_1,0}\big)$. We are looking for a microlocal WKB solution supported on $\Gamma_{q_1}$, and the phase function $\phi_{q_1}$ should satisfy the eikonal equation \eqref{eikonalphase}. In particular, one has
$$
\phi_{q_1}'(x) = - \frac{\partial_x q_1(0,0)}{\partial_{\xi} q_1(0,0)} x + \mathcal{O}(x^2) \;\; {\rm as}\; x\to 0.
$$
Then the first entry of the above $0$-th order coefficient is $0$. In order that the second entry is $0$, $b_{q_1,0}$ should vanish since 
$$
q_2(x,\phi_{q_1}'(x)) = \frac{ \{q_1,q_2\}(0,0) }{ \partial_{\xi}q_1(0,0) } x + \mathcal{O}(x^2)
$$
does not vanish near $0$ except at $0$.

The coefficient of the $1$st order is ${}^t\big( -i S_{q_1}a_{q_1,0}, \overline{r(x,\phi_{q_1}'(x))} a_{q_1,0} + q_2(x,\phi_{q_1}'(x)) b_{q_1,1} \big)$. Hence, $a_{q_1,0}$ and $b_{q_1,1}$ should satisfy the equations
\begin{equation}\label{Sa}
S_{q_1} a_{q_1,0} = 0,
\end{equation}
\begin{equation}
b_{q_1,1} = - \frac{ \overline{r(x,\phi_{q_1}'(x))} }{q_2(x, \phi_{q_1}'(x))} f_{q_1,0}.
\end{equation}
Notice that in \eqref{opS}, the coefficient $\partial_{\xi}q_1(x,\phi'(x))= \partial_{\xi}q_1(0,0)+ \mathcal{O}(x)$ does not vanish near $x=0$. Hence $a_{q_1,0}$ is uniquely determined by \eqref{Sa} under the initial condition $a_{q_1,0}(0)=1$ and it is a non-zero analytic function near zero. More precisely, we have 
 $$
 a_{q_1,0}(x) =  \exp\left( -\int_0^x \frac{ \partial_x\partial_{\xi}q_1(t,\phi_{q_1}'(t)) + \phi_{q_1}''(t)\partial_{\xi}^2 q_1(t,\phi_{q_1}'(t))     }{ 2\partial_{\xi} q_1(t,\phi_{q_1}'(t))       } dt\right) = 1+ \mathcal{O}(x)
 $$
as $x\to 0$.

We also see from the condition \eqref{Cond model} that $b_{q_1,1}(x)$ has a pole of order one at $x=0$. In particular, 
$$
b_{q_1,1}(x) = - \frac{\partial_{\xi}q_1(0,0)\overline{r(0,0)}}{\{q_1,q_2\}(0,0)} \frac{1+ \mathcal{O}(x)}{x} \;\; {\rm as}\; x\to 0.
$$
In the same way, $a_{q_1,k}$, $k\geq 1$, is uniquely determined under $a_{q_1,k}(0)=0$ by an inhomogenious differential equation of the form $S_{q_1} a_{q_1,k}=F_k$ with $F_k$ depending on $a_{q_1,j}$, $j\leq k-1$, and $b_{q_1,j}$, $j\leq k$, and $b_{q_1,k}$ is determined algebraically from $a_{q_1,j}$, $j\leq k-1$ and $b_{q_1,j}$, $j\leq k-1$ and their derivatives. 

%The fact that on each curve $\Gamma_{q_j}^{\pm}$, $j=1,2$, the space of microlocal solutions to the system \eqref{GSystem} is one-dimensional was explained in the paragraph before the statement of Proposition \ref{GenWKB}. 

\section*{Acknowledgements}

The research of M. Assal was supported by CONICYT FONDECYT Grant No. 3180390, and the research of S. Fujii\'e was supported by the JSPS grant-in-aid No. 18K03384. S. Fujii\'e is grateful to the Faculty of Mathematics of the Pontificia Universidad Cat\'olica de Chile, where a part of this work was done, for his warm hospitality in December 2018. The authors thank Andr\'e Martinez for valuable discussions. The authors also thank Yves Colin de Verdi\`ere for his suggestions.

\end{document}